\newcommand{\Romannumber}[1]{\uppercase\expandafter{\romannumeral #1}}
\newcommand{\real}{\mathbb{R}}
\newcommand{\ud}[1]{\underline{#1}}
\newcommand{\bessel}{\mathsf{K}}
\newcommand{\symln}[1]{{\color{blue} #1}}
\DeclareMathOperator{\diag}{diag}
\DeclareMathOperator{\sgn}{sgn}
\theoremstyle{definition} \newtheorem{definition}{Definition}
\theoremstyle{remark}     
\theoremstyle{remark}     
\theoremstyle{plain}      \newtheorem{theorem}{Theorem}%[section]
\theoremstyle{plain}      
\theoremstyle{plain}      
\theoremstyle{plain}      
\theoremstyle{plain}      \newtheorem{lemma}[theorem]{Lemma}
\newcounter{parentnumber}
\begin{document}

%%%%%%%%%%%%%%%%%%%%%%%%%%%%%%%%%%%%%%%%%%%%%%%%%%%%%%%%%%%%%%%%%%%%%%%%%%%%%%%
\title{\textsf{Linear-Cost Covariance Functions for Gaussian Random Fields}}
\author{Jie Chen\thanks{MIT-IBM Watson AI Lab, IBM Research. Email: \texttt{chenjie@us.ibm.com}}
  \and Michael L. Stein\thanks{University of Chicago. Emails: \texttt{stein@galton.uchicago.edu}}}
\maketitle

%%%%%%%%%%%%%%%%%%%%%%%%%%%%%%%%%%%%%%%%%%%%%%%%%%%%%%%%%%%%%%%%%%%%%%%%%%%%%%%
\begin{abstract}
Gaussian random fields (GRF) are a fundamental stochastic model for spatiotemporal data analysis. An essential ingredient of GRF is the covariance function that characterizes the joint Gaussian distribution of the field. Commonly used covariance functions give rise to fully dense and unstructured covariance matrices, for which required calculations are notoriously expensive to carry out for large data. In this work, we propose a construction of covariance functions that result in matrices with a hierarchical structure. Empowered by matrix algorithms that scale linearly with the matrix dimension, the hierarchical structure is proved to be efficient for a variety of random field computations, including sampling, kriging, and likelihood evaluation. Specifically, with $n$ scattered sites, sampling and likelihood evaluation has an $O(n)$ cost and kriging has an $O(\log n)$ cost after preprocessing, particularly favorable for the kriging of an extremely large number of sites (e.g., predicting on more sites than observed). We demonstrate comprehensive numerical experiments to show the use of the constructed covariance functions and their appealing computation time. Numerical examples on a laptop include simulated data of size up to one million, as well as a climate data product with over two million observations.
\end{abstract}

%%%%%%%%%%%%%%%%%%%%%%%%%%%%%%%%%%%%%%%%%%%%%%%%%%%%%%%%%%%%%%%%%%%%%%%%%%%%%%%
\section*{Keywords}
Gaussian sampling; Kriging; Maximum likelihood estimation; Hierarchical matrix; Climate data

%%%%%%%%%%%%%%%%%%%%%%%%%%%%%%%%%%%%%%%%%%%%%%%%%%%%%%%%%%%%%%%%%%%%%%%%%%%%%%%
\section{Introduction}
A Gaussian random field (GRF) $Z(\bm{x}):\real^d\to\real$ is a random field where all of its finite-dimensional distributions are Gaussian. Often termed as \emph{Gaussian processes}, GRFs are widely adopted as a practical model in areas ranging from spatial statistics~\citep{Stein1999}, geology~\citep{Chiles2012}, computer experiments~\citep{Koehler1996}, uncertainty quantification~\citep{Smith2013}, to machine learning~\citep{Rasmussen2006}. Among the many reasons for its popularity, a computational advantage is that the Gaussian assumption enables many computations to be done with basic numerical linear algebra.

Although numerical linear algebra~\citep{Golub1996} is a mature discipline and decades of research efforts result in highly efficient and reliable software libraries (e.g., BLAS~\citep{Goto2008} and LAPACK~\citep{Anderson1999})\footnote{These libraries are the elementary components of commonly used software such as R, Matlab, and python.}, the computation of GRF models cannot overcome a fundamental scalability barrier. For a collection of $n$ scattered sites $\bm{x}_1$, $\bm{x}_2$, \ldots, $\bm{x}_n$, the computation typically requires $O(n^2)$ storage and $O(n^2)$ to $O(n^3)$ arithmetic operations, which easily hit the capacity of modern computers when $n$ is large. In what follows, we review the basic notation and a few computational components that underlie this challenge.

Denote by $\mu(\bm{x}):\real^d\to\real$ the mean function and $k(\bm{x},\bm{x}'):\real^d\times\real^d\to\real$ the covariance function, which is (strictly) positive definite. Let $X=\{\bm{x}_i\}_{i=1}^n$ be a set of sampling sites and let $\bm{z}=[Z(\bm{x}_1),\ldots,Z(\bm{x}_n)]^T$ (column vector) be a realization of the random field at $X$. Additionally, denote by $\bm{\mu}$ the mean vector with elements $\mu_i=\mu(\bm{x}_i)$ and by $K$ the covariance matrix with elements $K_{ij}=k(\bm{x}_i,\bm{x}_j)$.

\begin{description}
\item[Sampling] Realizing a GRF amounts to sampling the multivariate normal distribution $\mathcal{N}(\bm{\mu},K)$. To this end, one performs a matrix factorization $K=GG^T$ (e.g., Cholesky), samples a vector $\bm{y}$ from the standard normal, and computes
\begin{equation}\label{eqn:sampling}
\bm{z}=\bm{\mu}+G\bm{y}.
\end{equation}

\item[Kriging] Kriging is the estimation of $Z(\bm{x}_0)$ at a new site $\bm{x}_0$. Other terminology includes \emph{interpolation}, \emph{regression}\footnote{Regression often assumes a noise term that we omit here for simplicity. An alternative way to view the noise term is that the covariance function has a nugget.}, and \emph{prediction}. The random variable $Z(\bm{x}_0)$ conditioned on the observation $\bm{z}$ admits a normal distribution $\mathcal{N}(\mu_0,\sigma_0^2)$ with
\begin{equation}\label{eqn:kriging}
\mu_0=\mu(\bm{x}_0)+\bm{k}_0^TK^{-1}(\bm{z}-\bm{\mu})
\quad\text{and}\quad
\sigma_0^2=k(\bm{x}_0,\bm{x}_0)-\bm{k}_0^TK^{-1}\bm{k}_0,
\end{equation}
where $\bm{k}_0$ is the column vector $[k(\bm{x}_1,\bm{x}_0),k(\bm{x}_2,\bm{x}_0),\ldots,k(\bm{x}_n,\bm{x}_0)]^T$.

\item[Log-likelihood] The log-likelihood function of a Gaussian distribution $\mathcal{N}(\bm{\mu},K)$ is
\begin{equation}\label{eqn:loglik}
\mathcal{L}=-\frac{1}{2}(\bm{z}-\bm{\mu})^TK^{-1}(\bm{z}-\bm{\mu})-\frac{1}{2}\log\det K-\frac{n}{2}\log2\pi.
\end{equation}
The log-likelihood $\mathcal{L}$ is a function of $\bm{\theta}\in\real^p$ that parameterizes the mean function $\mu$ and the covariance function $k$. The evaluation of $\mathcal{L}$ is an essential ingredient in maximum likelihood estimation and Bayesian inference.
\end{description}

A common characteristic of these examples is the expensive numerical linear algebra computations: Cholesky-like factorization in~\eqref{eqn:sampling}, linear system solutions in~\eqref{eqn:kriging} and~\eqref{eqn:loglik}, and determinant computation in~\eqref{eqn:loglik}. In general, the covariance matrix $K$ is dense and thus these computations have $O(n^2)$ memory cost and $O(n^3)$ arithmetic cost. Moreover, a subtlety occurs in the kriging of more than a few sites. In dense linear algebra, a preferred approach for solving linear systems is not to form the matrix inverse explicitly; rather, one factorizes the matrix as a product of two triangular matrices with $O(n^3)$ cost, followed by triangular solves whose costs are only $O(n^2)$. Then, if one wants to krige $m=O(n)$ sites, the formulas in~\eqref{eqn:kriging}, particularly the variance calculation, have a total cost of $O(n^2m)=O(n^3)$. This cost indicates that speeding up matrix factorization alone is insufficient for kriging, because $m$ vectors $\bm{k}_0$ create another computational bottleneck.

%%%%%%%%%%%%%%%%%%%%%%%%%%%%%%%%%%%%%%%%%%%%%%%%%%%%%%%%%%%%%%%%%%%%%%%%%%%%%%%
\subsection{Existing Approaches}
Scaling up the computations for GRF models has been a topic of great interest in the statistics community for many years and has recently attracted the attention of the numerical linear algebra community. Whereas it is not the focus of this work to extensively survey the literature, we discuss a few representative approaches and their pros and cons.

A general idea for reducing the computations is to restrict oneself to covariance matrices $K$ that have an exploitable structure, e.g., sparse, low-rank, or block-diagonal. Covariance tapering~\citep{Furrer2006,Kaufman2008,Wang2011,Stein2013a} approximates a covariance function $k$ by multiplying it with another one $k_{\text{t}}$ that has a compact support. The resulting compactly supported function $kk_{\text{t}}$ potentially introduces sparsity to the matrix. However, often the appropriate support for statistical purposes is not narrow, which undermines the use of sparse linear algebra to speed up computation. Low-rank approximations~\citep{Cressie2008,Eidsvik2012} generally approximate $K$ by using a low-rank matrix plus a diagonal matrix. In many applications, such an approximation is quite limited, especially when the diagonal component of $K$ does not dominate the small-scale variation of the random field~\citep{Stein2008,Stein2014}. In machine learning under the context of kernel methods, a number of randomized low-rank approximation techniques were proposed (e.g., Nystr\"{o}m approximation~\citep{Drineas2005} and random Fourier features~\citep{Rahimi2007}). In these methods, often the rank may need to be fairly large relative to $n$ for a good approximation, particularly in high dimensions~\citep{Huang2014}. Moreover, not every low-rank approximation can krige $m=O(n)$ sites efficiently. The block-diagonal approximation casts an artificial independence assumption across blocks, which is unappealing, although this simple approach can outperform covariance tapering and low-rank methods in many circumstances \citep{Stein2008,Stein2014}.

Additionally, a number of methods have been proposed through exploiting other computationally friendly structures on the Gaussian process. Notable examples include LatticeKrig~\citep{Nychka2015}, predictive process~\citep{Finley2009}, nearest neighbor Gaussian process~\citep{Datta2016,Datta2016a}, stochastic PDE~\citep{Rue2009}, periodic embedding~\citep{Guinness2017,Guinness2019}, Metakriging~\citep{Minsker2015,Minsker2017}, Gapfill~\citep{Gerber2018}, and local approximate Gaussian process~\citep{Gramacy2015}. See the case study by~\citet{Heaton2019} and references therein for a more complete list of computational methods and empirical comparisons.

There also exists a rich literature focusing on only the parameter estimation of $\bm{\theta}$. Among them, spectral methods~\citep{Whittle1954,Guyon1982,Dahlhaus1987} deal with the data in the Fourier domain. These methods work less well for high dimensions~\citep{Stein1995} or when the data are ungridded~\citep{Fuentes2007}. Several methods focus on the approximation of the likelihood, wherein the log-determinant term~\eqref{eqn:loglik} may be approximated by using Taylor expansions~\citep{Zhang2006} or Hutchinson approximations~\citep{Aune2014,Han2017,Dong2017,Ubaru2017}. The composite-likelihood approach~\citep{Vecchia1988,Stein2004a,Caragea2007,Varin2011} partitions $X$ into subsets and expands the likelihood by using the law of successive conditioning. Then, the conditional likelihoods in the product chain are approximated by dropping the conditional dependence on faraway subsets. This approach is often competitive. Yet another approach is to solve unbiased estimating equations~\citep{Anitescu2012,Stein2013,Anitescu2017} instead of maximizing the log-likelihood $\mathcal{L}$. This approach rids the computation of the determinant term, but its effectiveness relies on fast matrix-vector multiplications~\citep{Chen2014b} and effective preconditioning of the covariance matrix~\citep{Stein2012a,Chen2013}.

Recently, a multi-resolution approach~\citep{Katzfuss2017} based on successive conditioning was proposed, wherein the covariance structure is approximated in a hierarchical manner. The remainder of the approximation at the coarse level is filled by the finer level. This approach shares quite a few characteristics with our approach, which falls under the umbrella of ``hierarchical matrices'' in numerical linear algebra. Whereas the structure of~\citet{Katzfuss2017} is obtained in a coarse-to-fine fashion, our approach derives the structure in a fine-to-coarse manner, allowing translations to a type of hierarchical matrices that admit $O(n)$ cost without $\log n$ factors. Comparison of kriging and likelihood performance can be found in Section~\ref{sec:exp.mra}.

%%%%%%%%%%%%%%%%%%%%%%%%%%%%%%%%%%%%%%%%%%%%%%%%%%%%%%%%%%%%%%%%%%%%%%%%%%%%%%%
\subsection{Proposed Approach}
In this work, we take a holistic view and propose an approach applicable to the various computational components of GRF. The idea is to construct covariance functions that render a linear storage and arithmetic cost for (at least) the computations occurring in~\eqref{eqn:sampling} to~\eqref{eqn:loglik}. Specifically, for any (strictly) positive definite function $k(\cdot,\cdot)$, which we call the ``base function,'' we propose a recipe to construct (strictly) positive definite functions $k_{\rm{h}}(\cdot,\cdot)$ as alternatives. The base function $k$ is not necessarily stationary. The subscript ``h'' standards for ``hierarchical,'' because the first step of the construction is a hierarchical partitioning of the computation domain. With the subscript ``h'', the storage of the corresponding covariance matrix $K_{\rm{h}}$, as well as the additional storage requirement incurred in matrix computations, is $O(n)$. Additionally,
\begin{enumerate}
\item the arithmetic costs of matrix construction $K_{\rm{h}}$, factorization $K_{\rm{h}}=G_{\rm{h}}G_{\rm{h}}^T$, explicit inversion $K_{\rm{h}}^{-1}$, and determinant calculation $\det(K_{\rm{h}})$ are $O(n)$;
\item for any dense vector $\bm{y}$ of matching dimension, the arithmetic costs of matrix-vector multiplications $G_{\rm{h}}\bm{y}$ and $K_{\rm{h}}^{-1}\bm{y}$ are $O(n)$; and
\item for any dense vector $\bm{w}$ of matching dimension, the arithmetic costs of the inner product $\bm{k}_{\rm{h},0}^T\bm{w}$ and the quadratic form $\bm{k}_{\rm{h},0}^TK_{\rm{h}}^{-1}\bm{k}_{\rm{h},0}$ are $O(\log n)$, provided that an $O(n)$ preprocessing is done independently of the new site $\bm{x}_0$.
\end{enumerate}
The last property indicates that the overall cost of kriging $m=O(n)$ sites and estimating the uncertainties is $O(n\log n)$, which dominates the preprocessing $O(n)$.

The essence of this computationally attractive approach is a special covariance structure that we coin ``recursively low-rank.'' Informally speaking, a matrix $A$ is recursively low-rank if it is a block-diagonal matrix plus a low-rank matrix, with such a structure re-occurring in each main diagonal block of the matrix. The ``recursive'' part mandates that the low-rank factors share the same subspace across levels. The matrix $K_{\rm{h}}$ resulting from the proposed covariance function $k_{\rm{h}}$ is a symmetric positive definite version of recursively low-rank matrices. Interesting properties of the recursively low-rank structure of $A$ include that $A^{-1}$ admits exactly the same structure, and that if $A$ is symmetric positive definite, it may be factorized as $GG^T$ where $G$ also admits the same structure, albeit not being symmetric. These are the essential properties that allow for the development of $O(n)$ algorithms throughout. Moreover, the recursively low-rank structure is carried out to the out-of-sample vector $\bm{k}_{\rm{h},0}$, which makes it possible to compute inner products $\bm{k}_{\rm{h},0}^T\bm{w}$ and quadratic forms $\bm{k}_{\rm{h},0}^TK_{\rm{h}}^{-1}\bm{k}_{\rm{h},0}$ in an $O(\log n)$ cost, asymptotically lower than $O(n)$.

This matrix structure is closely connected to the rich literature of fast kernel approximation methods in scientific computing, reflected through a similar hierarchical framework but fine distinctions in design choices. A holistic design that aims at fitting the many computational components of GRF simultaneously however narrows down the possible choices and rationalizes the one that we take. After the presentation of the technical details, we will discuss in depth the subtle distinctions with many related hierarchical matrix approaches in Section~\ref{sec:compare.hierarchical.matrix}.

Note that although the proposal is based on approximations, the constructed covariance function $k_{\rm{h}}$ is valid for any ``rank'' and the involved linear algebra algorithms compute exact quantities (under infinite precision). The properties of $k_{\rm{h}}$ can be far from those of $k$ owing to the hierarchical nature. In practice, one should fix the rank and let the data size grow, subject to computational budget. Treat $k_{\rm{h}}$ as a covariance model by itself and perform model selection, rather than increasing the rank to chase approximation quality.

%%%%%%%%%%%%%%%%%%%%%%%%%%%%%%%%%%%%%%%%%%%%%%%%%%%%%%%%%%%%%%%%%%%%%%%%%%%%%%%
\section{Recursively Low-Rank Covariance Function}
Let $k:S\times S\to\real$ be positive definite for some domain $S$; that is, for any set of points $\bm{x}_1,\ldots,\bm{x}_n\in S$ and any set of coefficients $\alpha_1,\ldots,\alpha_n\in\real$, the quadratic form $\sum_{ij}\alpha_i\alpha_jk(\bm{x}_i,\bm{x}_j)\ge0$. We say that $k$ is \emph{strictly} positive definite if the quadratic form is strictly greater than $0$ whenever the $\bm{x}$'s are distinct and not all of the $\alpha_i$'s are 0. Given any $k$ and $S$, in this section we propose a recipe for constructing functions $k_{\rm{h}}$ that are (strictly) positive definite if $k$ is so.
We note the often confusing terminology that a strictly positive definite function always yields a positive definite covariance matrix for $n$ distinct observations, whereas, for a positive definite function, this matrix is only required to be positive semi-definite.

Some notations are necessary. Let $X$ be an ordered list of points in $S$. We will use $k(X,X)$ to denote the matrix with elements $k(\bm{x},\bm{x}')$ for all pairs $\bm{x},\bm{x}'\in X$. Similarly, we use $k(X,\bm{x})$ and $k(\bm{x},X)$ to denote a column and a row vector, respectively, when one of the arguments passed to $k$ contains a singleton $\{\bm{x}\}$.
These notations apply to any function $k$ (including the constructed $k_{\rm{h}}$ and the $\psi^{(i)}$ defined later) and any domain $S$ (including subdomains of it).

The construction of $k_{\rm{h}}$ is based on a hierarchical partitioning of $S$. For simplicity, let us first consider a partitioning with only one level. Let $S$ be partitioned into disjoint subdomains $S_1,\ldots,S_t$ such that $S=S_1\cup\cdots\cup S_t$. Let $\ud{X}$ be a set of $r$ distinct points in $S$. If $k(\ud{X},\ud{X})$ is invertible, define
\begin{equation}\label{eqn:k.one.level}
k_{\rm{h}}(\bm{x},\bm{x}')=
\begin{cases}
k(\bm{x},\bm{x}'), & \text{if $\bm{x},\bm{x}'\in S_j$ for some $j$},\\
k(\bm{x},\ud{X})k(\ud{X},\ud{X})^{-1}k(\ud{X},\bm{x}'), & \text{otherwise}.
\end{cases}
\end{equation}
In words, \eqref{eqn:k.one.level} states that the covariance for a pair of sites $\bm{x},\bm{x}'$ is equal to $k(\bm{x},\bm{x}')$ if they are located in the same subdomain; otherwise, it is replaced by the Nystr\"{o}m approximation $k(\bm{x},\ud{X})k(\ud{X},\ud{X})^{-1}k(\ud{X},\bm{x}')$. The Nystr\"{o}m approximation is always no greater than $k(\bm{x},\bm{x}')$ and when $k$ is strictly positive definite, it attains $k(\bm{x},\bm{x}')$ only when either $\bm{x}$ or $\bm{x}'$ belongs to $\ud{X}$. Following convention, we call the $r$ points in $\ud{X}$ \emph{landmark points}. Throughout this work, we will reserve underscores to indicate a list of landmark points.  The term ``low-rank'' comes from the fact that a matrix generated from Nystr\"{o}m approximation generically has rank $r$ (when $n\ge r$), regardless of how large $n$ is.

The positive definiteness of $k_{\rm{h}}$ follows a simple Schur-complement split. Furthermore, we have a stronger result when $k$ is assumed to be strictly positive definite; in this case, $k_{\rm{h}}$ carries over the strictness. We summarize this property in the following theorem, whose proof is given in the appendix.

\begin{theorem}\label{thm:k.one.level}
The function $k_{\rm{h}}$ defined in~\eqref{eqn:k.one.level} is positive definite if $k$ is positive definite and $k(\ud{X},\ud{X})$ is invertible. Moreover, $k_{\rm{h}}$ is strictly positive definite if $k$ is so.
\end{theorem}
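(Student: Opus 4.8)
The plan is to prove both claims through the Schur-complement split alluded to just after the definition. Fix distinct query points $\bm{x}_1,\ldots,\bm{x}_n\in S$ with coefficients $\alpha_1,\ldots,\alpha_n$, and set $K_{\text{h}}=[k_{\text{h}}(\bm{x}_i,\bm{x}_j)]$. First I would reorder the points so that those sharing a subdomain are contiguous, writing $X=X_1\cup\cdots\cup X_t$ with $X_j\subset S_j$. With $B=k(X,\ud{X})$, $C=k(\ud{X},\ud{X})$, and block rows $B_j=k(X_j,\ud{X})$, the definition~\eqref{eqn:k.one.level} gives exactly
\[
K_{\text{h}}=BC^{-1}B^{T}+\diag(D_1,\ldots,D_t),
\qquad
D_j=k(X_j,X_j)-B_jC^{-1}B_j^{T},
\]
because the cross-subdomain blocks coincide with the Nystr\"om term $B_iC^{-1}B_j^{T}$, while the within-subdomain diagonal blocks $k(X_j,X_j)$ are recovered by adding back $D_j$. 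The crucial observation is that $D_j$ is precisely the Schur complement of the block $C$ in the principal submatrix $k(\ud{X}\cup X_j,\ud{X}\cup X_j)$ of the Gram matrix of $k$.

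For mere positive definiteness I would argue directly from this split. Since $k$ is positive definite, $C$ is positive semidefinite, and being invertible it is in fact positive definite, so $BC^{-1}B^{T}\succeq0$. Likewise each $k(\ud{X}\cup X_j,\ud{X}\cup X_j)\succeq0$, and the Schur complement of a positive-definite principal block in a positive-semidefinite matrix is positive semidefinite; hence $D_j\succeq0$ and $\diag(D_1,\ldots,D_t)\succeq0$. Summing the two pieces yields $K_{\text{h}}\succeq0$, which is the first assertion.

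For strictness the cleanest route is to read the two summands through the feature map of $k$: let $\phi(\bm{x})=k(\cdot,\bm{x})$, let $V=\spn\{\phi(\ud{\bm{x}}):\ud{\bm{x}}\in\ud{X}\}$ (an $r$-dimensional space because $C$ is invertible), and let $P$ be the orthogonal projection onto $V$. A short computation identifies $BC^{-1}B^{T}$ as the Gram matrix of $\{P\phi(\bm{x}_i)\}$ and each $D_j$ as the Gram matrix of the residuals $\{(I-P)\phi(\bm{x}_i):\bm{x}_i\in X_j\}$, so that
\[
\bm{\alpha}^{T}K_{\text{h}}\bm{\alpha}
=\Big\|P\textstyle\sum_i\alpha_i\phi(\bm{x}_i)\Big\|^{2}
+\sum_j\Big\|(I-P)\textstyle\sum_{\bm{x}_i\in X_j}\alpha_i\phi(\bm{x}_i)\Big\|^{2}.
\]
Both terms are nonnegative, so the form vanishes iff each vanishes. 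Strict positive definiteness of $k$ means features at distinct points are linearly independent. Vanishing of the residual terms forces $\sum_{\bm{x}_i\in X_j}\alpha_i\phi(\bm{x}_i)\in V$ for every $j$; since the \emph{non-landmark} points of $X_j$ together with $\ud{X}$ are distinct, linear independence kills $\alpha_i$ for every query point not coinciding with a landmark. Substituting these zeros, the first term reduces to a combination lying already in $V$, so its vanishing forces the remaining (landmark-coinciding) features to have zero coefficients as well, and distinctness of those points gives $\bm{\alpha}=0$.

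I expect the main obstacle to be this strictness bookkeeping rather than the semidefinite step. The delicate case is a query point that coincides with a landmark: its residual $(I-P)\phi(\bm{x}_i)$ is zero, so the second term carries no information about its coefficient, and it is precisely the interplay of the two vanishing conditions---one on $V$, one on $V^{\perp}$---that rules out a nontrivial null vector. A reader preferring to remain at the matrix level can replace the feature-map step by the standard block-factorization identity for Schur complements, but the linear-independence accounting for landmark-coinciding points is the part that genuinely uses the strict hypothesis and deserves care.
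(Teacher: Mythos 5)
Your proof is correct and takes essentially the same route as the paper: the identical split of $K_{\text{h}}$ into the global Nystr\"{o}m part plus blockwise Schur complements for positive semidefiniteness, followed by the same two-stage strictness argument (vanishing of the Schur-complement/residual part forces the coefficients of all non-landmark points to zero, after which vanishing of the Nystr\"{o}m part forces the landmark-coinciding coefficients to zero). The only difference is one of language: you carry out the strictness bookkeeping with RKHS feature maps and linear independence of $\phi$ at distinct points, whereas the paper proves the equivalent matrix fact as a standalone lemma, namely that $k(X,X)-k(X,\ud{X})k(\ud{X},\ud{X})^{-1}k(\ud{X},X)$ is positive definite whenever $X\cap\ud{X}=\emptyset$ and $k$ is strictly positive definite.
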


We now proceed to hierarchical partitioning. Such a partitioning of the domain $S$ may be represented by a partitioning tree $T$. We name the tree nodes by using lower case letters such as $j$ and let the subdomain it corresponds to be $S_j$. The root is always $j=1$ and hence $S\equiv S_1$. We write $\text{Ch}(j)$ to denote the set of all child nodes of $j$. Equivalently, this means that a (sub)domain $S_j$ is partitioned into disjoint subdomains $S_l$ for all $l\in\text{Ch}(j)$. An example is illustrated in Figure~\ref{fig:tree}, where $S_1=S_2\cup S_3\cup S_4$, $S_2=S_5\cup S_6\cup S_7$, and $S_4=S_8\cup S_9$.

\begin{figure}[ht]
\centering
\includegraphics[width=.48\linewidth]{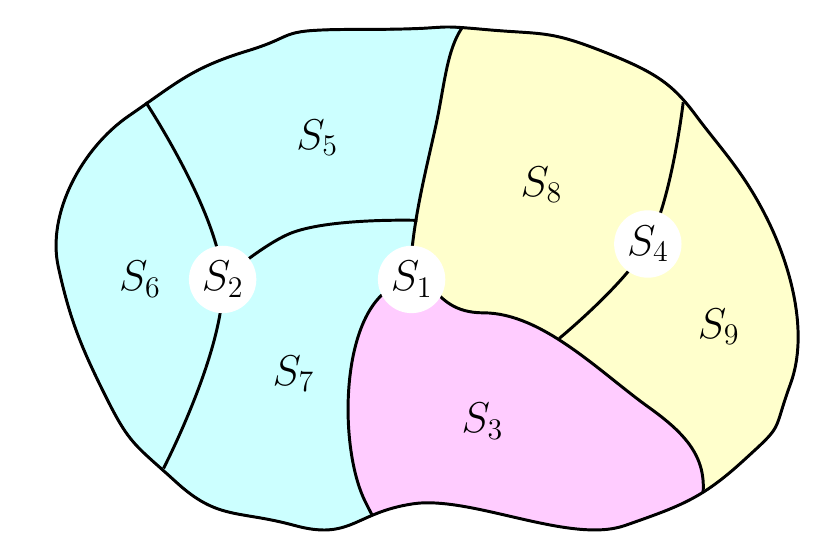}
\includegraphics[width=.48\linewidth]{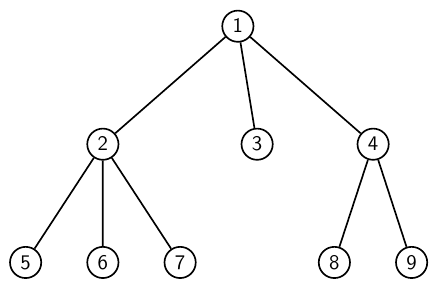}
\caption{Domain $S$ and partitioning tree $T$.}
\label{fig:tree}
\end{figure}

We now define a covariance function $k_{\rm{h}}$ based on hierarchical partitioning. For each nonleaf node $i$, let $\ud{X}_i$ be a set of $r$ landmark points in $S_i$ and assume that $k(\ud{X}_i,\ud{X}_i)$ is invertible. The main idea is to cascade the definition of covariance to those of the child subdomains. Thus, we recursively define a function $k_{\rm{h}}^{(i)}:S_i\times S_i\to\real$ such that if $\bm{x}$ and $\bm{x}'$ belong to the same child subdomain $S_j$ of $S_i$, then $k_{\rm{h}}^{(i)}(\bm{x},\bm{x}')=k_{\rm{h}}^{(j)}(\bm{x},\bm{x}')$; otherwise, $k_{\rm{h}}^{(i)}(\bm{x},\bm{x}')$ resembles a Nystr\"{o}m approximation. Formally, our covariance function
\begin{equation}\label{eqn:k.multilevel.part1}
k_{\rm{h}}\equiv k_{\rm{h}}^{(1)},
\end{equation}
where for any tree node $i$,
\begin{equation}\label{eqn:k.multilevel.part2}
k_{\rm{h}}^{(i)}(\bm{x},\bm{x}')=
\begin{cases}
k(\bm{x},\bm{x}'), & \text{if $i$ is leaf},\\
k_{\rm{h}}^{(j)}(\bm{x},\bm{x}'), & \text{if $\bm{x},\bm{x}'\in S_j$ for some $j\in\text{Ch}(i)$},\\
\psi^{(i)}(\bm{x},\ud{X}_i)k(\ud{X}_i,\ud{X}_i)^{-1}\psi^{(i)}(\ud{X}_i,\bm{x}'), & \text{otherwise}.
\end{cases}
\end{equation}
The auxiliary function $\psi^{(i)}(\bm{x},\ud{X}_i)$ cannot be the same as $k(\bm{x},\ud{X}_i)$, because positive definiteness will be lost. Instead, we make the following recursive definition when $\bm{x}\in S_i$:
\begin{equation}\label{eqn:k.multilevel.part3}
\psi^{(i)}(\bm{x},\ud{X}_i)=
\begin{cases}
k(\bm{x},\ud{X}_i), & \text{if $\bm{x}\in S_j$ for some $j\in\text{Ch}(i)$ and $j$ is leaf},\\
\psi^{(j)}(\bm{x},\ud{X}_j)k(\ud{X}_j,\ud{X}_j)^{-1}k(\ud{X}_j,\ud{X}_i), & \text{if $\bm{x}\in S_j$ for some $j\in\text{Ch}(i)$ but $j$ is not leaf}.
\end{cases}
\end{equation}

To understand the definition, we expand the recursive formulas~\eqref{eqn:k.multilevel.part1}--\eqref{eqn:k.multilevel.part3} for a pair of points $\bm{x}\in S_j$ and $\bm{x}'\in S_l$, where $j$ and $l$ are two leaf nodes. If $j=l$, it is trivial that $k_{\rm{h}}(\bm{x},\bm{x}')=k(\bm{x},\bm{x}')$. Otherwise, they have a unique least common ancestor $p$. Then,
\begin{multline}\label{eqn:expand}
k_{\rm{h}}(\bm{x},\bm{x}')=k_{\rm{h}}^{(p)}(\bm{x},\bm{x}')\\
=\underbrace{k(\bm{x},\ud{X}_{j_1})k(\ud{X}_{j_1},\ud{X}_{j_1})^{-1}k(\ud{X}_{j_1},\ud{X}_{j_2})\cdots k(\ud{X}_{j_s},\ud{X}_{j_s})^{-1}k(\ud{X}_{j_s},\ud{X}_{p})}_{\psi^{(p)}(\bm{x},\ud{X}_p)}k(\ud{X}_{p},\ud{X}_{p})^{-1}\\
\cdot\underbrace{k(\ud{X}_{p},\ud{X}_{l_t})k(\ud{X}_{l_t},\ud{X}_{l_t})^{-1}\cdots k(\ud{X}_{l_2},\ud{X}_{l_1})k(\ud{X}_{l_1},\ud{X}_{l_1})^{-1}k(\ud{X}_{l_1},\bm{x}')}_{\psi^{(p)}(\ud{X}_p,\bm{x}')},
\end{multline}
where $(j,j_1,j_2,\ldots,j_s,p)$ is the path in the tree connecting $j$ and $p$ and similarly $(l,l_1,l_2,\ldots,l_t,p)$ is the path connecting $l$ and $p$. The vectors $\psi^{(p)}(\bm{x},\ud{X}_p)$ and $\psi^{(p)}(\ud{X}_p,\bm{x}')$ on the two sides of $k(\ud{X}_{p},\ud{X}_{p})^{-1}$ come from recursively applying~\eqref{eqn:k.multilevel.part3}.

The definition~\eqref{eqn:k.multilevel.part1}--\eqref{eqn:k.multilevel.part3} admits a covariance decomposition that progressively includes cross-covariances for larger and larger subdomains up the tree. Let us define a function $\xi^{(i)}:S\times S\to\real$ for each node $i$, which has a support on only $S_i\times S_i$; that is, $\xi^{(i)}(\bm{x},\bm{x}') = 0$ if either $\bm{x}$ or $\bm{x}'\notin S_i$. When both $\bm{x}$ and $\bm{x}'$ belong to $S_i$,
\begin{equation}\label{eqn:xi}
\xi^{(i)}(\bm{x},\bm{x}') =
\begin{cases}
k(\bm{x},\bm{x}')-k(\bm{x},\ud{X}_{p})k(\ud{X}_{p},\ud{X}_{p})^{-1}k(\ud{X}_{p},\bm{x}'), & \text{if $i$ is leaf},\\
\psi^{(i)}(\bm{x},\ud{X}_i)k(\ud{X}_i,\ud{X}_i)^{-1}
\Delta
k(\ud{X}_i,\ud{X}_i)^{-1}\psi^{(i)}(\ud{X}_i,\bm{x}'), & \text{if $i$ is neither leaf nor root},\\
\psi^{(i)}(\bm{x},\ud{X}_i)k(\ud{X}_i,\ud{X}_i)^{-1}\psi^{(i)}(\ud{X}_i,\bm{x}'), & \text{if $i$ is root},
\end{cases}
\end{equation}
where $\Delta = k(\ud{X}_i,\ud{X}_i)-k(\ud{X}_i,\ud{X}_{p})k(\ud{X}_{p},\ud{X}_{p})^{-1}k(\ud{X}_{p},\ud{X}_i)$ and $p$ denotes the parent of $i$. Through telescoping, one sees that $k_{\rm{h}}$ is the sum of $\xi^{(i)}$ for all nodes $i$ in the tree: $k(\bm{x},\bm{x}')=\sum_{i \in T}\xi^{(i)}(\bm{x},\bm{x}')$. Intuitively, at a leaf node $i$, $\xi^{(i)}$ is the covariance of the posterior Gaussian conditioned on the landmark set $\ud{X}_p$. Moving up one level, $\xi^{(p)}$ defines not only the cross-covariance between subdomains of $S_p$, but also modifies the covariance inside each subdomain, say $i$, into $k(\bm{x},\bm{x}')-\psi^{(q)}(\bm{x},\ud{X}_{q})k(\ud{X}_{q},\ud{X}_{q})^{-1}\psi^{(q)}(\ud{X}_{q},\bm{x}')$, where $q$ is the parent of $p$, when $\xi^{(p)}$ is added to $\xi^{(i)}$. Iteratively adding the $\xi$'s from leaf to root, we have all the cross-covariances defined and subsequently modified, as well as the covariance inside each leaf node modified to $k(\bm{x},\bm{x}')$.

Similar to Theorem~\ref{thm:k.one.level}, the positive definiteness of $k$ follows from recursive Schur-complement splits across the hierarchy tree. Furthermore, we have that $k_{\rm{h}}$ is strictly positive definite if $k$ is so. We summarize the overall result in the following theorem, whose proof is given in the appendix.

\begin{theorem}\label{thm:k.multilevel}
The function $k_{\rm{h}}$ defined in~\eqref{eqn:k.multilevel.part1}--\eqref{eqn:k.multilevel.part3} is positive definite if $k$ is positive definite and $k(\ud{X}_i,\ud{X}_i)$ is invertible for all nonleaf nodes $i$. Moreover, $k_{\rm{h}}$ is strictly positive definite if $k$ is so.
\end{theorem}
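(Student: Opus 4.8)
The plan is to prove both statements by induction over the partitioning tree $T$, working from the leaves upward, after reducing to a matrix statement: fix a finite ordered list $X\subseteq S$ of sites, write $X_i=X\cap S_i$ for any node $i$, and show that the Gram matrix $k_{\text{h}}^{(i)}(X_i,X_i)$ is positive semidefinite (PSD) for every $i$, the case $i=1$ being the theorem (the degenerate tree consisting of a single leaf gives $k_{\text{h}}=k$ and is trivial). The naive induction---``$k_{\text{h}}^{(j)}(X_j,X_j)\succeq 0$ for every child $j$ implies $k_{\text{h}}^{(i)}(X_i,X_i)\succeq 0$''---does not close, because by \eqref{eqn:k.multilevel.part2} the off-diagonal blocks of $k_{\text{h}}^{(i)}(X_i,X_i)$ are built from the propagated vectors $\psi^{(i)}(\cdot,\ud{X}_i)$ of \eqref{eqn:k.multilevel.part3} rather than from $k$, and these couple a child to the parent's landmark set $\ud{X}_i$. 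I would therefore strengthen the inductive hypothesis to the augmented matrix
\[
N_i(X_i)=\begin{bmatrix} k(\ud{X}_i,\ud{X}_i) & \psi^{(i)}(\ud{X}_i,X_i)\\ \psi^{(i)}(X_i,\ud{X}_i) & k_{\text{h}}^{(i)}(X_i,X_i)\end{bmatrix},
\]
claiming $N_i(X_i)\succeq 0$ for every nonleaf $i$; since $k_{\text{h}}^{(i)}(X_i,X_i)$ is its trailing principal block, this is stronger than what we need.

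For the inductive step I would run two nested Schur complements. Partition $X_i=\bigcup_{j\in\text{Ch}(i)}X_j$ and set $\Psi_j=\psi^{(i)}(X_j,\ud{X}_i)$; by \eqref{eqn:k.multilevel.part2} the diagonal $X_j$ block of $k_{\text{h}}^{(i)}(X_i,X_i)$ is $k_{\text{h}}^{(j)}(X_j,X_j)$ and the off-diagonal $(j,l)$ block is $\Psi_j k(\ud{X}_i,\ud{X}_i)^{-1}\Psi_l^{T}$. A direct computation then shows the Schur complement of the block $k(\ud{X}_i,\ud{X}_i)$ (which is PSD as a $k$-Gram matrix and invertible by hypothesis, hence positive definite) in $N_i(X_i)$ is exactly block-diagonal, its $j$th block being the Nystr\"om residual $D_j=k_{\text{h}}^{(j)}(X_j,X_j)-\Psi_j k(\ud{X}_i,\ud{X}_i)^{-1}\Psi_j^{T}$; hence $N_i(X_i)\succeq 0$ reduces to $D_j\succeq 0$ for each child, equivalently to $\tilde N_j\succeq 0$, where $\tilde N_j$ is the two-by-two block matrix obtained from $N_i(X_i)$ by keeping only the landmark block and the $X_j$ block. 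For a leaf child $j$, \eqref{eqn:k.multilevel.part3} gives $\Psi_j=k(X_j,\ud{X}_i)$ and $k_{\text{h}}^{(j)}(X_j,X_j)=k(X_j,X_j)$, so $\tilde N_j=k(\ud{X}_i\cup X_j,\ud{X}_i\cup X_j)$ is a principal submatrix of a genuine $k$-Gram matrix and is PSD. For a nonleaf child, \eqref{eqn:k.multilevel.part3} gives $\Psi_j=\psi^{(j)}(X_j,\ud{X}_j)\,P$ with $P=k(\ud{X}_j,\ud{X}_j)^{-1}k(\ud{X}_j,\ud{X}_i)$, and I would verify the key identity $\tilde N_j=T^{T}N_j(X_j)\,T+\diag(S,0)$, where $T=\diag(P,I)$ and $S=k(\ud{X}_i,\ud{X}_i)-k(\ud{X}_i,\ud{X}_j)k(\ud{X}_j,\ud{X}_j)^{-1}k(\ud{X}_j,\ud{X}_i)$ is the Schur complement of $k(\ud{X}_j,\ud{X}_j)$ in the $k$-Gram matrix of $\ud{X}_i\cup\ud{X}_j$. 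The first summand is PSD by the inductive hypothesis $N_j(X_j)\succeq 0$ (congruence preserves PSD) and the second is PSD as a Schur complement of a PSD matrix; hence $\tilde N_j\succeq 0$, closing the induction, and $i=1$ yields $k_{\text{h}}(X,X)\succeq 0$.

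For the strict statement I would upgrade every ``$\succeq$'' to ``$\succ$''. With $k$ strictly positive definite and the landmark points distinct, each $k(\ud{X}_i,\ud{X}_i)$ and each residual $S$ is positive definite, and the leaf-child matrices $k(\ud{X}_i\cup X_j,\ud{X}_i\cup X_j)$ are positive definite provided their points are distinct. The nonleaf step becomes a quadratic-form argument on $\tilde N_j=T^{T}N_j(X_j)T+\diag(S,0)$: writing $w=(w_1,w_2)$, the identity $w^{T}\tilde N_j w=(Tw)^{T}N_j(X_j)(Tw)+w_1^{T}Sw_1$ is a sum of two nonnegative terms, the second vanishing only for $w_1=0$ and, under $N_j(X_j)\succ 0$, the first only for $Tw=0$; together these force $w=0$, so $\tilde N_j\succ 0$ and hence $N_i(X_i)\succ 0$. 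The delicate point---the main obstacle---is that a data site may coincide with a landmark point, which makes $N_i(X_i)$ singular and breaks the bare induction; I would handle this by recalling (as noted after \eqref{eqn:k.one.level}) that the Nystr\"om approximation is \emph{exact} whenever an argument lies in the landmark set, so a coincident site contributes a genuine, rather than approximated, row and column of the underlying $k$-Gram matrix. Formally one first assumes the data are distinct from all landmark sets $\{\ud{X}_i\}$---so that the strengthened hypothesis $N_i(X_i)\succ 0$ propagates cleanly---and then recovers the general case from this exactness of $\psi^{(i)}$ at landmarks. Setting $i=1$ gives $k_{\text{h}}(X,X)\succ 0$ for distinct $X$, which is strict positive definiteness of $k_{\text{h}}$.
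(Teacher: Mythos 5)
Your argument for the first claim (plain positive definiteness) is correct and takes a genuinely different route from the paper. The paper avoids induction altogether: it writes $k_{\text{h}}=\sum_i\xi^{(i)}$ as a telescoping sum over tree nodes, where each $\xi^{(i)}$ is a Schur-complement residual supported on $S_i\times S_i$ (Nystr\"om-like at the root), and concludes positive definiteness termwise. Your bottom-up induction on the augmented Gram matrices $N_i(X_i)$ is a valid alternative: the key identity $\tilde N_j=T^{T}N_j(X_j)T+\diag(S,0)$ checks out, since $P^{T}k(\ud{X}_j,\ud{X}_j)P=k(\ud{X}_i,\ud{X}_j)k(\ud{X}_j,\ud{X}_j)^{-1}k(\ud{X}_j,\ud{X}_i)$ and $\psi^{(j)}(X_j,\ud{X}_j)P=\psi^{(i)}(X_j,\ud{X}_i)$ by~\eqref{eqn:k.multilevel.part3}; your $\diag(S,0)$ term is in essence the paper's interior $\xi^{(j)}$ written in congruence form. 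Moreover, the semidefinite statement survives coincidences between data sites and landmarks, because Gram matrices of positive definite functions remain positive semidefinite under repeated points. So this half is fine, and arguably more self-contained than the paper's, at the price of a heavier inductive hypothesis.

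The strict half has a genuine gap, and it sits exactly where the paper has to work hardest. Your plan is: (a) prove $N_i(X_i)\succ0$ by the same induction assuming the data avoid all landmark sets; then (b) ``recover the general case from the exactness of $\psi^{(i)}$ at landmarks.'' Step (b) is not an argument, and it cannot be a routine one: when data sites coincide with landmarks your strengthened hypothesis is simply \emph{false}---for example, with one level and $X=\ud{X}$, the matrix $k_{\text{h}}(X,X)=k(\ud{X},\ud{X})$ is positive definite, yet $N_1(X)$ contains duplicated rows and is singular---so the induction that carries your whole proof is unavailable precisely in the case that needs to be recovered, and no substitute argument is given. Worse, the exactness you invoke is weaker than what the multilevel setting requires: by the paper's Lemma~\ref{lem:psik}, $\psi^{(i)}(\bm{x},\ud{X}_i)=k(\bm{x},\ud{X}_i)$ only if $\bm{x}$ lies in the intersection of \emph{all} landmark sets along the path from its leaf up to $i$. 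A site that coincides with its parent's landmarks but not its grandparent's contributes a genuine Gram row for one level only, and its positivity must be extracted from the Schur complement at the first level where its landmark status fails; organizing this bookkeeping is exactly what the paper's induction over the sets $Q_p$ of persistent-landmark intersections accomplishes, and nothing in your sketch substitutes for it. A secondary flaw in the same half: you assert each residual $S=k(\ud{X}_i,\ud{X}_i)-k(\ud{X}_i,\ud{X}_j)k(\ud{X}_j,\ud{X}_j)^{-1}k(\ud{X}_j,\ud{X}_i)$ is positive definite, which presumes $\ud{X}_i\cap\ud{X}_j=\emptyset$; the theorem does not assume landmark sets at different levels are disjoint (if $\bm{u}\in\ud{X}_i\cap\ud{X}_j$, the row of $S$ indexed by $\bm{u}$ vanishes). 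This one is repairable inside your framework: if $w_1^{T}Sw_1=0$ and $Pw_1=0$, then $k(\ud{X}_j,\ud{X}_i)w_1=0$, hence $Sw_1=k(\ud{X}_i,\ud{X}_i)w_1=0$, and invertibility of $k(\ud{X}_i,\ud{X}_i)$ forces $w_1=0$. The coincidence of data sites with landmark points, however, remains a real hole in the strict claim.
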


\begin{figure}[ht]
\centering
\subfigure[$k_{\rm{h}}$, $r=8$]{
  \includegraphics[width=.23\linewidth]{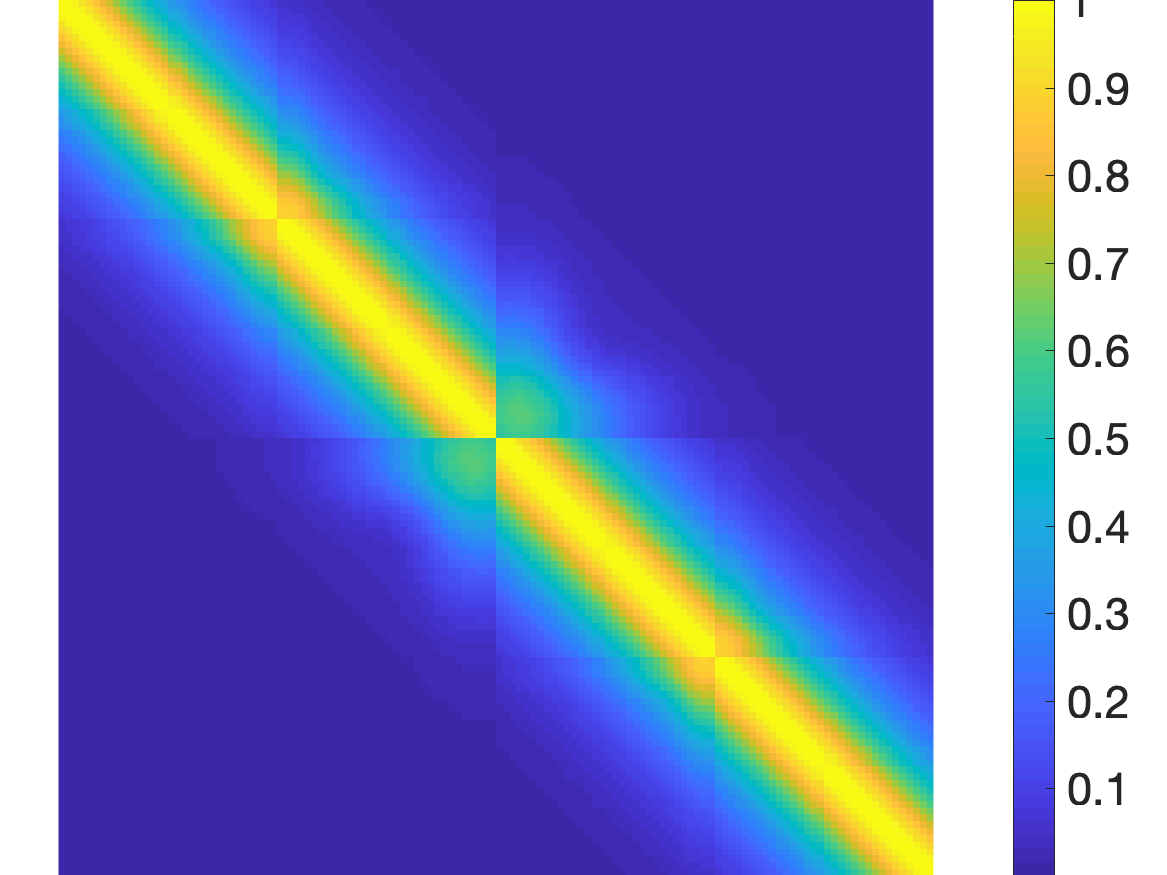}}
\subfigure[$k_{\rm{h}}$, $r=16$]{
  \includegraphics[width=.23\linewidth]{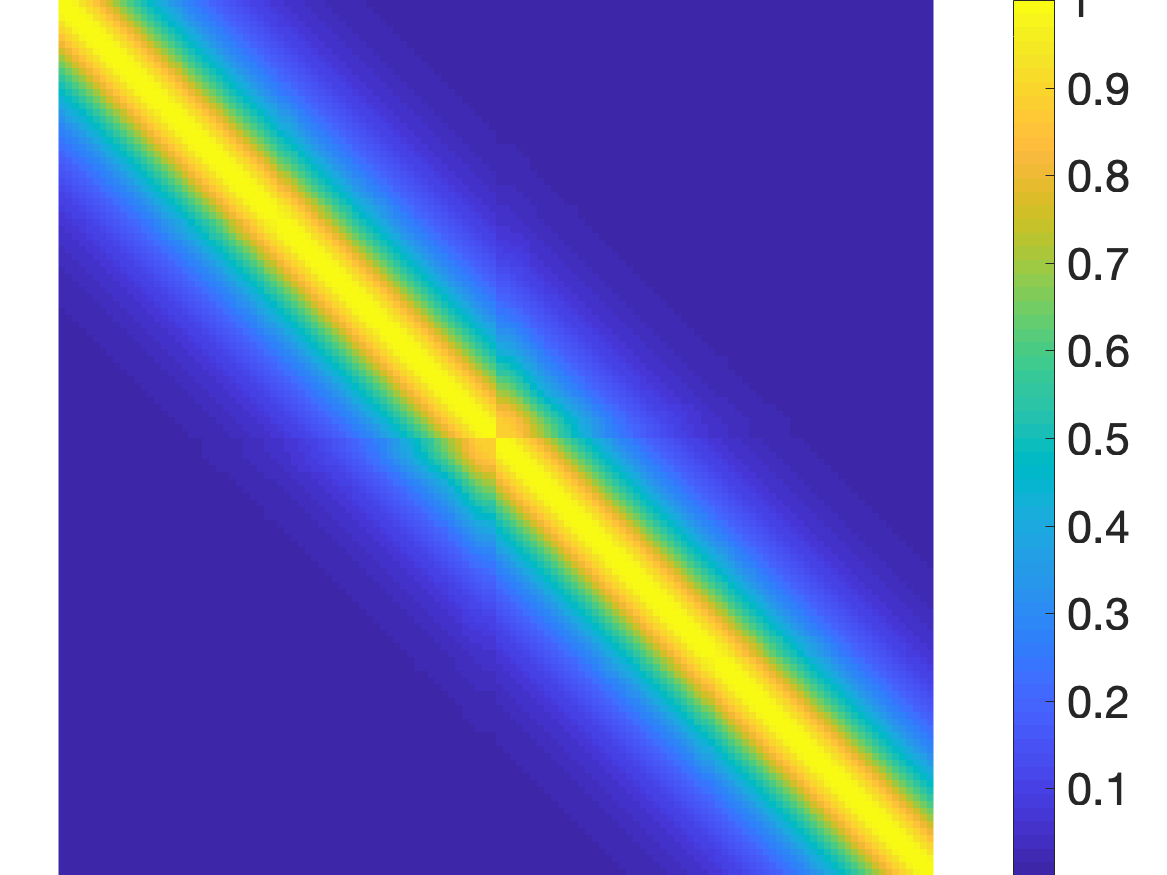}}
\subfigure[$k_{\rm{h}}$, $r=32$]{
  \includegraphics[width=.23\linewidth]{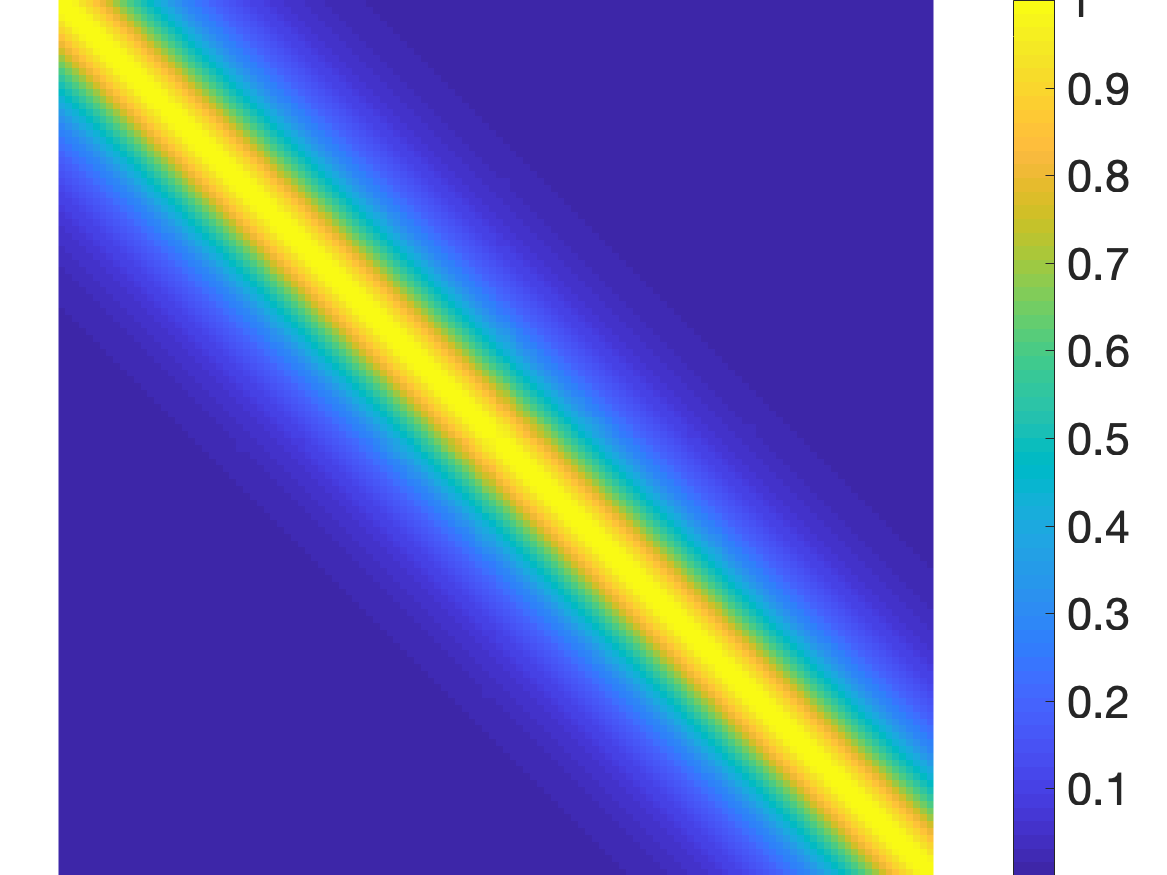}}
\subfigure[$k$]{
  \includegraphics[width=.23\linewidth]{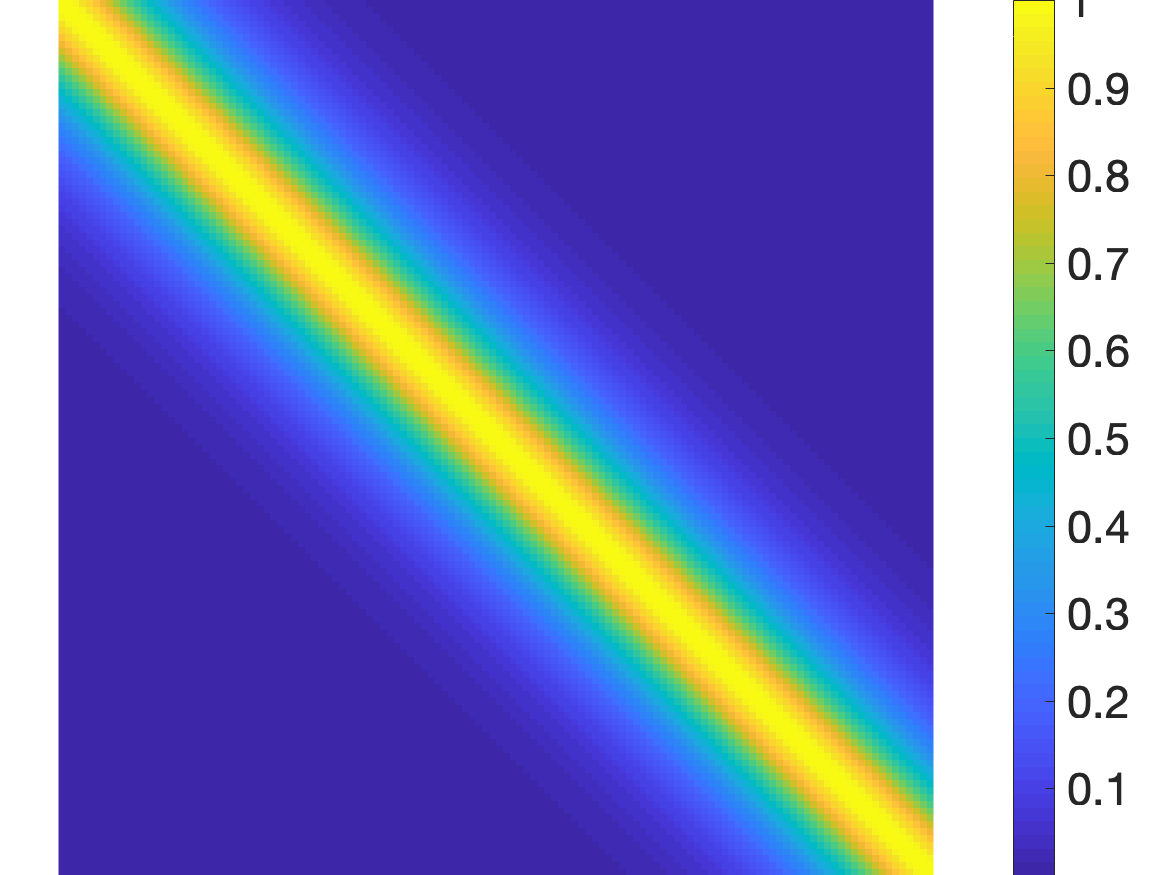}}
\caption{An example Mat\'{e}rn covariance function $k(\cdot,\cdot)$ and the constructed $k_{\rm{h}}(\cdot,\cdot)$'s in $[0,1]\times[0,1]$.}
\label{fig:kernel}
\end{figure}

In Figure~\ref{fig:kernel}, we show an example covariance function $k$ on $\real^1\times\real^1$ together with the constructed $k_{\rm{h}}$'s with different number of landmark points, $r$. The base $k$ is the Mat\'{e}rn covariance function (see~\eqref{eqn:matern} for definition) with sill $1.0$, range $0.2$, smoothness $1.5$, and nugget $0$. The considered domain $S=[0,1]$ is partitioned into equal halves recursively three times, resulting in eight leaf subdomains. Although $k$ is stationary, $k_{\rm{h}}$ is not and thus the visualization does not show a diagonally constant pattern.

The visual cues offered by plot (a) reveal a recursive blocking structure of $k_{\rm{h}}$, whereby the main diagonal blocks hold covariances inside the same subdomain and off-diagonal blocks across subdomains. For a pair of points in the same leaf subdomain, their covariance retains the value $k$. When they belong to different subdomains, low-rank approximation takes effects; the higher level in the hierarchy tree, the more aggressive is the approximation (see~\eqref{eqn:expand}). Naturally, when $r$ is small, the aggressive approximation renders a noticeable departure from the value $k$, as evident in the off-diagonal blocks toward the center of the plot. As $r$ increases, such a discrepancy is mitigated. When $r=32$, one sees barely any difference between plots (c) and (d).

It is important to note that the approximation does not depend on the number of sites, $n$. More importantly, $k_{\rm{h}}$ is a valid covariance function for any positive integer $r$. Rather than interpreting $k_{\rm{h}}$ as an approximation of $k$, one can treat $k_{\rm{h}}$ as a new covariance model and select models through comparing likelihoods. Given a fixed $r$, $k_{\rm{h}}$ can be applied to arbitrary data size $n$. The appealing $O(n)$ computational cost elucidated subsequently allows for efficient likelihood comparison.

%%%%%%%%%%%%%%%%%%%%%%%%%%%%%%%%%%%%%%%%%%%%%%%%%%%%%%%%%%%%%%%%%%%%%%%%%%%%%%%
\section{Recursively Low-Rank Matrix $A$}
An advantage of the proposed covariance function $k_{\rm{h}}$ is that when the number of landmark points in each subdomain is considered fixed, the covariance matrix $K_{\rm{h}}\equiv k_{\rm{h}}(X,X)$ for a set $X$ of $n$ points admits computational costs only linear in $n$. Such a desirable scaling comes from the fact that $K_{\rm{h}}$ is a special case of \emph{recursively low-rank matrices} whose computational costs are linear in the matrix dimension. In this section, we discuss these matrices and their operations (such as factorization and inversion). Then, in the section that follows, we will show the specialization of $K_{\rm{h}}$ and discuss additional vector operations tied to $k_{\rm{h}}$.

Let us first introduce some notation. Let $I=\{1,\ldots,n\}$. The index set $I$ may be recursively (permuted and) partitioned, resulting in a hierarchical formation that resembles the second panel of Figure~\ref{fig:tree}. Then, corresponding to a node $i$ is a subset $I_i\subset I$. Moreover, we have $I_i=\cup_{j\in\text{Ch}(i)} I_j$ where the $I_j$'s under union are disjoint. For an $n\times n$ real matrix $A$, we use $A(I_j,I_l)$ to denote a submatrix whose rows correspond to the index set $I_j$ and columns to $I_l$. We also follow the Matlab convention and use $:$ to mean all rows/columns when extracting submatrices. Further, we use $|I|$ to denote the cardinality of an index set $I$. We now define a recursively low-rank matrix.

\begin{definition}\label{def:matrix}
A matrix $A\in\real^{n\times n}$ is said to be \emph{recursively low-rank} with a partitioning tree $T$ and a positive integer $r$ if
\begin{enumerate}
\item for every pair of sibling nodes $i$ and $j$ with parent $p$, the block $A(I_i,I_j)$ admits a factorization
\[
A(I_i,I_j)=U_i\Sigma_pV_j^T
\]
for some $U_i\in\real^{|I_i|\times r}$, $\Sigma_p\in\real^{r\times r}$, and $V_j\in\real^{|I_j|\times r}$; and
\item for every pair of child node $i$ and parent node $p$ not being the root, the factors
\[
U_p(I_i,:)=U_iW_p \quad\text{and}\quad V_p(I_i,:)=V_iZ_p
\]
for some $W_p,Z_p\in\real^{r\times r}$.
\end{enumerate}
\end{definition}

In Definition~\ref{def:matrix}, the first item states that each off-diagonal block of $A$ is a rank-$r$ matrix. The middle factor $\Sigma_p$ is shared by all children of the same parent $p$, whereas the left factor $U_i$ and the right factor $V_j$ may be obtained through a change of basis from the corresponding factors in the child level, as detailed by the second item of the definition. As a consequence, if $\text{Ch}(i)=\{i_1,\ldots,i_s\}$ and $\text{Ch}(j)=\{j_1,\ldots,j_t\}$, then
\[
A(I_i,I_j)=
\underbrace{\begin{bmatrix} U_{i_1} \\ \vdots \\ U_{i_s} \end{bmatrix}W_i}_{U_i}
\Sigma_p
\underbrace{Z_j^T\begin{bmatrix} V_{j_1}^T & \cdots & V_{j_t}^T \end{bmatrix}}_{V_j^T}.
\]

From now on, we use the shorthand notation $A_{ii}$ to denote a diagonal block $A(I_i,I_i)$ and $A_{ij}$ to denote an off-diagonal block $A(I_i,I_j)$. A pictorial illustration of $A$, which corresponds to the tree in Figure~\ref{fig:tree}, is given in Figure~\ref{fig:matrix}. Then, $A$ is completely represented by the factors
\begin{equation}\label{eqn:A.factors}
\{A_{ii}, U_i, V_i, \Sigma_p, W_q, Z_q \mid i \text{ is leaf, } p \text{ is nonleaf, } q \text{ is neither leaf nor root}\}.
\end{equation}
In computer implementation, we store these factors in the corresponding nodes of the tree. See Figure~\ref{fig:tree2} for an extended example of Figure~\ref{fig:tree}. Clearly, $A$ is symmetric when $A_{ii}$ and $\Sigma_p$ are symmetric, $U_i=V_i$, and $W_q=Z_q$ for all appropriate nodes $i$, $p$, and $q$. In this case, the computer storage can be reduced by approximately a factor of $1/3$ through omitting the $V_i$'s and $Z_q$'s; meanwhile, matrix operations with $A$ often have a reduced cost, too.

\begin{figure}[ht]
\centering
\includegraphics[width=.45\linewidth]{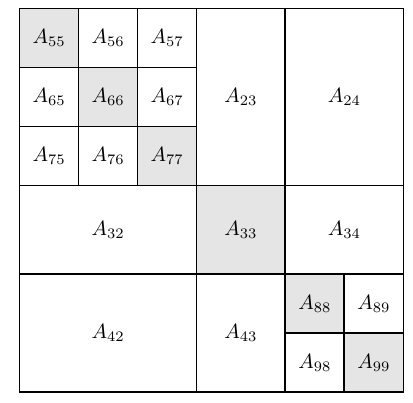}
\caption{The matrix $A$ corresponding to the partitioning tree in Figure~\ref{fig:tree}.}
\label{fig:matrix}
\end{figure}

\begin{figure}[ht]
\centering
\includegraphics{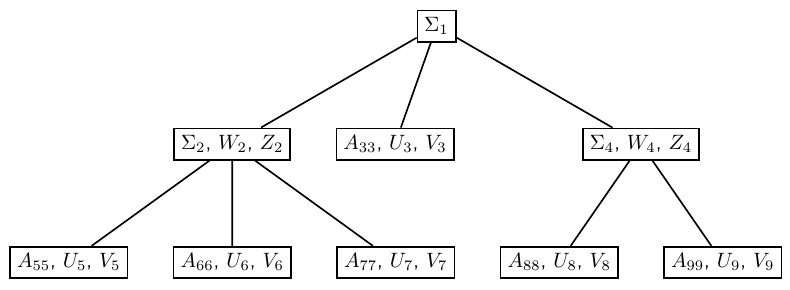}
\caption{Data structure for storing $A$. The partitioning tree is the same as that in Figure~\ref{fig:tree}.}
\label{fig:tree2}
\end{figure}

It is useful to note that not all matrix computations concerned in this paper are done with a symmetric matrix, although the covariance matrix is always so. One instance with unsymmetric matrices is sampling, where the matrix is a Cholesky-like factor of the covariance matrix. Hence, in this section, general algorithms are derived whenever $A$ may be unsymmetric, but we note the simplification for the symmetric case as appropriate.

The four matrix operations under consideration are:
%% (i) matrix-vector multiplication $\bm{y}=A\bm{b}$;
%% (ii) matrix inversion $\widetilde{A}=A^{-1}$;
%% (iii) determinant $\det(A)$; and
%% (iv) Cholesky-like factorization $A=GG^T$ (when $A$ is symmetric positive definite).
\begin{enumerate}
\item matrix-vector multiplication $\bm{y}=A\bm{b}$;
\item matrix inversion $\widetilde{A}=A^{-1}$;
\item determinant $\det(A)$; and
\item Cholesky-like factorization $A=GG^T$ (when $A$ is symmetric positive definite).
\end{enumerate}
The detailed algorithms are presented in the appendix. Suffice it to mention here that interestingly, all algorithms are in the form of tree walks (e.g., preorder or postorder traversals) that heavily use the tree data structure illustrated in Figure~\ref{fig:tree2}. The inversion and Cholesky-like factorization rely on existence results summarized in the following. The proofs of these theorems are constructive, which simultaneously produce the algorithms. Hence, one may find the proofs inside the algorithms given in the appendix.

\begin{theorem}\label{thm:invA}
Let $A$ be recursively low-rank with a partitioning tree $T$ and a positive integer $r$. If $A$ is invertible and additionally, $A_{ii}-U_i\Sigma_pV_i^T$ is also invertible for all pairs of nonroot node $i$ and parent $p$, then there exists a recursively low-rank matrix $\widetilde{A}$ with the same partitioning tree $T$ and integer $r$, such that $\widetilde{A}=A^{-1}$. Following~\eqref{eqn:A.factors}, we denote the corresponding factors of $\widetilde{A}$ to be
\[
\{\widetilde{A}_{ii}, \widetilde{U}_i, \widetilde{V}_i, \widetilde{\Sigma}_p, \widetilde{W}_q, \widetilde{Z}_q \mid i \text{ is leaf, } p \text{ is nonleaf, } q \text{ is neither leaf nor root}\}.
\]
\end{theorem}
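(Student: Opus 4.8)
The plan is to prove the statement by a recursion over the partitioning tree that turns each level of the hierarchy into a block Sherman--Morrison--Woodbury update, reading off the factors of $\widetilde A$ as we descend. It is convenient to induct on a slightly more general object than $A^{-1}$: for a node $i$ and an arbitrary $r\times r$ matrix $\Phi$, set $M_i(\Phi):=A_{ii}-U_i\Phi V_i^T$, and show that $M_i(\Phi)^{-1}$ is recursively low-rank on the subtree rooted at $i$, with the same integer $r$ and nested bases differing only by $r\times r$ changes of basis. The theorem is then the root case $i=1$, $\Phi=0$, where $M_1(0)=A$. The base case is a leaf $i$, where $M_i(\Phi)$ is a dense block and its inverse needs no further structure (it will become one of the $\widetilde A_{ii}$).

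First I would treat one level at the root. Writing $\mathrm{Ch}(1)=\{i_1,\dots,i_s\}$ and stacking the child bases as $U_{(1)}=[U_{i_1}^T\cdots U_{i_s}^T]^T$ and $V_{(1)}$ likewise, Definition~\ref{def:matrix} gives $A=\widehat D+U_{(1)}\Sigma_1 V_{(1)}^T$, where $\widehat D$ is block diagonal with blocks $\widehat D_{i_a}=A_{i_ai_a}-U_{i_a}\Sigma_1 V_{i_a}^T$. These are exactly the matrices the hypothesis assumes invertible (node $i_a$, parent $1$), so $\widehat D^{-1}$ exists and Woodbury yields
\[
A^{-1}=\widehat D^{-1}-\widetilde U_{(1)}\,\Theta_1\,\widetilde V_{(1)}^T,\qquad \Theta_1=\Sigma_1\bigl(I_r+V_{(1)}^T\widehat D^{-1}U_{(1)}\Sigma_1\bigr)^{-1},
\]
with $\widetilde U_{(1)}=\widehat D^{-1}U_{(1)}$ and $\widetilde V_{(1)}=\widehat D^{-T}V_{(1)}$; invertibility of the capacitance factor follows from that of $A$ and $\widehat D$ via $\det(I+AB)=\det(I+BA)$. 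Since $\widehat D$ is block diagonal, the off-diagonal blocks become $(A^{-1})_{i_ai_b}=-\widetilde U_{i_a}\Theta_1\widetilde V_{i_b}^T$ with $\widetilde U_{i_a}=\widehat D_{i_a}^{-1}U_{i_a}$, $\widetilde V_{i_a}=\widehat D_{i_a}^{-T}V_{i_a}$, matching item~1 of Definition~\ref{def:matrix} with $\widetilde\Sigma_1=-\Theta_1$ and boundary bases $\widetilde U_{i_a},\widetilde V_{i_a}$.

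Next I would set up the recursion. The diagonal blocks $(A^{-1})_{i_ai_a}=\widehat D_{i_a}^{-1}-\widetilde U_{i_a}\Theta_1\widetilde V_{i_a}^T$ must themselves be recursively low-rank on their subtrees, which requires inverting $\widehat D_{i_a}=M_{i_a}(\Sigma_1)$. The key algebraic fact is that $M_i(\Phi)$ inherits the structure of $A_{ii}$: using $U_i(I_c,:)=U_cW_i$ and $V_i(I_c,:)=V_cZ_i$ for $c\in\mathrm{Ch}(i)$, its off-diagonal blocks are $U_c\bigl(\Sigma_i-W_i\Phi Z_i^T\bigr)V_{c'}^T$ (same bases, modified middle factor), and the block-diagonal remainder used in the Woodbury split has blocks $A_{cc}-U_c\Sigma_i V_c^T=M_c(\Sigma_i)$, which is independent of $\Phi$. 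Hence the matrix passed to each child is always its parent's original middle factor $\Sigma_i$, and the blocks inverted down the tree are precisely the $A_{cc}-U_c\Sigma_p V_c^T$ of the hypothesis. The single-level computation therefore recurses verbatim as a tree traversal, producing at each node both the inverse block and the transformed boundary bases $\widetilde U_c=M_c(\Sigma_i)^{-1}U_c$, $\widetilde V_c=M_c(\Sigma_i)^{-T}V_c$ consumed by the parent.

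The step I expect to be the crux is verifying item~2 of Definition~\ref{def:matrix}, i.e.\ that the inverse bases are genuinely nested and the rank never grows. Here I would use $U_p=U_{(p)}W_p$ with the Woodbury form $M_p(\Phi)^{-1}=\widehat D^{-1}-\widetilde U_{(p)}\Theta_p\widetilde V_{(p)}^T$ to compute $M_p(\Phi)^{-1}U_p=\widetilde U_{(p)}(I_r-\Theta_p G_p)W_p$, where $G_p=\sum_{c}V_c^T\widetilde U_c$; restricting to $I_c$ gives exactly $\widetilde U_p(I_c,:)=\widetilde U_c\widetilde W_p$ with $\widetilde W_p=(I_r-\Theta_p G_p)W_p$, and symmetrically $\widetilde Z_p=(I_r-\Theta_p^T H_p)Z_p$ with $H_p=\sum_c U_c^T\widetilde V_c=G_p^T$. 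This confirms the nested $r\times r$ change of basis and that every off-diagonal block of $\widetilde A$ stays rank $r$, closing the induction. The only facts used beyond linear algebra are the two invertibility hypotheses, which I have matched term-by-term to the reduced blocks $M_c(\Sigma_p)$ and the capacitances the construction inverts; the symmetric case ($U_i=V_i$, $W_q=Z_q$, $\Sigma_p$ symmetric) follows by inspection, since then $\widetilde V_i=\widetilde U_i$, $\Theta_p$ is symmetric, and $G_p=H_p$.
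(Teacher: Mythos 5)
Your proposal is correct and takes essentially the same route as the paper: your splitting $M_i(\Phi)=\diag\big[M_c(\Sigma_i)\big]_{c\in\mathrm{Ch}(i)}+U_{(i)}(\Sigma_i-W_i\Phi Z_i^T)V_{(i)}^T$ is precisely the paper's equation~\eqref{eqn:AA} (with $\Phi=\Sigma_q$ for the parent $q$, and $\Phi=0$ at the root), and your Woodbury step and factor formulas $\widetilde U_c=M_c(\Sigma_i)^{-1}U_c$, $\widetilde V_c=M_c(\Sigma_i)^{-T}V_c$, $\widetilde W_p=(I-\Theta_pG_p)W_p$ coincide with the paper's \eqref{eqn:UV} and $\widetilde W_p=(I+\widetilde\Pi_p\widetilde\Xi_p)W_p$, since $\widetilde\Pi_p=-\Theta_p$ by the push-through identity $(I+\Lambda\Xi)^{-1}\Lambda=\Lambda(I+\Xi\Lambda)^{-1}$. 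The only difference is organizational: you absorb the downward accumulation of rank-$r$ corrections into an induction on the $\Phi$-parameterized family $M_i(\Phi)$ (using that the blocks actually inverted, $M_c(\Sigma_i)$, are independent of $\Phi$), whereas the paper writes those correction chains out explicitly in \eqref{eqn:Aii} and \eqref{eqn:Sigmap}.
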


\begin{theorem}\label{thm:cholA}
Let $A$ be recursively low-rank with a partitioning tree $T$ and a positive integer $r$. If $A$ is symmetric, by convention let $A$ be represented by the factors
\[
\{A_{ii}, U_i, U_i, \Sigma_p, W_q, W_q \mid i \text{ is leaf, } p \text{ is nonleaf, } q \text{ is neither leaf nor root}\}.
\]
Furthermore, if $A$ is positive definite and additionally, $A_{ii}-U_i\Sigma_pU_i^T$ is also positive definite for all pairs of nonroot node $i$ and parent $p$, then there exists a recursively low-rank matrix $G$ with the same partitioning tree $T$ and integer $r$, and with factors
\[
\{G_{ii}, U_i, V_i, \Omega_p, W_q, Z_q \mid i \text{ is leaf, } p \text{ is nonleaf, } q \text{ is neither leaf nor root}\},
\]
such that $A=GG^T$.
\end{theorem}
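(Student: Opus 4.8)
The plan is to prove Theorem~\ref{thm:cholA} by induction on the partitioning tree, producing the factors of $G$ constructively so that the recursion doubles as the factorization algorithm. Fix the root with children $i_1,\dots,i_s$, write $\mathbf U$ for the tall matrix obtained by stacking $U_{i_1},\dots,U_{i_s}$, and set $\Sigma:=\Sigma_{\text{root}}$. Splitting $A$ into its block-diagonal and off-diagonal parts and using $A(I_a,I_b)=U_a\Sigma U_b^T$ for siblings yields the one-level identity
\[
A=\operatorname{blkdiag}(B_{i_1},\dots,B_{i_s})+\mathbf U\Sigma\mathbf U^T,\qquad B_a:=A_{aa}-U_a\Sigma U_a^T .
\]
By hypothesis each $B_a\succ0$. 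The first step is to check that $B_a$ is itself recursively low-rank on the subtree rooted at $a$, with the \emph{same} left factors $U_\bullet$ and left translations $W_\bullet$ as $A$ and only its middle factors modified; a short computation using $U_a(I_c,:)=U_cW_a$ shows that the analogue of the hypothesis for $B_a$ at a child $c$ collapses exactly to $A_{cc}-U_c\Sigma_a U_c^T\succ0$, which is the hypothesis of the theorem at node $c$. Hence the inductive hypothesis applies to each $B_a$ and yields a recursively low-rank factor $\widehat G_a$ with $\widehat G_a\widehat G_a^T=B_a$; the base case is an ordinary dense Cholesky of a leaf block $B_i\succ0$, giving the leaf factors $G_{ii}$.

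The second step assembles these into a factor of $A$. Put $\widehat G:=\operatorname{blkdiag}(\widehat G_{i_1},\dots,\widehat G_{i_s})$ and $Q:=\widehat G^{-1}\mathbf U$ (well defined since each $B_a\succ0$), so that
\[
A=\widehat G\bigl(I+Q\Sigma Q^T\bigr)\widehat G^T .
\]
I would then seek the middle factor in the form $F=I+Q\Omega Q^T$, which gives $FF^T=I+Q(\Omega+\Omega^T+\Omega R\Omega^T)Q^T$ with $R:=Q^TQ$; thus it suffices to solve the small $r\times r$ quadratic ``square-root'' equation $\Omega+\Omega^T+\Omega R\Omega^T=\Sigma$. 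Completing the square via $Y:=\Omega+R^{-1}$ (taking $R$ nonsingular, the generic full-rank case; otherwise one works on $\range Q$) reduces this to $YRY^T=\Sigma+R^{-1}$, solvable precisely when $\Sigma+R^{-1}\succeq0$; a congruence/inertia argument shows $\Sigma+R^{-1}\succ0\iff I+Q\Sigma Q^T\succ0\iff A\succ0$. Setting $G:=\widehat G F=\widehat G+\mathbf U\Omega Q^T$ then gives $GG^T=A$.

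The third, and most delicate, step is to verify that $G$ is genuinely recursively low-rank with the factor list claimed. Reading off blocks of $G=\widehat G+\mathbf U\Omega Q^T$ shows its off-diagonal blocks are $G(I_a,I_b)=U_a\Omega V_b^T$ with right factors $V_b:=\widehat G_b^{-1}U_b$, so the left factors $U_\bullet$ and left translations $W_\bullet$ are inherited unchanged while $\Omega_{\text{root}}=\Omega$. What must still be shown is the nested translation property $V_i(I_b,:)=V_bZ_i$ for the new right factors. Here I would expand $(F^{[i]})^{-1}$ by the Woodbury identity — consistent with Theorem~\ref{thm:invA}, which guarantees $\widehat G_i^{-1}$ is itself recursively low-rank — and use the key algebraic coincidence that stacking the child right factors $V_a$ reproduces the matrix $Q^{[i]}$ built one level down at node $i$. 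This makes $V_i=\widehat G_i^{-1}U_i$ collapse to (stacked $V_a$)$\,\cdot Z_i$ for an explicit $r\times r$ matrix $Z_i$, which is exactly the required right translation, and it simultaneously shows the diagonal blocks $G(I_i,I_i)$ retain the form of a recursively low-rank matrix.

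I expect the main obstacle to be precisely this coupling between algebra and structure. The local square-root equation is where positive definiteness is genuinely consumed, and one must show it is \emph{solvable} (not merely that some matrix has a square root), via the inertia equivalence $\Sigma+R^{-1}\succ0\iff A\succ0$. The harder bookkeeping is that forming the product $\widehat G F$ preserves the shared-basis, nested low-rank format rather than inflating ranks or breaking the translation relations; carrying the inductive invariant — that each $\widehat G_a$ is recursively low-rank \emph{and} that $\widehat G_a^{-1}U_a$ again satisfies the nesting — is what threads both together and is the crux of the argument.
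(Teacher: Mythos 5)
Your proposal is correct, and underneath the top-down inductive packaging it is essentially the paper's own proof. Your root-level splitting $A=\diag(B_{i_1},\dots,B_{i_s})+\mathbf{U}\Sigma\mathbf{U}^T$ is the paper's equation~\eqref{eqn:AA2}; your observation that the hypotheses propagate to each $B_a$ (via $U_a(I_c,:)=U_cW_a$, so the needed condition collapses to $A_{cc}-U_c\Sigma_a U_c^T\succ0$) is exactly what licenses the paper's leaf-to-root recursion; your $r\times r$ equation $\Omega+\Omega^T+\Omega R\Omega^T=\Sigma$ with $R=Q^TQ=\sum_a U_a^TB_a^{-1}U_a$ is the paper's equation~\eqref{eqn:D}, since their $\Xi_p=\sum_i V_i^TV_i$ with $V_i=C_{ii}^{-1}U_i$ equals your $R$; and your push-through step $(I+Q\Omega Q^T)^{-1}Q=Q(I+\Omega R)^{-1}$, together with the ``coincidence'' that $Q$ at a node is the stack of its children's $V$ factors, reproduces their right-translation formula $Z_p=(I+D_p\Xi_p)^{-1}W_p$, while your additive corrections to the diagonal blocks and middle factors unroll into their downward chains~\eqref{eqn:Gii}--\eqref{eqn:Omegap}. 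The one genuine divergence is local: the paper establishes solvability of the small quadratic equation by treating it as a continuous-time algebraic Riccati equation, which admits a symmetric solution precisely when the eigenvalues of $I+\Xi_p\Lambda_p$ are positive (equivalently $B_{pp}\succ0$), solved by the Schur method; you instead complete the square via $Y=\Omega+R^{-1}$ and $YRY^T=\Sigma+R^{-1}$, which is more elementary and self-contained but requires $R$ nonsingular. That rank-deficient case is not vacuous --- nothing in Definition~\ref{def:matrix} forces the $U_i$ to have full column rank (e.g., $\Xi_p=0$ when all $U_i=0$, where $D_p=\Lambda_p/2$ solves the equation) --- so if you keep your variant you must actually carry out the restriction-to-$\range Q$ argument you allude to, whereas the Riccati route handles singular $\Xi_p$ uniformly.
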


%%%%%%%%%%%%%%%%%%%%%%%%%%%%%%%%%%%%%%%%%%%%%%%%%%%%%%%%%%%%%%%%%%%%%%%%%%%%%%%
\section{Covariance Matrix $K_{\rm{h}}$ as a Special Case of $A$ and Out-Of-Sample Extension}\label{sec:out.of.sample}
As noted at the beginning of the preceding section, the covariance matrix $K_{\rm{h}}=k_{\rm{h}}(X,X)$ is a special case of recursively low-rank matrices. This fact may be easily verified through populating the factors of $A$ defined in Definition~\ref{def:matrix}. Specifically, let $X$ be a set of $n$ distinct points in $S$ and let $X_j=X\cap S_j$ for all (sub)domains $S_j$. To avoid degeneracy assume $X_j\ne\emptyset$ for all $j$. Assign a recursively low-rank matrix $A$ in the following manner:
\begin{enumerate}
\item for every leaf node $i$, let $A_{ii}=k(X_i,X_i)$;
\item for every nonleaf node $p$, let $\Sigma_p=k(\ud{X}_p,\ud{X}_p)$;
\item for every leaf node $i$, let $U_i=V_i=k(X_i,\ud{X}_p)k(\ud{X}_p,\ud{X}_p)^{-1}$ where $p$ is the parent of $i$; and
\item for every nonleaf node $p$ not being the root, let $W_p=Z_p=k(\ud{X}_p,\ud{X}_q)k(\ud{X}_q,\ud{X}_q)^{-1}$ where $q$ is the parent of $p$.
\end{enumerate}
Then, one sees that $A=K_{\rm{h}}$. Clearly, $A$ is symmetric. Moreover, such a construction ensures that the preconditions of Theorems~\ref{thm:invA} and~\ref{thm:cholA} be satisfied.

In this section, we consider two operations with the vector $\bm{v}=k_{\rm{h}}(X,\bm{x})$, where $\bm{x}\notin X$ is an out-of-sample (i.e., unobserved site). The quantities of interest are
\begin{enumerate}
\item the inner product $\bm{w}^T\bm{v}$ for a general length-$n$ vector $\bm{w}$; and
\item the quadratic form $\bm{v}^T\widetilde{A}\bm{v}$, where $\widetilde{A}$ is a \emph{symmetric} recursively low-rank matrix with the same partitioning tree $T$ and integer $r$ as that used for constructing $k_{\rm{h}}$.
\end{enumerate}
For the quadratic form, in practical use $\widetilde{A}=K_{\rm{h}}^{-1}$, but the algorithm we develop here applies to a general symmetric $\widetilde{A}$. The inner product is used to compute prediction (first equation of~\eqref{eqn:kriging}) whereas the quadratic form is used to estimate standard error (second equation of~\eqref{eqn:kriging}).

The detailed algorithms are presented in the appendix. Similar to those in the preceding section, they are organized as tree algorithms. The difference is that both algorithms in this section are split into a preprocessing computation independent of $\bm{x}$ and a separate $\bm{x}$-dependent computation. The preprocessing still consists of tree traversals that visit all nodes of the hierarchy tree, but the $\bm{x}$-dependent computation visits only one path that connects the root and the leaf node that $\bm{x}$ lies in. In all cases, one needs not explicitly construct the vector $\bm{v}$, which otherwise costs $O(n)$ storage.

%%%%%%%%%%%%%%%%%%%%%%%%%%%%%%%%%%%%%%%%%%%%%%%%%%%%%%%%%%%%%%%%%%%%%%%%%%%%%%%
\section{Cost Analysis}\label{sec:cost.analysis}
All the recipes and algorithms developed in this work apply to a general partitioning of the domain $S$. As is usual, if the tree is arbitrary, cost analysis of many tree-based algorithms is unnecessarily complex. To convey informative results, here we assume that the partitioning tree $T$ is binary and perfect and the associated partitioning of the point set $X$ is balanced. That is, with some positive integer $n_0$, $|X_i|=n_0$ for all leaf nodes $i$. Then, with a partitioning tree of height $h$, the number of points is $|X|=n=n_02^h$. We assume that the number of landmark points, $r$, is equal to $n_0$ for simplicity.

Since the factors $A_{ii}$, $U_i$ and $V_i$ are stored in the leaf nodes $i$ and $\Sigma_p$, $W_p$, and $Z_p$ are stored in the nonleaf nodes $p$ (in fact, at the root there is no $W_p$ or $Z_p$), the storage is clearly
\[
\underbrace{(2^h)(n_0^2)}_{\text{for } A_{ii}}
+\underbrace{2(2^h)(n_0r)}_{\text{for } U_i \text{ and } V_i}
+\underbrace{(2^h-1)(r^2)}_{\text{for } \Sigma_p}
+\underbrace{2(2^h-2)(r^2)}_{\text{for } W_p \text{ and } Z_p}
=O(nr).
\]
An alternative way to conclude this result is that the tree has $O(n/r)$ nodes, each of which contains an $O(1)$ number of matrices of size $r\times r$. Therefore, the storage is $O(n/r\times r^2)=O(nr)$. This viewpoint also applies to the additional storage needed when executing all the matrix algorithms, wherein temporary vectors and matrices are allocated. This additional storage is $O(r)$ or $O(r^2)$ per node, hence it does not affect the overall assessment $O(nr)$.

The analysis of the arithmetic cost of each matrix operation is presented in the appendix. In brief summary, matrix construction is $O(n\log n+nr^2)$, matrix-vector multiplication is $O(nr)$, matrix inversion and Cholesky-like factorization are $O(nr^2)$, determinant computation is $O(n/r)$, inner product is $O(r^2\log_2(n/r))$ with $O(nr)$ preprocessing, and quadratic form is $O(r^2\log_2(n/r))$ with $O(nr^2)$ preprocessing.

We informally say that the computational cost of the proposed work is $O(n)$, omitting the dependency on $r$. From the function point of view, the quality of $k_{\rm{h}}$ is independent of data. It is a valid covariance function for any positive integer $r$. Hence, one may use a fixed $r$ and apply $k_{\rm{h}}$ to increasingly more data (e.g., increasingly dense sampling within a fixed domain). It is in this sense that the matrix operations are linear in $n$, although we recognize that for some purposes, one may want to consider allowing $r$ to increase with $n$.

%%%%%%%%%%%%%%%%%%%%%%%%%%%%%%%%%%%%%%%%%%%%%%%%%%%%%%%%%%%%%%%%%%%%%%%%%%%%%%%
\section{Connections and Distinctions to Hierarchical Matrices}\label{sec:compare.hierarchical.matrix}
The proposed recursively low-rank matrix structure builds on a number of previous efforts. For decades, researchers in scientific computing have been keenly developing fast methods for multiplying a dense matrix with a vector, $K\bm{y}$, where the matrix $K$ is defined based on a kernel function (e.g., Green's function) that resembles a covariance function. Notable methods include the tree code~\citep{Barnes1986}, the fast multipole method (FMM)~\citep{Greengard1987,Sun2001}, hierarchical matrices~\citep{Hackbusch1999,Hackbusch2002,Boerm2003}, and various extensions~\citep{Gimbutas2002,Ying2004,Chandrasekaran2006,Martinsson2007,Fong2009,Ho2013,Ambikasaran2014,March2015}. These methods were either co-designed, or later generalized, for solving linear systems $K^{-1}\bm{y}$. They are all based on a hierarchical partitioning of the computation domain, or equivalently, a hierarchical block partitioning of the matrix. The diagonal blocks at the bottom level remain unchanged but (some of) the off-diagonal blocks are low-rank approximated. The differences, however, lie in the fine details, including whether all off-diagonal blocks are low-rank approximated or the ones immediately next to the diagonal blocks should remain unchanged;
%\footnote{In the terminology of hierarchical matrices, this design choice is weak admissibility versus strong admissibility.};
whether the low-rank factors across levels share bases;
%\footnote{If the low-rank factors do not share bases, there is typically an additional $O(\log n)$ factor in the costs.};
and how the low-rank approximations are computed.
%\footnote{The low-rank approximations may be constructed analytically based on series expansions, which are applicable to only kernel functions whose expansions are known; or algebraically based on matrix approximations.}.

The aim of this work is an approach applicable to as many computational components as possible of GRF. Hence, the aforementioned design details necessarily differ from those for other applications. Moreover, certain compromises may need to be made for a broad coverage; for example, a structure optimal for kriging is out of the question if not generalizable to likelihood calculation. The rationale of our design choice is best conveyed through comparing with related methods. Our work distinguishes from them in the following aspects.

\paragraph{Function versus matrix.} We explicitly define the covariance function on $\real^d\times\real^d$, which is shown to be (strictly) positive definite. Whereas the related methods are all understood as matrix approximations, to the best of our knowledge, none of these works considers the underlying kernel function that corresponds to the approximate matrix. The knowledge of the underlying function is important for out-of-sample extensions, because, for example in kriging~\eqref{eqn:kriging}, one should approximate also the vector $\bm{k}_0$ in addition to the matrix $K$.

One may argue that if $K$ is well approximated (e.g., accurate to many digits), then it suffices to use the nonapproximate $\bm{k}_0$ for computation. It is important to note, however, that the matrix approximations are elementwise, which does not guarantee good spectral approximations. As a consequence, numerical error may be disastrously amplified through inversion, especially when there is no or a small nugget effect. Moreover, using the nonapproximate $\bm{k}_0$ for computation will incur a computational bottleneck if one needs to krige a large number of sites, because constructing the vector $\bm{k}_0$ alone incurs an $O(n)$ cost.

On the other hand, we start from the covariance function and hence one needs not interpret the proposed approach as an approximation. \emph{All the linear algebra computations are exact in infinite precision, including inversion and factorization}. Additionally, positive definiteness is proved. Few methods under comparison hold such a guarantee.

\paragraph{Positive definiteness.} A substantial flexibility in the design of methods under comparison is the low-rank approximation of the off-diagonal blocks. If the approximation is algebraic, the common objective is to minimize the approximation error balanced with computational efficiency (otherwise the standard truncated singular value decomposition suffices). Unfortunately, rarely does such a method maintain the positive definiteness of the matrix, which poses difficulty for Cholesky-like factorization and log-determinant computation. A common workaround is some form of compensation, either to the original blocks of the matrix~\citep{Bebendorf2007} or to the Schur complements~\citep{Xia2010a}. Our approach requires no compensation because of the guaranteed positive definiteness.

\paragraph{Matrix structures and algorithms.} The fine distinctions in matrix structures lead to substantially different algorithms for matrix operations, if even possible. Our structure is almost the same as that of HSS matrices~\citep{Chandrasekaran2006,Xia2010} and of H$^2$ matrices with weak admissiblity~\citep{Hackbusch2002}, but distant from that of tree code~\citep{Barnes1986}, FMM~\citep{Greengard1987}, H matrices~\citep{Hackbusch1999}, and HODLR matrices~\citep{Ambikasaran2014}. Whereas fast matrix-vector multiplications are a common capability of different matrix structures, the picture starts to diverge for solving linear systems: some structures (e.g., HSS) are amenable for direct factorizations~\citep{Chandrasekaran2006a,Xia2010a,Li2012,Wang2013}, while the others must apply preconditioned iterative methods. An additional complication is that direct factorizations may only be approximate, and thus if the approximation is not sufficiently accurate, it can serve only as a preconditioner but cannot be used in a direct method~\citep{Iske2017}. Then, it will be nearly impossible for these matrix structures to perform Cholesky-like factorizations accurately.

In this regard, our matrix structure is the most clean. Thanks to the property that the matrix inverse and the Cholesky-like factor admit the same structure as that of the original matrix, all the matrix operations considered in this work are exact. Moreover, the explicit covariance function also allows for the development of $O(\log n)$ algorithms for computing inner products and quadratic forms, which, to the best of our knowledge, has not been discussed in the literature for other matrix structures.

\paragraph{Translation from function to matrix.} In the proposed approach, the factors are defined by exploiting the base covariance function, as opposed to HSS and H$^2$ approaches where the factors are generally computed through algebraic factorization and approximation. The delicate definition of the factors ensures positive definiteness, which is lacked by the algebraic methods and even by the methods that exploit the base kernel (e.g., \citet{Fong2009}). The guarantee of positive definiteness necessitates certain sacrifice in approximation accuracy. Thus, the proposed approach is well suited for GRF but for other applications, such as solving partial differential equations, where more specialized methods such as HSS and H$^2$ are preferred.
  
\paragraph{Computational costs.} Although most of the methods under this category enjoy an $O(n)$ or $O(n\log^p n)$ (for some small $p$) arithmetic cost, not every one does so. For example, the cost of skeletonization~\citep{Ho2013,Minden2016} is dimension dependent; in two dimensions it is approximately $O(n^{3/2})$ and in higher dimensions it will be even higher. In general, all these methods are considered matrix approximation methods, and hence there exists a likely tradeoff between approximation accuracy and computational cost. What confounds the approximation is that the low-rank phenomenon exhibited in the off-diagonal blocks fades as the dimension increases~\citep{Ambikasaran2016}. In this regard, it is beneficial to shift the focus from covariance matrices to covariance functions where approximation holds in a more meaningful sense. We conduct experiments to show that predictions and likelihoods are well preserved with the proposed approach.

%%%%%%%%%%%%%%%%%%%%%%%%%%%%%%%%%%%%%%%%%%%%%%%%%%%%%%%%%%%%%%%%%%%%%%%%%%%%%%%
\section{Practical Considerations}\label{sec:practical}
So far, we have presented a hierarchical framework for constructing valid covariance functions and revealed their appealing computational consequences. The framework is general but there remain instantiations for specific use. In this section, we discuss details tailored to GRF, a low dimensional use case as opposed to the more general (often high-dimensional) case of reproducing kernel Hilbert space.

%%%%%%%%%%%%%%%%%%%%%%%%%%%%%%%%%%%%%%%%%%%%%%%%%%%%%%%%%%%%%%%%%%%%%%%%%%%%%%%
\subsection{Partitioning of Domain}
For GRF, the sampling sites often reside on a regular grid or a structured (e.g., triangular) mesh. Large spatial datasets with irregular locations commonly occur in remote sensing, although even in this setting, there is usually substantial regularity in the locations due to, for example, the periodicity in a polar-orbiting satellite. When the sites are on a regular grid, a natural choice of the partitioning is axis aligned and balanced. We recommend the following bounding box approach: Begin with the bounding box of the grid, select the longest dimension, cut it into equal halves, and repeat. If the number of grid points along the partitioning dimension in each partitioning is even, the procedure results in a perfect binary tree, whose leaf nodes have exactly the same bounding box volume and the same number of sites. If the number of grid points is odd in some occasion, one shifts the cutting point by half the grid spacing, so that the sampling sites in the middle are not cut.

This bounding box approach straightforwardly generalizes to the mesh or random configuration: Each time the longest dimension of the bounding box is selected and the box is cut into two halves, each of which contains approximately the same number of sampling sites. For random points without exploitable structures, the resulting partitioning tree is known as the k-d tree~\citep{Bentley1975}.

%%%%%%%%%%%%%%%%%%%%%%%%%%%%%%%%%%%%%%%%%%%%%%%%%%%%%%%%%%%%%%%%%%%%%%%%%%%%%%%
\subsection{Landmark Points}
Assume that the partitioning tree is balanced. As explained in the cost analysis, we consolidate the two parameters, leaf size $n_0$ and the number of landmark points, $r$, into one for convenience. To achieve so, we set the tree height $h$ to be some integer such that the leaf size $n_0=n/2^h$ is greater than or equal to $r$ but less than $2r$. Even if the partitioning is not balanced, the same effect can still be achieved: the recursive partitioning is terminated when each leaf size is $\ge r$ but $<2r$.

The appropriate $r$ is case dependent. There exists a tradeoff between approximation accuracy and computational cost. The larger $r$, the closer $k_{\rm{h}}$ is to $k$ but the more expensive is the computation (the cost of matrix-vector multiplication is linear in $r$, whereas those for inversion, Cholesky, inner product, and quadratic forms are all quadratic in $r$). Although there exists analysis (see, e.g., \citet{Drineas2005}) on the approximation error of the covariance matrix under Nystr\"{o}m approximation (which is part of our one-level construction), extending it to the error analysis of kriging or likelihood is challenging, let alone to the analysis under the multilevel setting. For empirical evidence, we show later a computational example of the kriging error and the log-likelihood, as $r$ varies. We suggest that in practice, one sets $r$ through balancing the tolerable error (which may be estimated, for example, by using a hold out set) and the computational resources at hand.

The configuration of the landmark points is flexible. Because of the low dimension, a regular grid is feasible. One may set the number of grid points along each dimension to be approximately proportional to the size of the bounding box. An advantage of using regular grids is that the results are deterministic. An alternative is randomization. The landmark points may either be uniformly random within the bounding box, or uniformly sampled from the sampling sites. A later experiment indicates that the random choice yields a worse approximation on average, but the variance is nonnegligible such that sometimes a better approximation is obtained compared with the regular-grid choice.

%%%%%%%%%%%%%%%%%%%%%%%%%%%%%%%%%%%%%%%%%%%%%%%%%%%%%%%%%%%%%%%%%%%%%%%%%%%%%%%
\section{Numerical Experiments}
In this section, we show a comprehensive set of experiments to demonstrate the practical use of the proposed covariance function $k_{\rm{h}}$ for various GRF computations. These computations are centered around simulated data and data from test functions, based on a simple stationary covariance model $k$. In the next section we will demonstrate an application with real-life data and a more realistic nonstationary covariance model.

The base covariance function $k$ in this section is the Mat\'{e}rn model
\begin{equation}\label{eqn:matern}
k(\bm{x},\bm{x}')=
\frac{10^{\alpha}}{2^{\nu-1}\Gamma(\nu)}
\left(\frac{\sqrt{2\nu}\|\bm{r}\|}{\ell}\right)^{\nu}
\bessel_{\nu}\left(\frac{\sqrt{2\nu}\|\bm{r}\|}{\ell}\right)
+10^{\tau}\cdot\bm{1}(\bm{r}=\bm{0})
\quad\text{with}\quad
\bm{r}=\bm{x}-\bm{x}',
\end{equation}
where $10^{\alpha}$ is the sill, $\ell$ is the range, $\nu$ is the smoothness, and $10^{\tau}$ is the nugget. In each experiment, the vector $\bm{\theta}$ of parameters include some of them depending on appropriate setting. We have reparameterized the sill and the nugget through a power of ten, because often the plausible search range is rather wide or narrow. Note that for the extremely smooth case (i.e., $\nu=\infty$), \eqref{eqn:matern} becomes equivalently the squared-exponential model
\begin{equation}\label{eqn:sq.exp}
k(\bm{x},\bm{x}')=
10^{\alpha}\exp\left(-\frac{\|\bm{r}\|^2}{2\ell^2}\right)
+10^{\tau}\cdot\bm{1}(\bm{r}=\bm{0}).
\end{equation}
We will use this covariance function in one of the experiments. Throughout we assume zero mean for simplicity.

%%%%%%%%%%%%%%%%%%%%%%%%%%%%%%%%%%%%%%%%%%%%%%%%%%%%%%%%%%%%%%%%%%%%%%%%%%%%%%%
\subsection{Small-Scale Example}\label{sec:exp.closed.loop}
We first conduct a closed-loop experiment whereby data are simulated on a two-dimensional grid from some prescribed parameter vector $\bm{\theta}$. We discard (uniformly randomly) half the data and perform maximum likelihood estimation. The purpose is to verify that the estimated $\widehat{\bm{\theta}}$ is indeed close to the $\bm{\theta}$ that generates the data. Afterward, we perform kriging by using the estimated $\widehat{\bm{\theta}}$ to recover the discarded data. Because it is a closed-loop setting and there is no model misspecification, the kriging errors should align well with the square root of the variance of the conditional distribution (see~\eqref{eqn:kriging}). We do not use a large $n$, since we will compare the results of the proposed method with those from the standard method that requires $O(n^3)$ expensive linear algebra computations.

The prescribed parameter vector $\bm{\theta}$ consists of three elements: $\alpha$, $\ell$, and $\nu$. We choose to use a zero nugget because in some real-life settings, measurements can be quite precise and it is unclear one always needs a nugget effect. This experiment covers such a scenario. Further, note that numerically accurate codes for evaluating the derivatives with respect to $\nu$ are unknown. Such a limitation poses constraints when choosing optimization methods.

Further details are as follows. We simulate data on a grid of size $40\times50$ occupying a physical domain $[-0.8,0.8]\times[-1,1]$, by using prescribed parameters $\alpha=0$, $\ell=0.2$, and $\nu=2.5$. Half of the data are discarded, which results in $n=1000$ sites for estimation and $m=1000$ sites for kriging.

% In all experiments, the general principle for estimating $\widehat{\bm{\theta}}$ is a combined use of numerical optimization and grid search. Numerical optimization is done by using quasi-Newton methods, whereby derivatives are computed by using finite difference. Note, however, that blackbox optimization solvers (such as \texttt{fminunc} in Matlab) do not always produce accurate solutions. To our experience, the plotted log-likelihood is not always peaked at the computed result. More discussions will be presented in Section~\ref{sec:exp.noaa}. Hence, a grid search is necessary to obtain an accurate optimum. The search grid must be finer than the uncertainty of $\widehat{\bm{\theta}}$. The uncertainty, i.e., the standard errors, are estimated as the square root of the diagonal elements of the inverse of the Hessian matrix of the negative log-likelihood.
% For the experiment in this subsection, since we know the truth $\bm{\theta}$, we will skip the numerical optimization and only perform grid search around $\bm{\theta}$.

For the proposed method, we build the partitioning tree by using the bounding box approach elaborated in Section~\ref{sec:practical}. We specify the number of landmark points, $r$, to be $125$, and make the height of the partitioning tree $h=\lfloor\log_2(n/r)\rfloor$ such that the number of points in each leaf node is approximately $r$. The landmark points for each subdomain in the hierarchy are placed on a regular grid.

Figure~\ref{fig:exp.closed.loop.1}(a) illustrates the random field simulated by using $k$. With this data, maximum likelihood estimation is performed, by using separately $k$ and $k_{\rm{h}}$. The parameter estimates and their standard errors are given in Table~\ref{tab:exp.closed.loop}. The numbers between the two methods are both quite close to the truth. With the estimated parameters, kriging is performed, with the results shown in Figure~\ref{fig:exp.closed.loop.1}(b) and (c). The kriging errors are sorted in the increasing order of the prediction variance. The red curves in the plots are three times the square root of the variance; not surprisingly almost all the errors are below this curve.

\begin{figure}[ht]
\centering
\subfigure[Simulated random field]{
  \includegraphics[width=.31\linewidth]{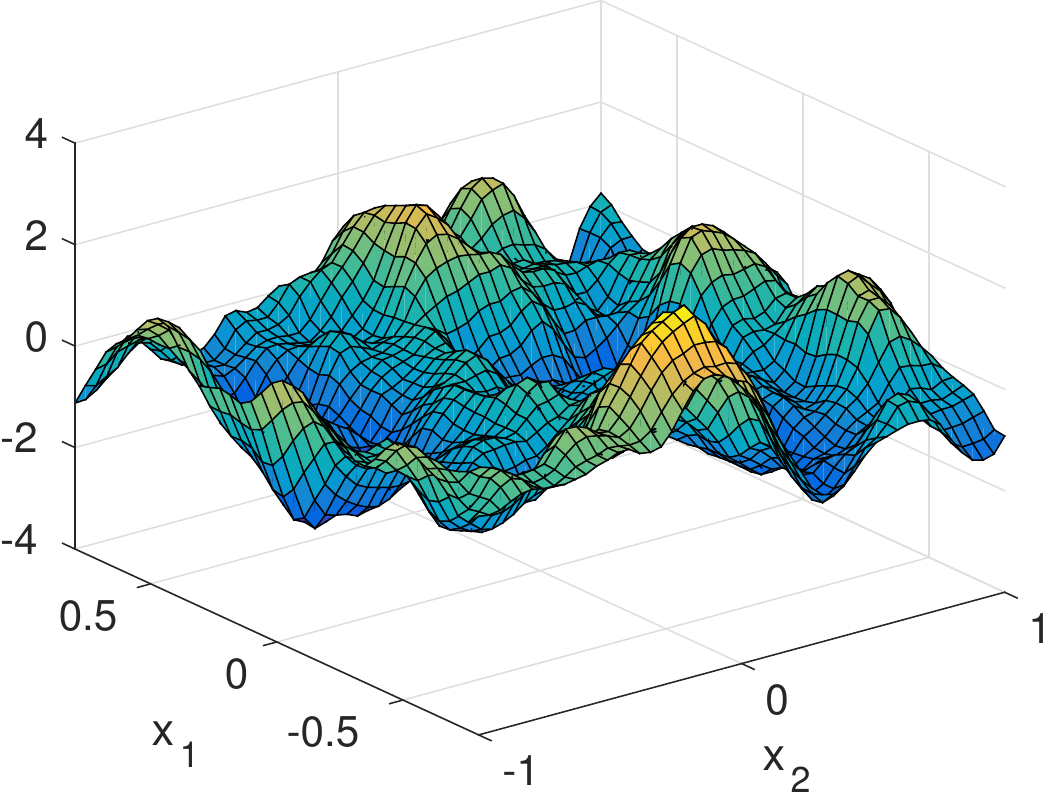}}
\subfigure[Kriging error using $k$]{
  \includegraphics[width=.31\linewidth]{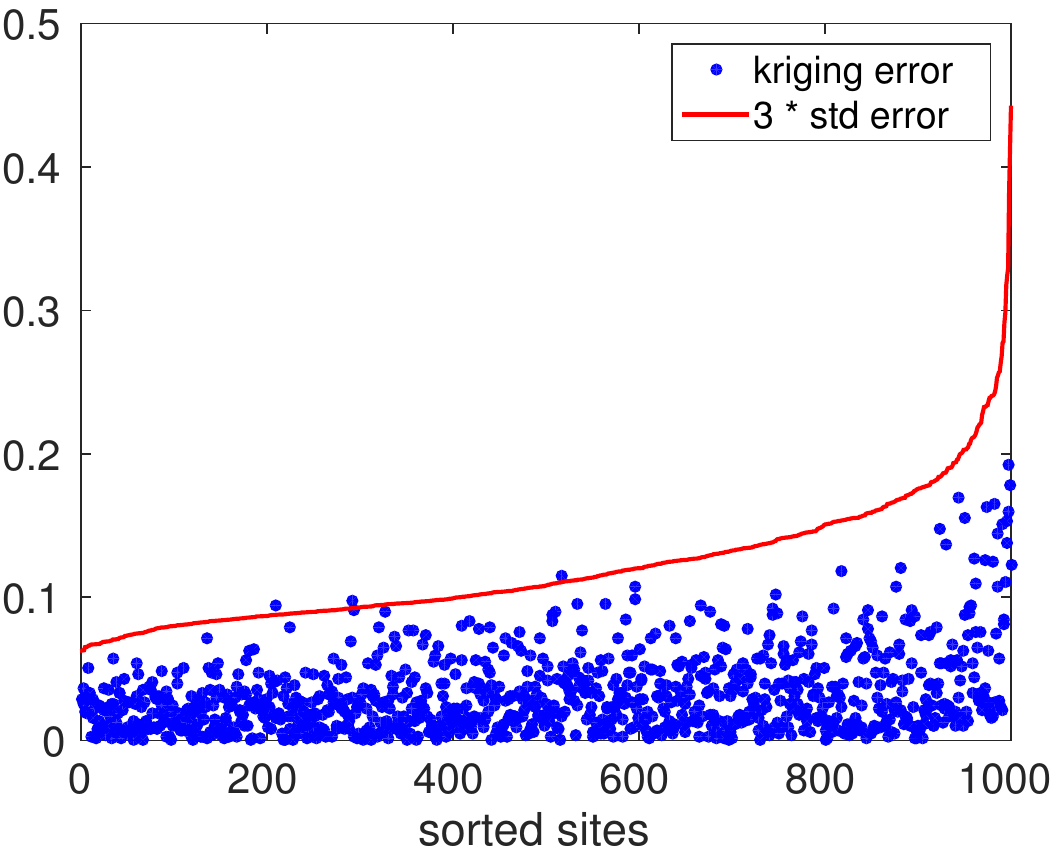}}
\subfigure[Kriging error using $k_{\rm{h}}$]{
  \includegraphics[width=.31\linewidth]{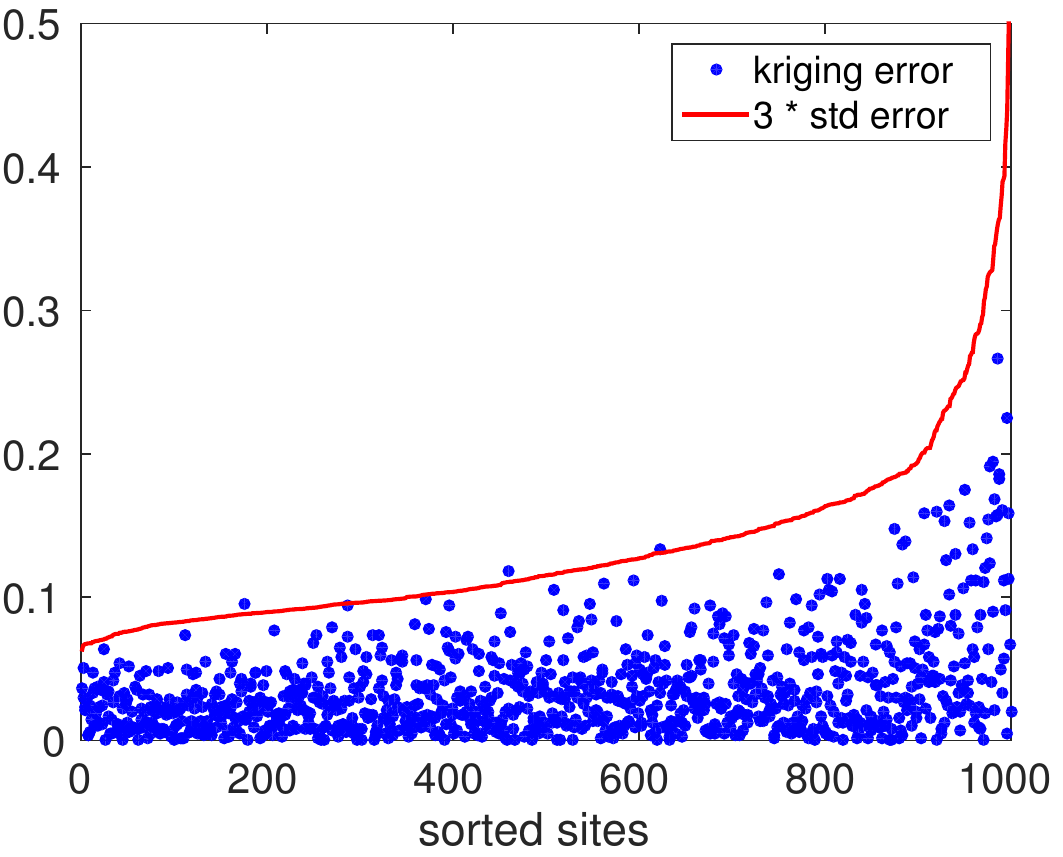}}
\caption{Simulated random field and kriging errors.}
\label{fig:exp.closed.loop.1}
\end{figure}

\begin{table}[ht]
\centering
\caption{True parameters and estimates.}
\label{tab:exp.closed.loop}
\begin{tabular}{lr@{$\,\,$}cc@{$\,\,$}cc@{$\,\,$}c}
\hline
& \multicolumn{2}{c}{$\alpha$}
& \multicolumn{2}{c}{$\ell$}
& \multicolumn{2}{c}{$\nu$}\\
\hline
Truth                     
& $0.000$  &           & $0.200$ &           & $2.50$ &          \\
Estimated with $k$          
& $-0.172$ & $(0.076)$ & $0.182$ & $(0.012)$ & $2.56$ & $(0.11)$ \\
Estimated with $k_{\rm{h}}$ 
& $-0.150$ & $(0.075)$ & $0.186$ & $(0.012)$ & $2.53$ & $(0.11)$ \\
\hline
\end{tabular}
\end{table}

%%%%%%%%%%%%%%%%%%%%%%%%%%%%%%%%%%%%%%%%%%%%%%%%%%%%%%%%%%%%%%%%%%%%%%%%%%%%%%%
\subsection{Comparison of Log-Likelihoods and Estimates}\label{sec:exp.loglik}
One should note that the base covariance function $k$ and the proposed $k_{\rm{h}}$ are not particularly close, because the number $r$ of landmarks for defining $k_{\rm{h}}$ is only $125$ (compare this number with the number of observed sites, $n=1000$). Hence, if one compares the covariance matrix $K$ with $K_{\rm{h}}$, they agree in only a limited number of digits. However, the reason why $k_{\rm{h}}$ is a good alternative of $k$ is that the shapes of the likelihoods are similar, as well as the locations of the optimum.

\begin{figure}[ht]
\centering
\subfigure[$\ell$-$\nu$ plane]{
  \includegraphics[width=.31\linewidth]{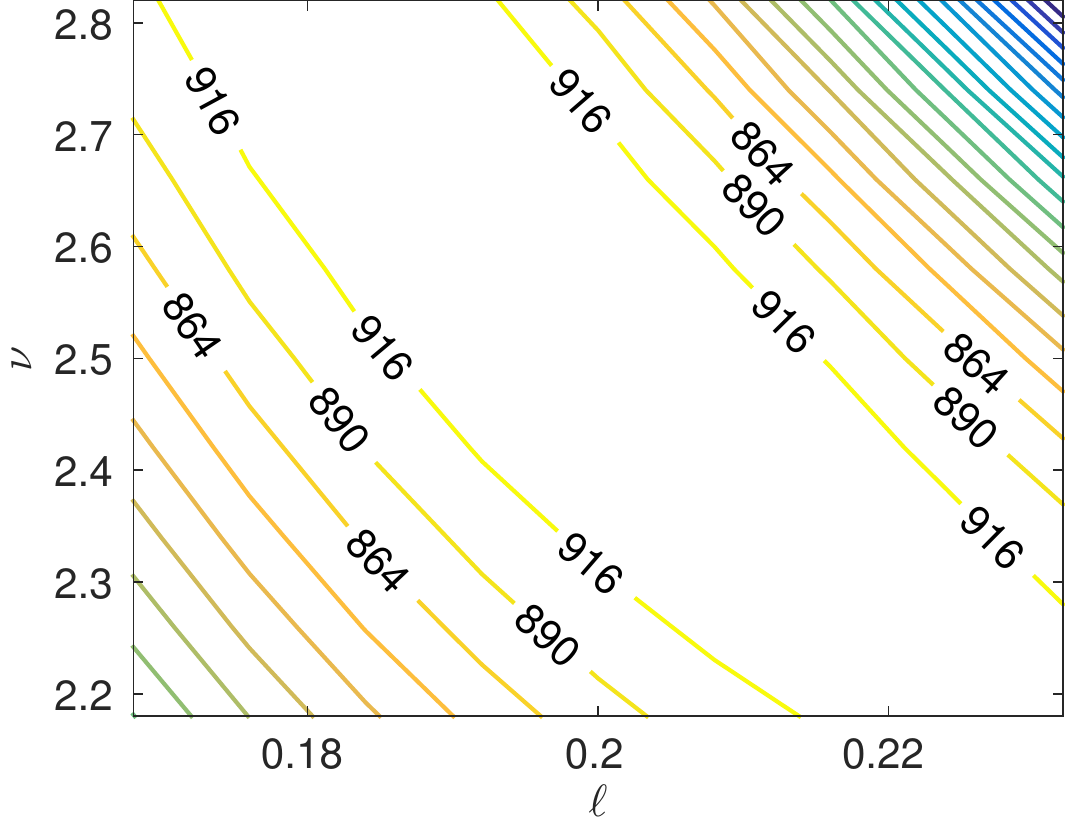}}
\subfigure[$\alpha$-$\nu$ plane]{
  \includegraphics[width=.31\linewidth]{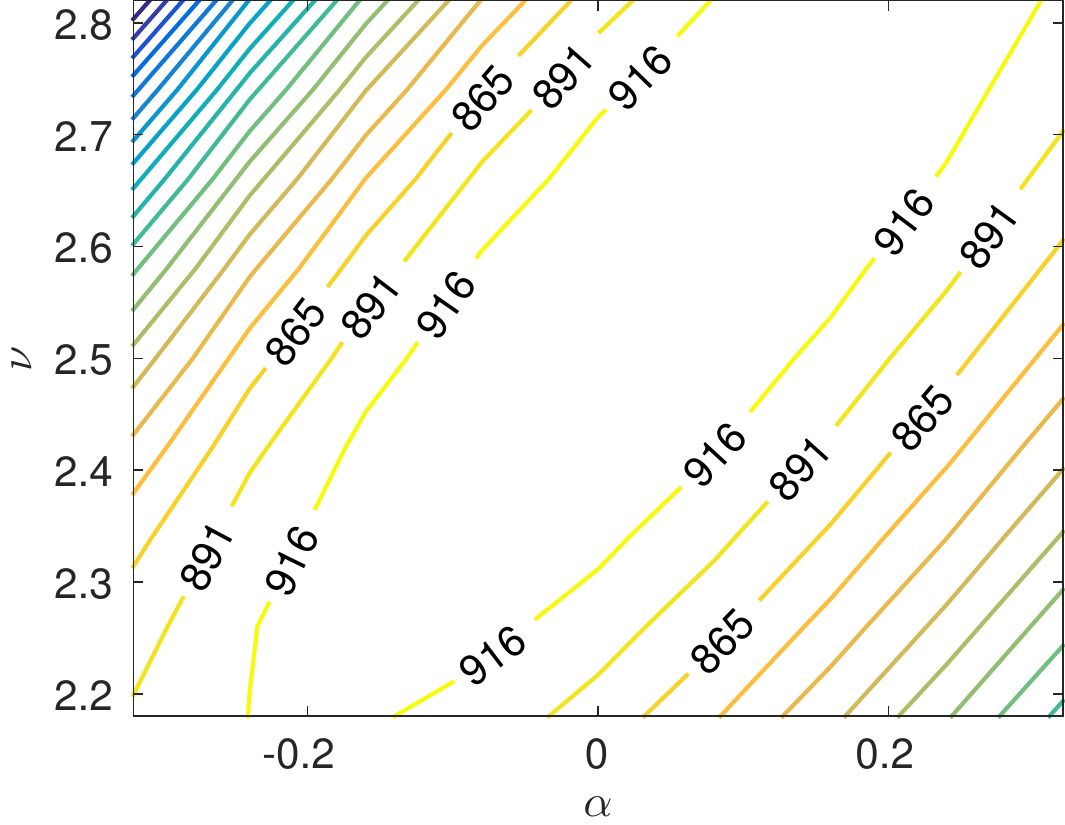}}
\subfigure[$\alpha$-$\ell$ plane]{
  \includegraphics[width=.31\linewidth]{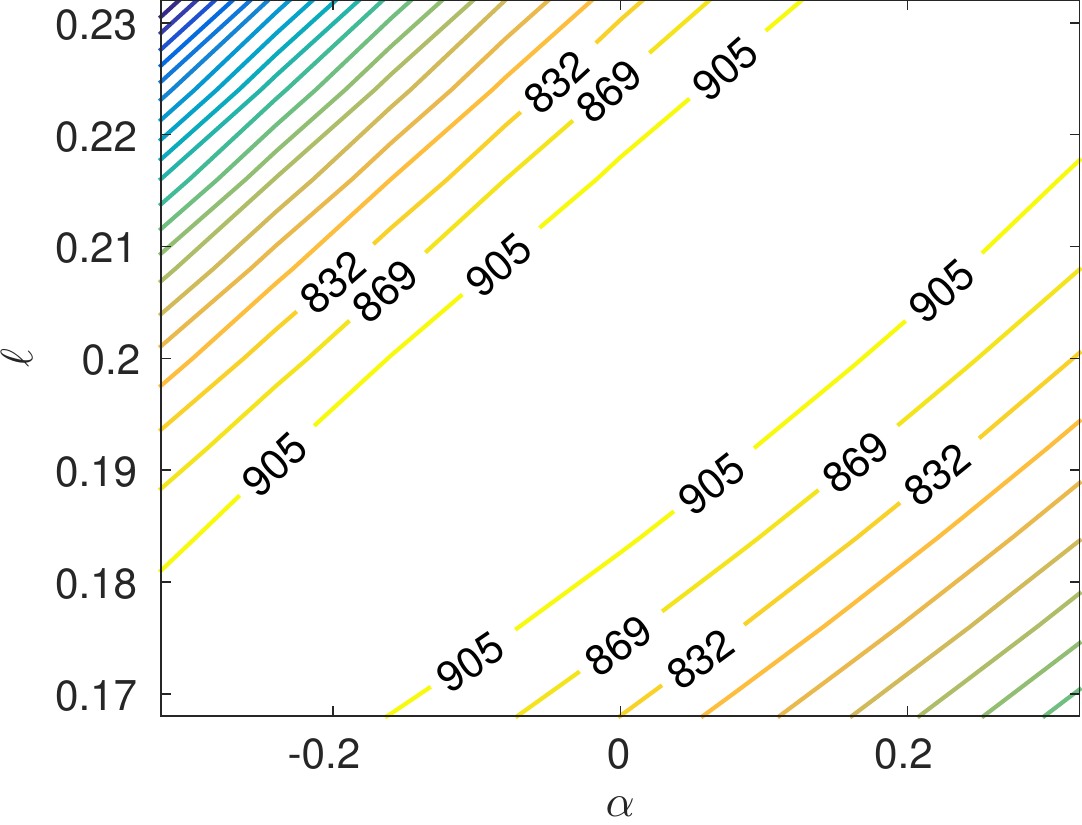}}\\
\subfigure[$\ell$-$\nu$ plane]{
  \includegraphics[width=.31\linewidth]{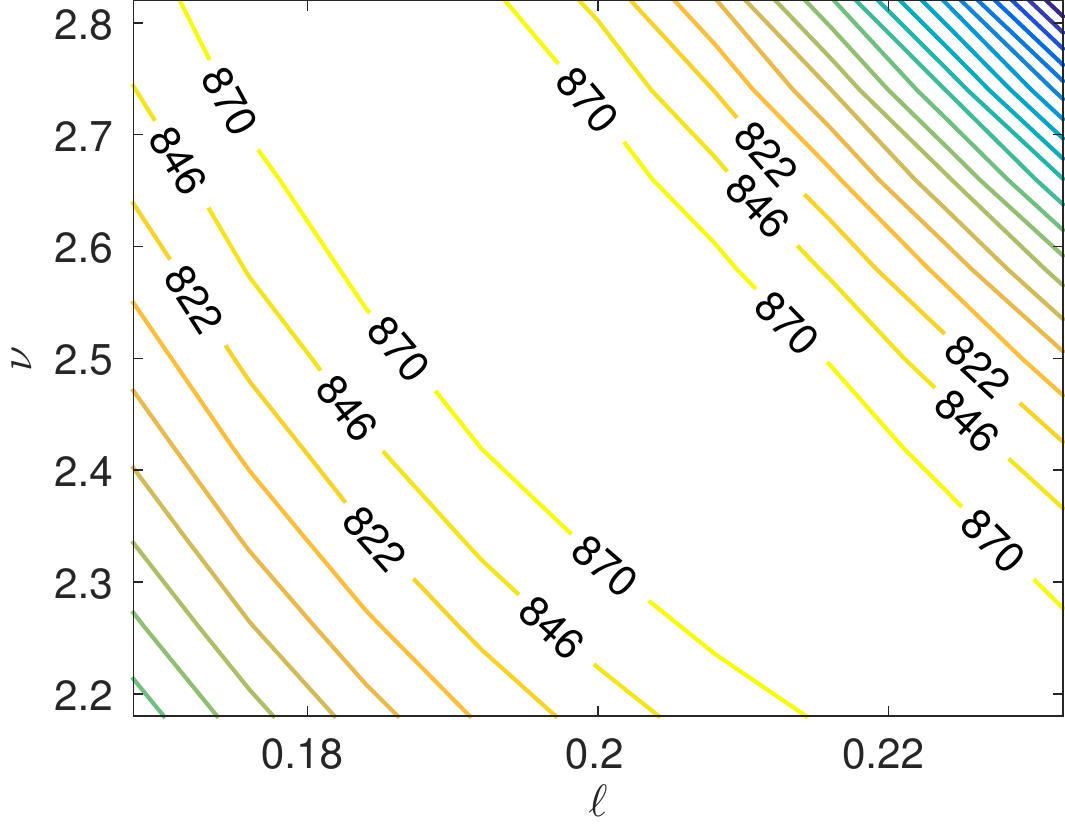}}
\subfigure[$\alpha$-$\nu$ plane]{
  \includegraphics[width=.31\linewidth]{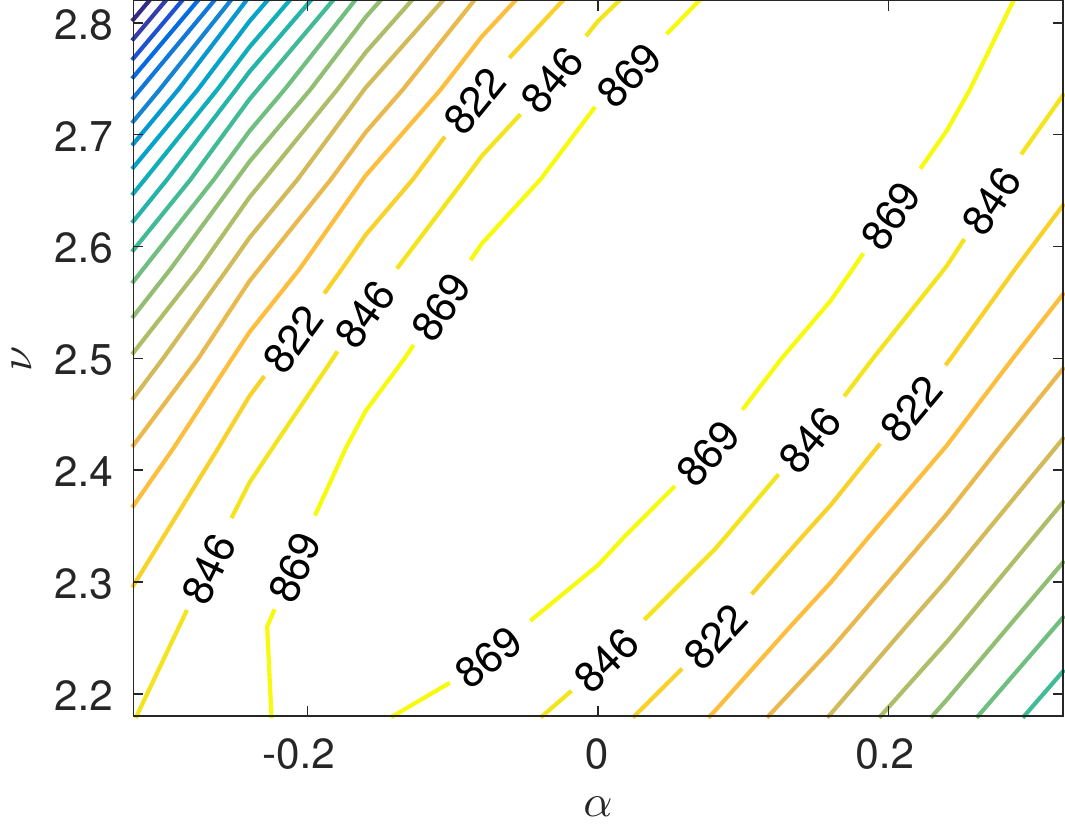}}
\subfigure[$\alpha$-$\ell$ plane]{
  \includegraphics[width=.31\linewidth]{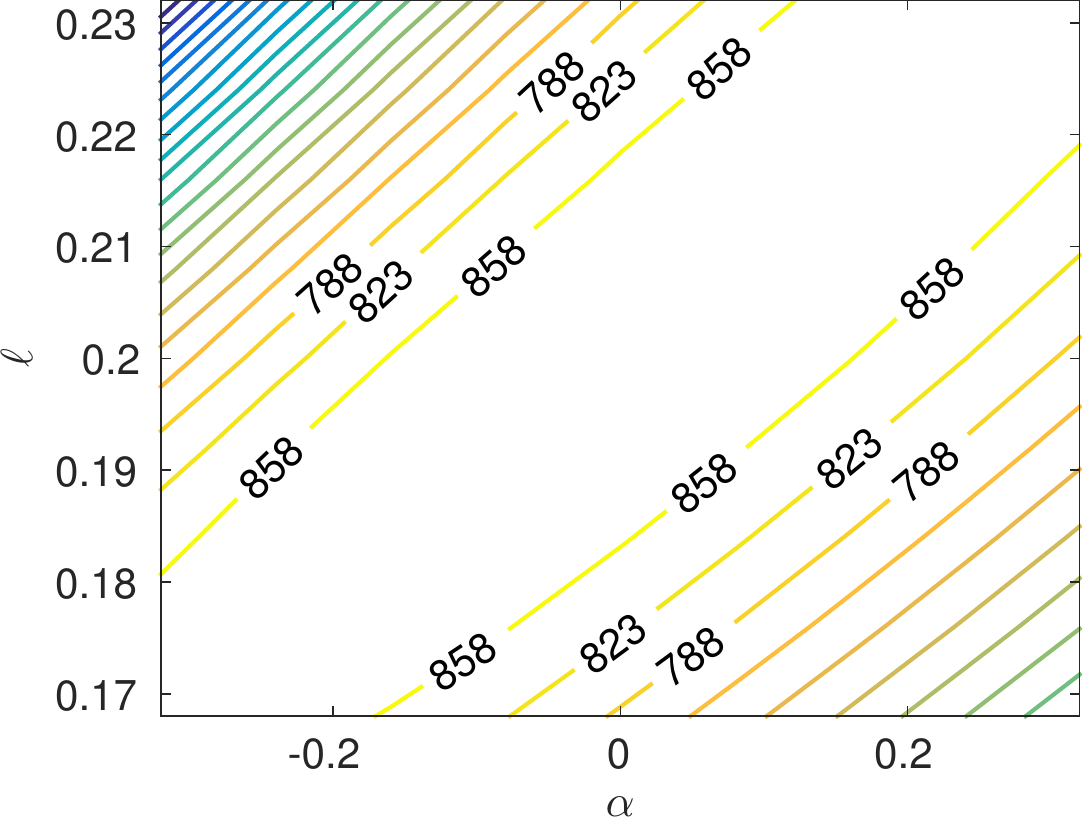}}
\caption{Cross sections of log-likelihood. Top row: base covariance function $k$; bottom row: proposed covariance function $k_{\rm{h}}$.}
\label{fig:exp.closed.loop.3}
\end{figure}

We graph in Figure~\ref{fig:exp.closed.loop.3} the cross sections of the log-likelihood centered at the truth $\bm{\theta}$. The top row corresponds to $k$ and the bottom row to $k_{\rm{h}}$. One sees that in both cases, the center (truth $\bm{\theta}$) is located within a reasonably concave neighborhood, whose contours are similar to each other.

\begin{table}[ht]
\centering
\caption{Difference of estimates and log-likelihoods under $k$ and $k_{\rm{h}}$. The unparenthesized number is the mean and the number with parenthesis is the standard deviation. For reference, the uncertainties (denoted as stderr) of the estimates are listed in the second part of the table.}
\label{tab:exp.loglik}
\begin{tabular}{cccc}
\hline
$|\widehat{\alpha}-\widehat{\alpha}_{\rm{h}}|$
& $|\widehat{\ell}-\widehat{\ell}_{\rm{h}}|$
& $|\widehat{\nu}-\widehat{\nu}_{\rm{h}}|$
& $|\mathcal{L}_k(\widehat{\bm{\theta}})-\mathcal{L}_k(\widehat{\bm{\theta}}_{\rm{h}})|$\\
$0.0120$ $(0.0098)$ & $0.0018$ $(0.0018)$ & $0.0240$ $(0.0211)$ & $0.1151$ $(0.0880)$\\
\hline
\hline
$\text{stderr}(\widehat{\alpha})$
& $\text{stderr}(\widehat{\ell})$
& $\text{stderr}(\widehat{\nu})$\\
$0.0841$ $(0.0050)$ & $0.0137$ $(0.0016)$ & $0.1002$ $(0.0074)$\\
\hline
\end{tabular}
\end{table}

In fact, the maxima of the log-likelihoods are rather close. We repeat the simulation ten times and report the statistics in Table~\ref{tab:exp.loglik}. The quantities with a subscript ``h'' correspond to the proposed covariance function $k_{\rm{h}}$. One sees that for each parameter, the differences of the estimates are generally about $20\%$ of the standard errors of the estimates. Furthermore, the difference of the true log-likelihoods at the two estimates is always substantially less than one unit. These results indicate that the proposed $k_{\rm{h}}$ produces highly comparable parameter estimates with the base covariance function $k$.

%%%%%%%%%%%%%%%%%%%%%%%%%%%%%%%%%%%%%%%%%%%%%%%%%%%%%%%%%%%%%%%%%%%%%%%%%%%%%%%
\subsection{Landmark Points}\label{sec:exp.landmark}
In the preceding two subsections, we fixed the number of landmark points, $r$, to be $125$ and placed them on a regular grid within each subdomain. Here, we study the effect of $r$ and the locations.

\begin{figure}[ht]
\centering
\subfigure[Using ground truth parameters]{
  \includegraphics[width=.4\linewidth]{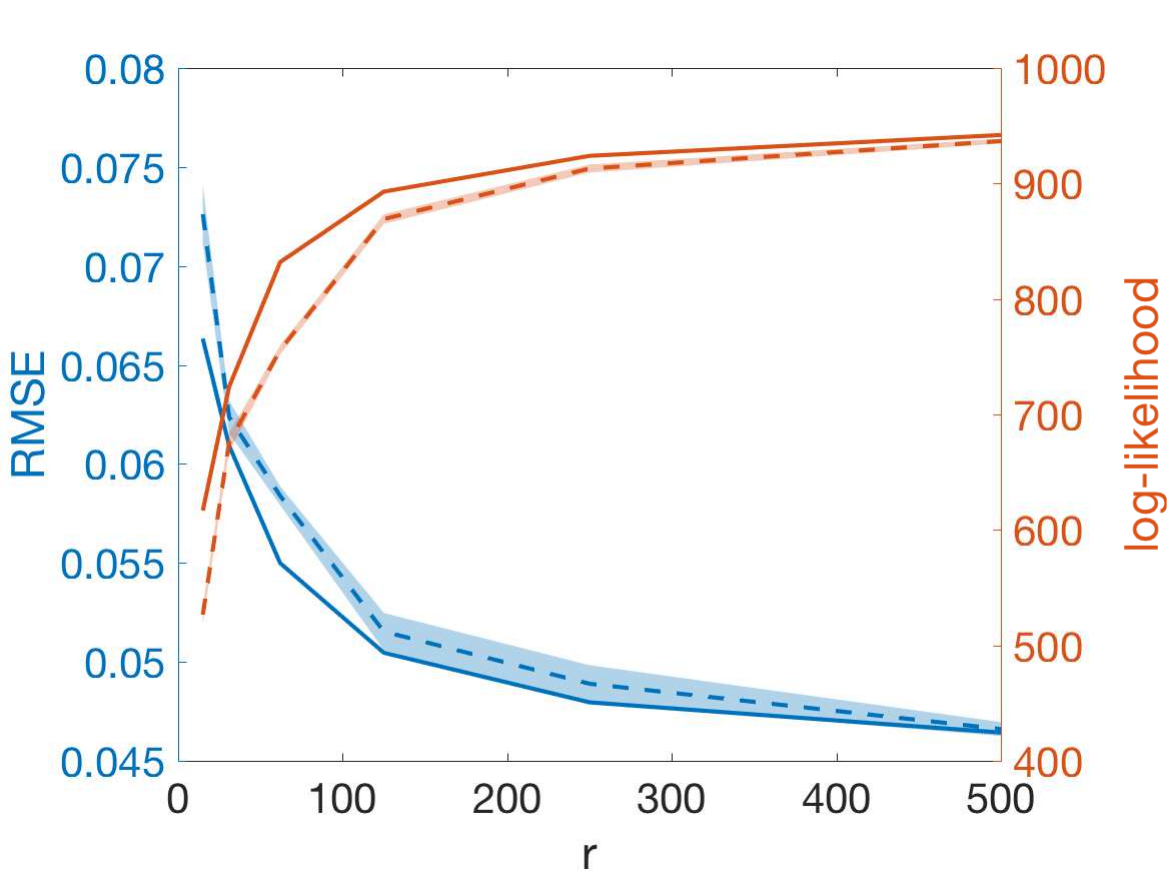}}
\subfigure[Using a different set of parameters]{
  \includegraphics[width=.4\linewidth]{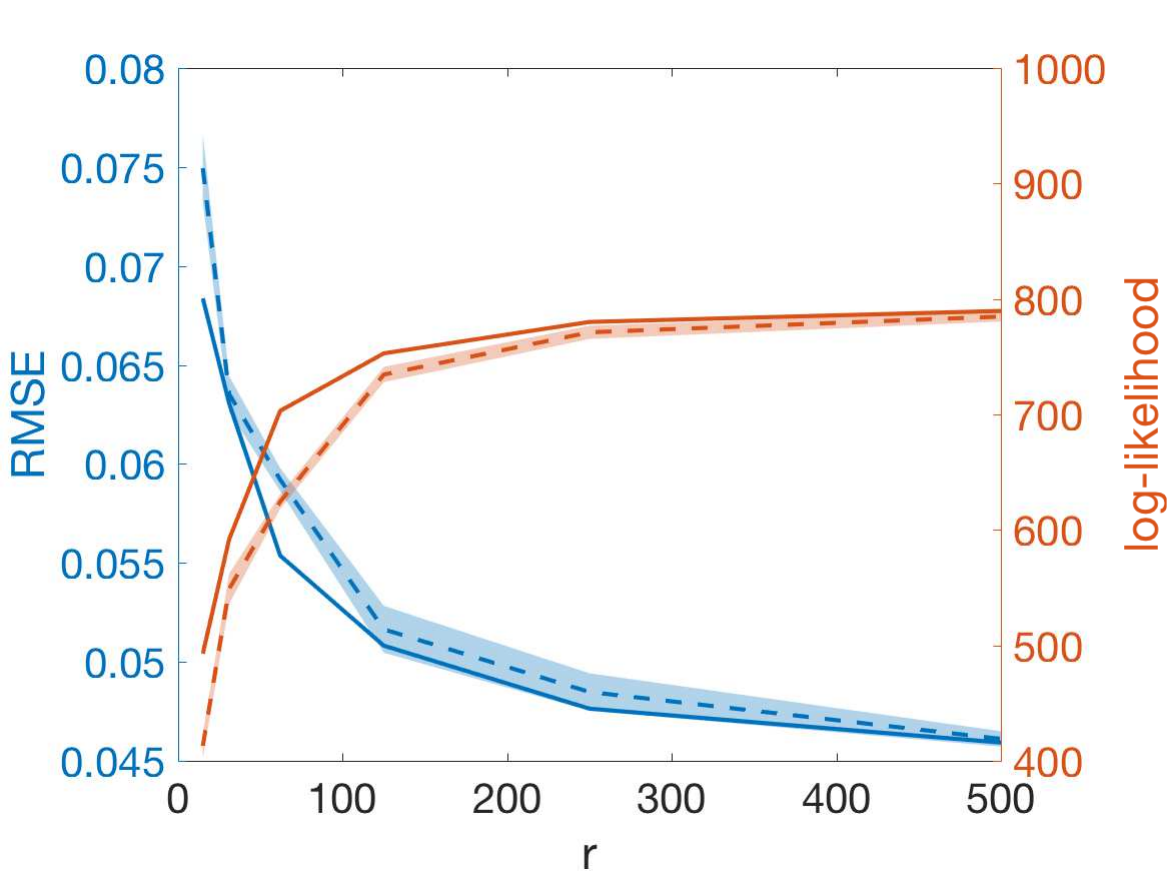}}
\caption{Kriging error and log-likelihood as $r$ varies. The solid curve corresponds to a regular grid configuration of landmark points, whereas the dashed curve with shaded region corresponds to randomized landmark points (repeated 30 times).}
\label{fig:exp.landmark}
\end{figure}

In Figure~\ref{fig:exp.landmark}, we show two plots on the kriging error and the log-likelihood, one obtained by using the ground truth parameters $[\alpha,\ell,\nu]=[0,0.2,2.5]$ and the other by using $[\alpha,\ell,\nu]=[0.2,0.24,2.7]$, which results in a noticeably different covariance function as judged from the likelihood surface exhibited in Figure~\ref{fig:exp.closed.loop.3}. The experimented values of $r$ are $7$, $15$, $31$, $62$, $125$, $250$, and $500$, geometrically progressing toward the number of observed sites, $n=1000$. The solid curve corresponds to a regular grid of landmark points, whereas the dashed curve corresponds to the randomized choice, with one times standard deviation shown as a shaded region. ``RMSE'' denotes root mean squared error.

One sees that the error decreases monotonically as $r$ increases. There thus forms a tradeoff between error and time, since the computational cost is quadratic in $r$. In this particular case, it appears that $125$ yields a significant decrease in RMSE while being reasonably small. The likelihood shows a similar trend of change as $r$ varies (except that it increases rather than decreases). Moreover, the randomized choice of landmark points is inferior to the regular-grid choice, considering the mean and standard deviation. However, one should note that if three times standard deviation is considered instead, the shaded region will cover the solid curve for large $r$, indicating that the advantage of regular grid diminishes as $r$ increases. Finally, an interesting observation is that the kriging error remains highly comparable when one uses less accurate covariance parameters, although in this case the reduction of likelihood is substantial.

%%%%%%%%%%%%%%%%%%%%%%%%%%%%%%%%%%%%%%%%%%%%%%%%%%%%%%%%%%%%%%%%%%%%%%%%%%%%%%%
\subsection{Comparison with Nystr\"{o}m and Block-Diagonal Approximation}\label{sec:exp.nystrom}
In this subsection, we compare with two methods: Nystr\"{o}m and block-diagonal approximation. The former is a part of our one-level construction, whereas the latter performs kriging in each fine-level subdomain independently (equivalent to applying a block-diagonal approximation of the covariance matrix $K$). The experiment setting is the same as that of the preceding subsections.

\begin{figure}[ht]
\centering
\subfigure[Compared with Nystr\"{o}m]{
  \includegraphics[width=.4\linewidth]{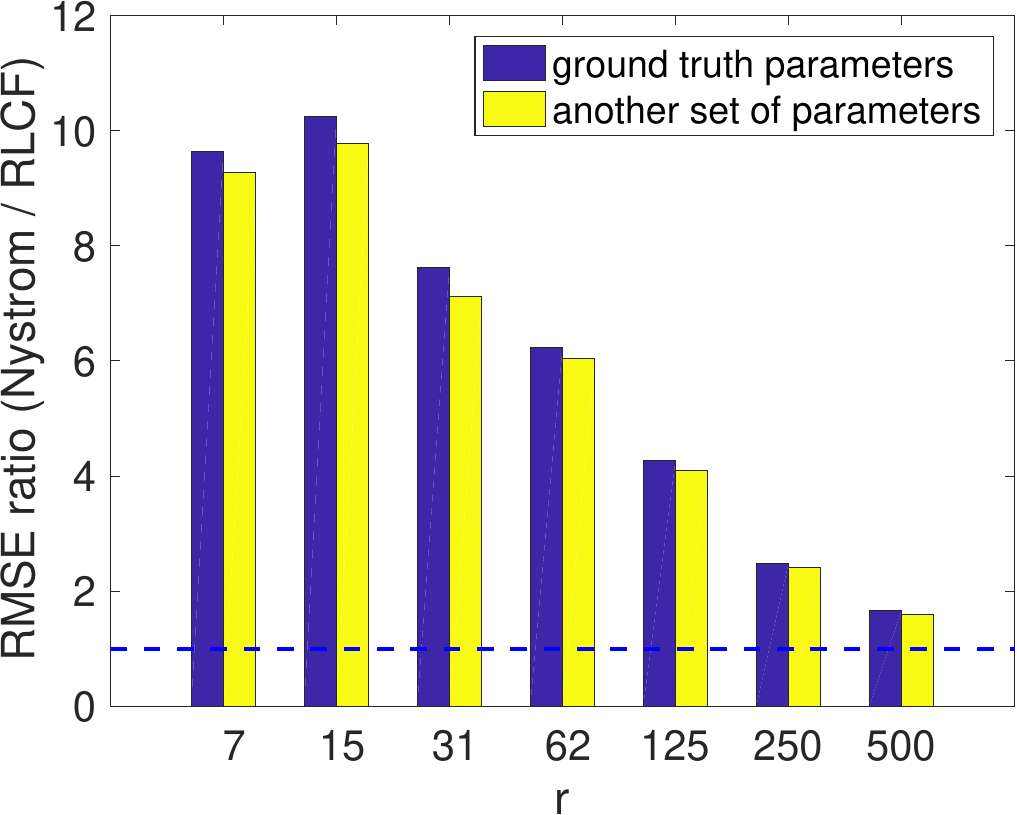}}
\subfigure[Compared with block-diagonal approx.]{
  \includegraphics[width=.4\linewidth]{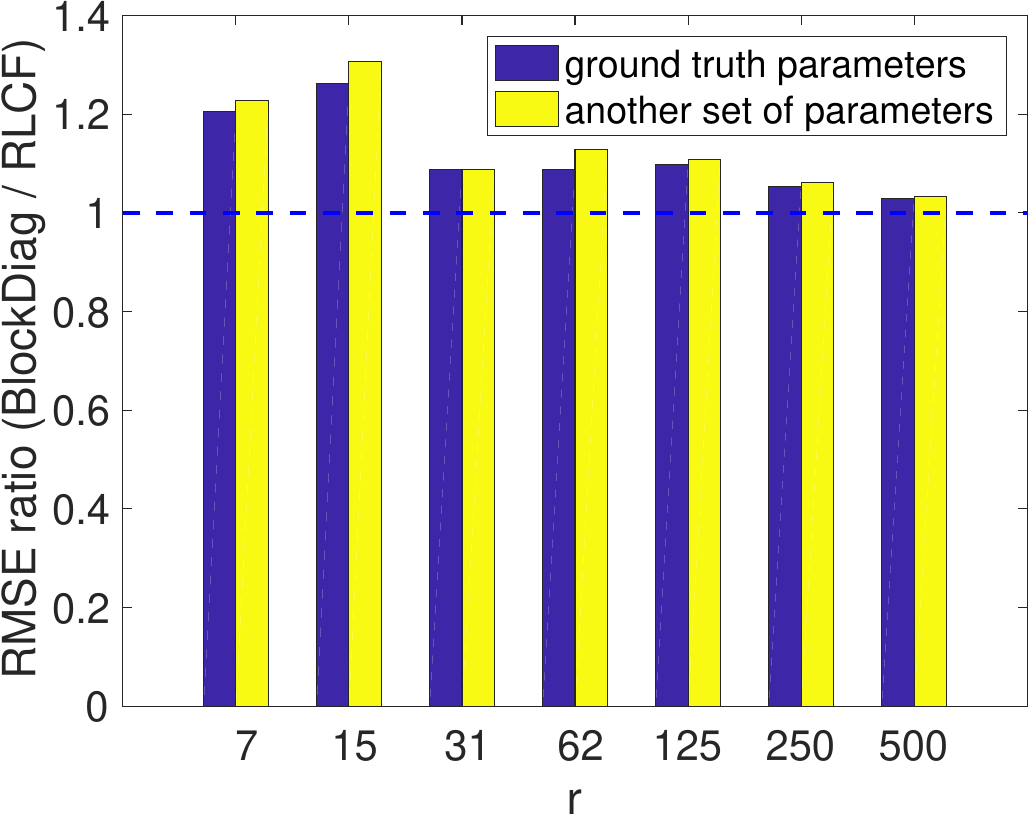}}
\caption{RMSE ratio between a compared method and the proposed method (RLCF). Ground truth parameters are $[\alpha,\ell,\nu]=[0,0.2,2.5]$ and the other set is $[\alpha,\ell,\nu]=[0.2,0.24,2.7]$.}
\label{fig:exp.nystrom}
\end{figure}

Figure~\ref{fig:exp.nystrom}(a) shows the kriging error of Nystr\"{o}m normalized by that of the proposed method. First, all error ratios are greater than one, indicating that the hierarchical approach clearly strengthens the approximation with only one level as in Nystr\"{o}m. Moreover, this observation is consistent regardless of what covariance parameters are used. Interestingly, the ratio is slightly smaller when the used parameters are less accurate, suggesting that one-level approximation appears to suffer less when the parameters are not close to the ground truth. Finally, as $r$ increases, the error ratio generally decreases, which is expected since the number of levels that strengthen the approximation becomes fewer. Nystr\"{o}m performs disastrously in light of the fact that the error ratio is greater than 2 when $r<500$.

Similarly, Figure~\ref{fig:exp.nystrom}(b) shows the kriging error of block-diagonal approximation, normalized. This method performs much better than Nystr\"{o}m, with the normalized errors only slightly greater than 1. Interestingly, contrary to Nystr\"{o}m, this method suffers more when the parameters are not close to the ground truth. Since the method performs essentially local kriging by ignoring the long-range correlation, this phenomenon is expected.

%%%%%%%%%%%%%%%%%%%%%%%%%%%%%%%%%%%%%%%%%%%%%%%%%%%%%%%%%%%%%%%%%%%%%%%%%%%%%%%
\subsection{Scaling}\label{sec:exp.scaling}
In this subsection, we verify that the linear algebra costs for the proposed method indeed agree with the theoretical analysis. Namely, random field simulation and log-likelihood evaluation are both $O(n)$, and the kriging of $m=n$ sites is $O(n\log n)$. Note that all these computations require the construction of the covariance matrix, which is $O(n\log n)$.
%In this subsection, we verify that the linear algebra costs for the proposed method indeed agree with the analysis in Section~\ref{sec:cost.analysis}. Namely, random field simulation (involving Algorithms~\ref{algo:cholA} and~\ref{algo:Ab}) and log-likelihood evaluation (involving Algorithms~\ref{algo:invA}, \ref{algo:detA}, and~\ref{algo:Ab}) are both $O(n)$, and the kriging of $m=n$ sites (involving Algorithms~\ref{algo:invA}, \ref{algo:Ab}, \ref{algo:inprod}, and~\ref{algo:quadratic.form}) is $O(n\log n)$. Note that all these computations include the construction of the covariance matrix (Algorithm~\ref{algo:constructing.A}).

The experiment setting is the same as that of the preceding subsections, except that we restrict the number of log-likelihood evaluations to $125$ to avoid excessive computation. We vary the grid size from $40\times50$ to $640\times800$ to observe the scaling. The random removal of sites has a minimal effect on the partitioning and hence on the overall time. The computation is carried out on a laptop with eight Intel cores (CPU frequency 2.8GHz) and 32GB memory. Six parallel threads are used.

\begin{figure}[ht]
\centering
\subfigure[Random field simulation]{
  \includegraphics[width=.31\linewidth]{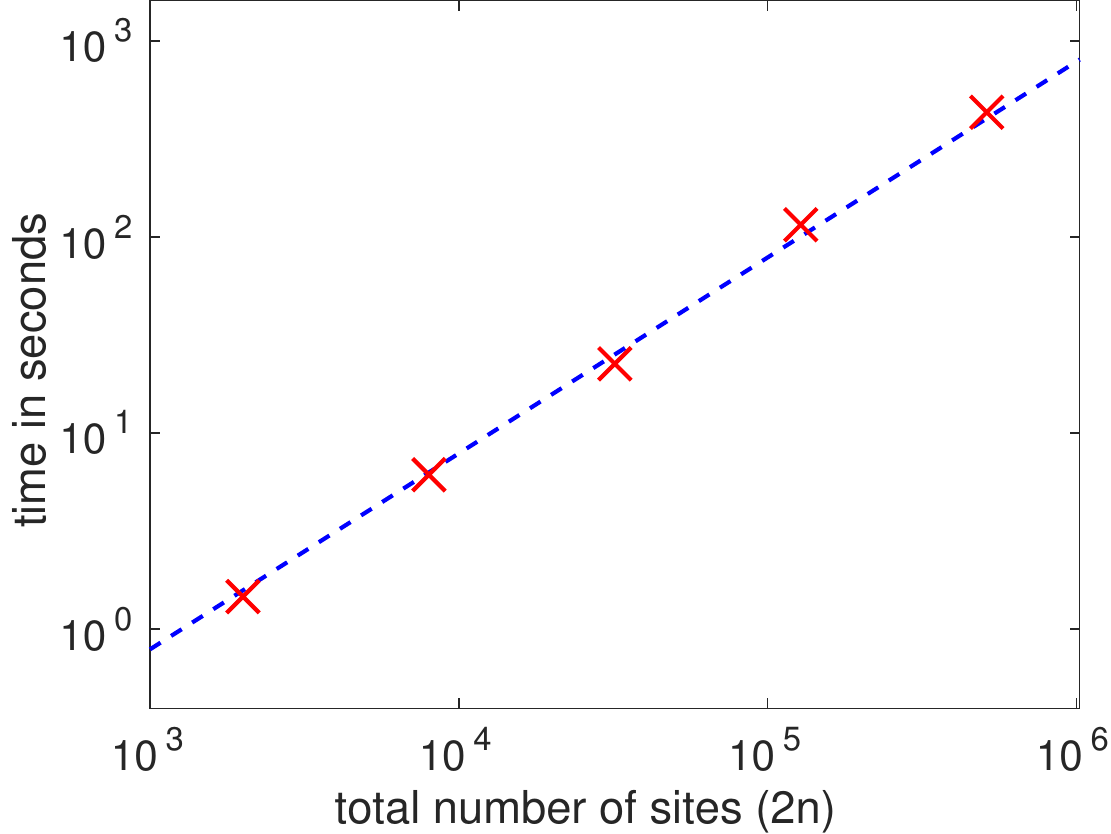}}
\subfigure[125 Log-likelihood evaluations]{
  \includegraphics[width=.31\linewidth]{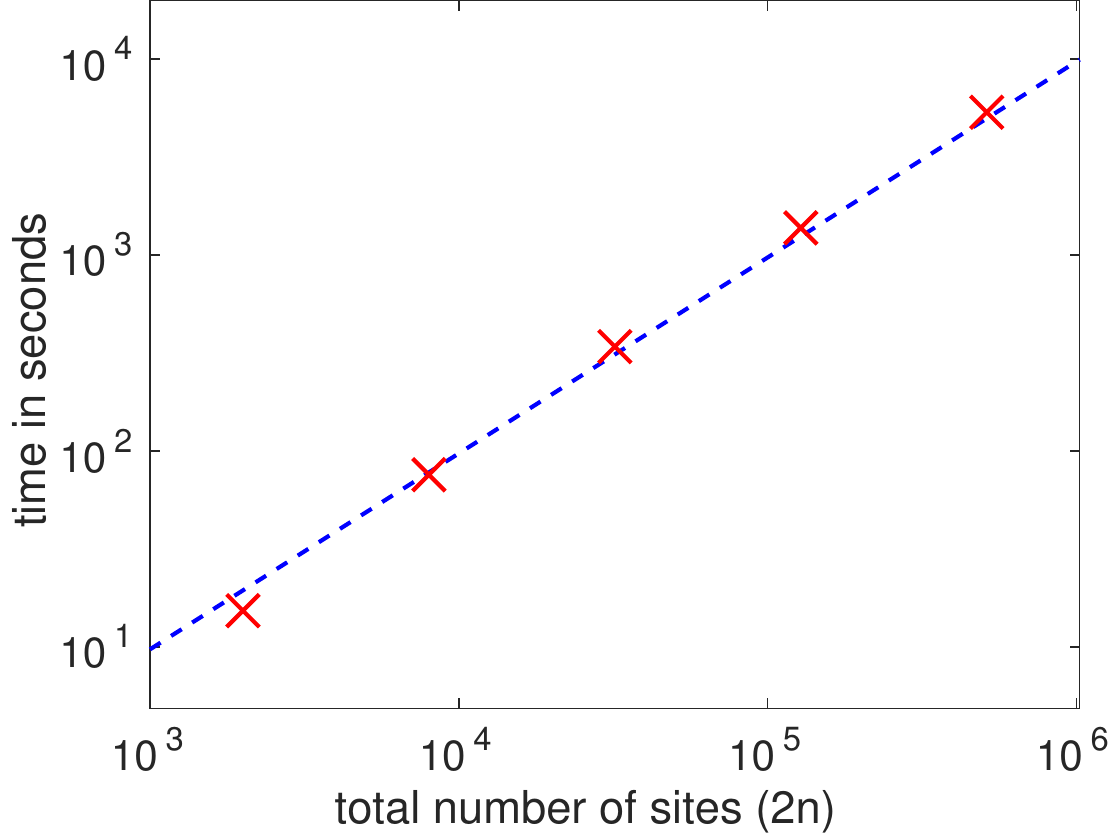}}
\subfigure[Kriging $n$ sites]{
  \includegraphics[width=.31\linewidth]{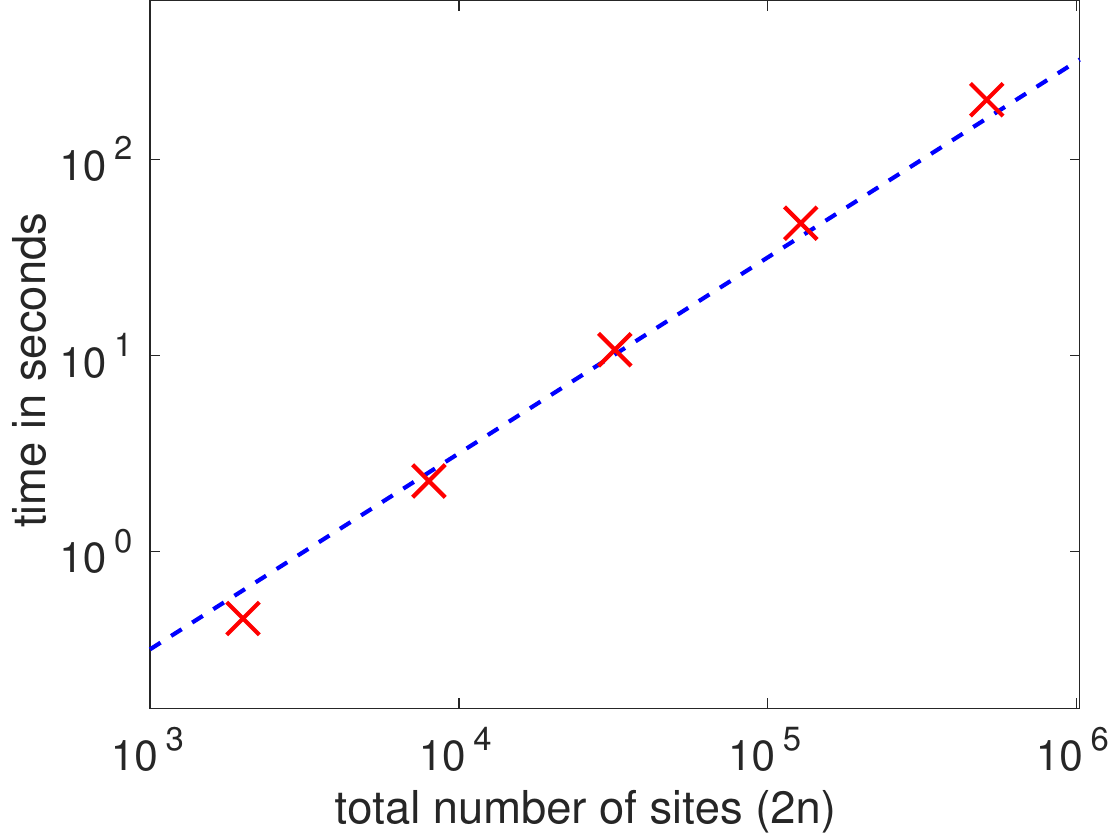}}
\caption{Computation time. The dashed blue line is an $O(n)$ scaling.}
\label{fig:exp.scaling}
\end{figure}

Figure~\ref{fig:exp.scaling} plots the computation times, which indeed well agree with the theoretical scaling. As expected, log-likelihood evaluations are the most expensive, particularly when many evaluations are needed for optimization. The simulation of a random field follows, with kriging being the least expensive, even when a large number of sites are kriged.

%%%%%%%%%%%%%%%%%%%%%%%%%%%%%%%%%%%%%%%%%%%%%%%%%%%%%%%%%%%%%%%%%%%%%%%%%%%%%%%
\subsection{Large-Scale Example Using Test Function}\label{sec:exp.large.scale}
The above scaling results confirm that handling a large $n$ is feasible on a laptop. In this subsection, we perform an experiment with up to one million data sites. Different from the closed-loop setting that uses a known covariance model, here we generate data by using a test function. We estimate the covariance parameters and krige with the estimated model.

The test function is
\begin{equation}\label{eqn:test.func}
Z(\bm{x}) = \exp(1.4x_1)\cos(3.5\pi x_1)[\sin(2\pi x_2)+0.2\sin(8\pi x_2)]
\end{equation}
on $[0,1]^2$. This function is rather smooth (see Figure~\ref{fig:exp.closed.loop}(a) for an illustration). Hence, we use the squared-exponential model~\eqref{eqn:sq.exp} for estimation. The high smoothness results in a too ill-conditioned matrix; therefore, a nugget is necessary. The vector of parameters is $\bm{\theta}=[\alpha, \ell, \tau]^T$. We inject independent Gaussian noise $\mathcal{N}(0,\,0.1^2)$ to the data so that the nugget will not vanish. As before, we randomly select half of the sites for parameter estimation and the other half for kriging. The number of landmark points, $r$, remains $125$.

Our strategy for large-scale estimation is to first perform a small-scale experiment with the base covariance function $k$ that quickly locates the optimum. The result serves as a reference for later use of the proposed $k_{\rm{h}}$ in the larger-scale setting.
% In particular, we perform estimation with $k$ on a $50\times50$ grid and with $k_{\rm{h}}$ on grids of size $100\times100$ and $1000\times1000$.
The results are shown in Figure~\ref{fig:exp.closed.loop} (for the largest grid) and Table~\ref{tab:exp.large.scale}.

\begin{figure}[ht]
\centering
\subfigure[Test function]{
  \includegraphics[width=.31\linewidth]{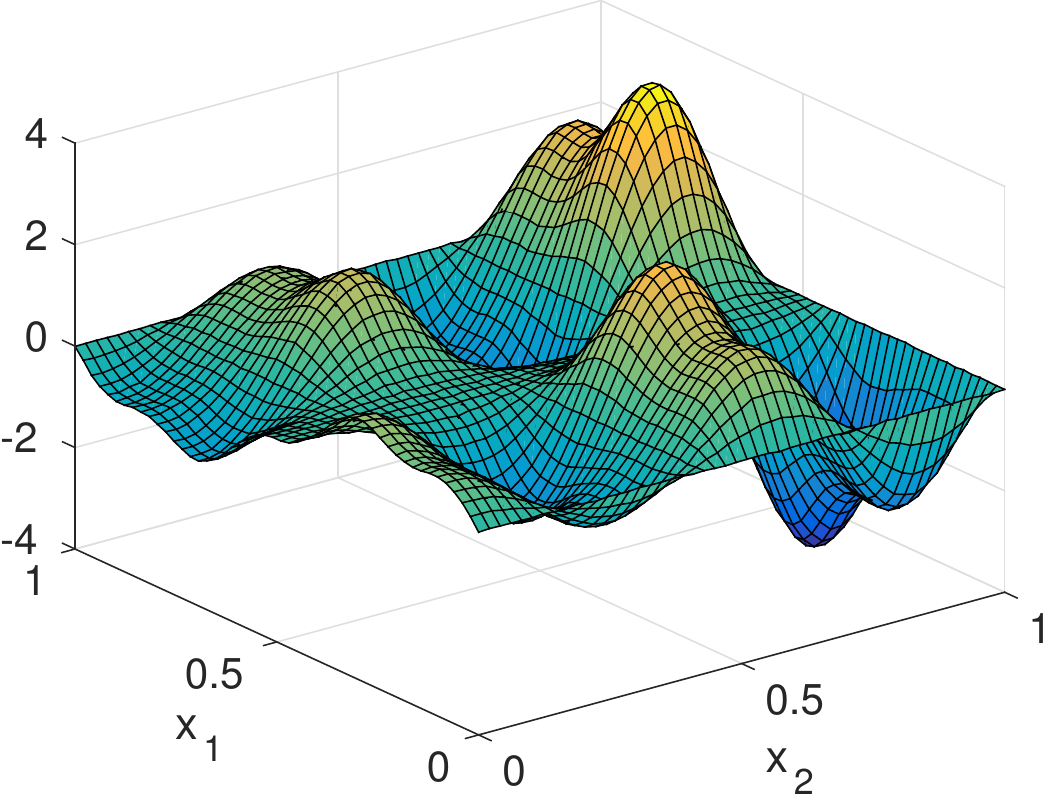}}\hspace{.3cm}
\subfigure[Kriged field]{
  \includegraphics[width=.26\linewidth]{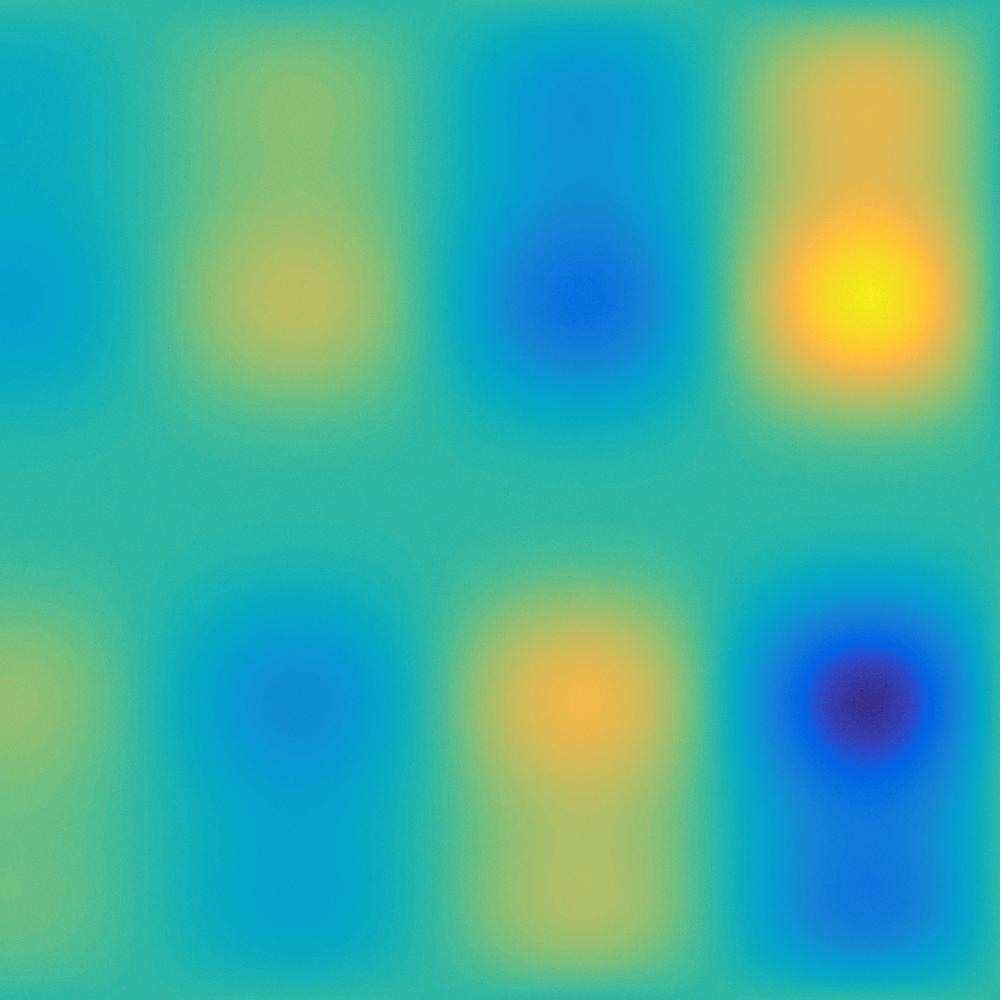}}\hspace{.3cm}
\subfigure[Kriging error]{
  \includegraphics[width=.31\linewidth]{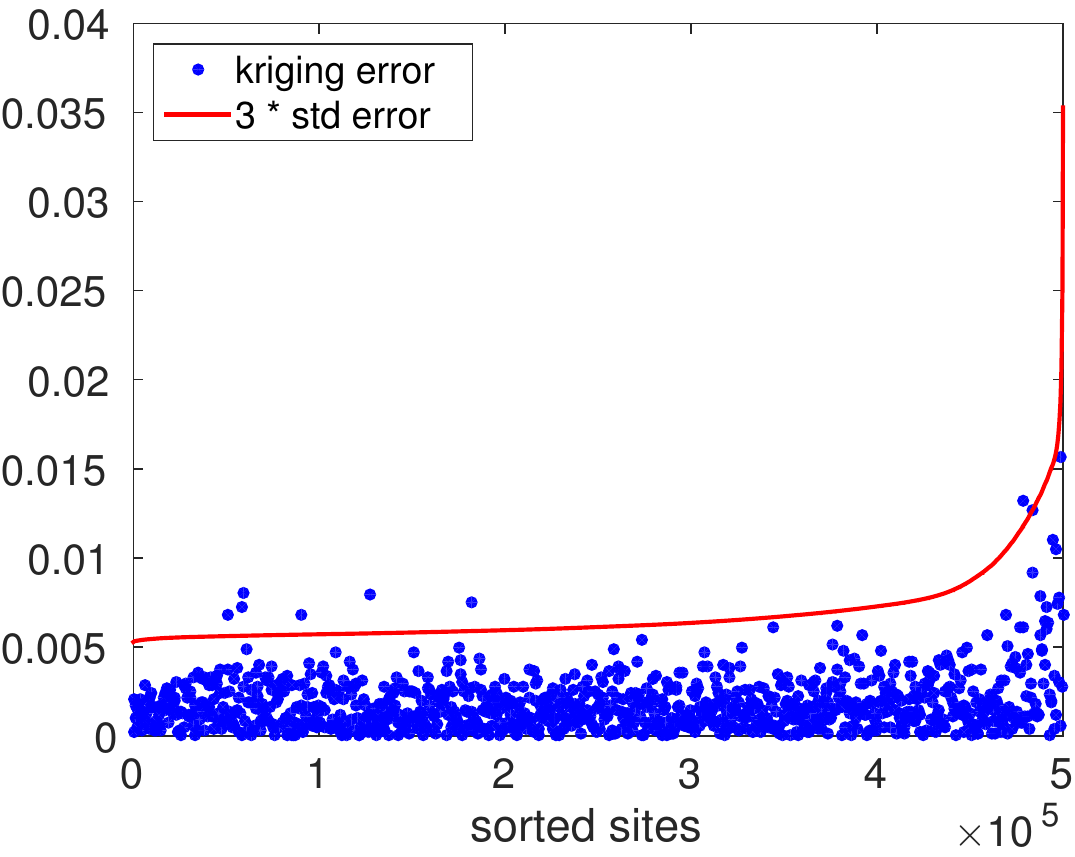}}
\subfigure[$\alpha$-$\ell$ plane]{
  \includegraphics[width=.31\linewidth]{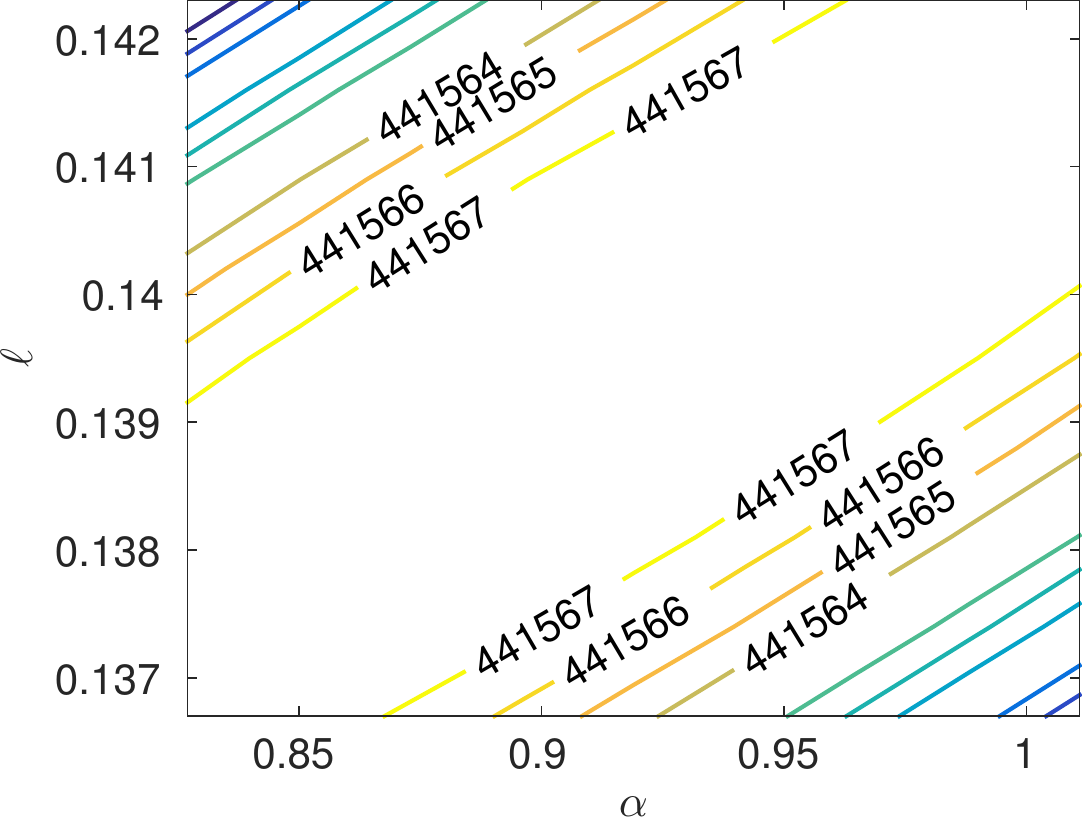}}
\subfigure[$\alpha$-$\tau$ plane]{
  \includegraphics[width=.31\linewidth]{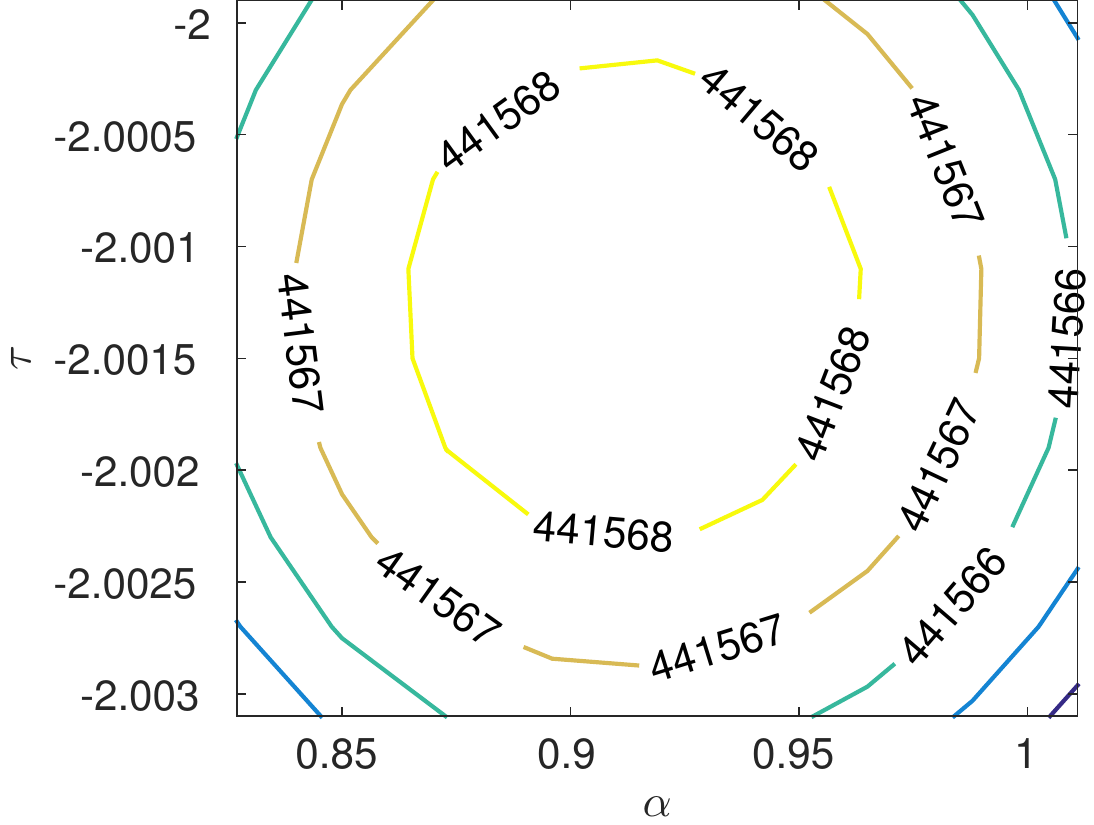}}
\subfigure[$\ell$-$\tau$ plane]{
  \includegraphics[width=.31\linewidth]{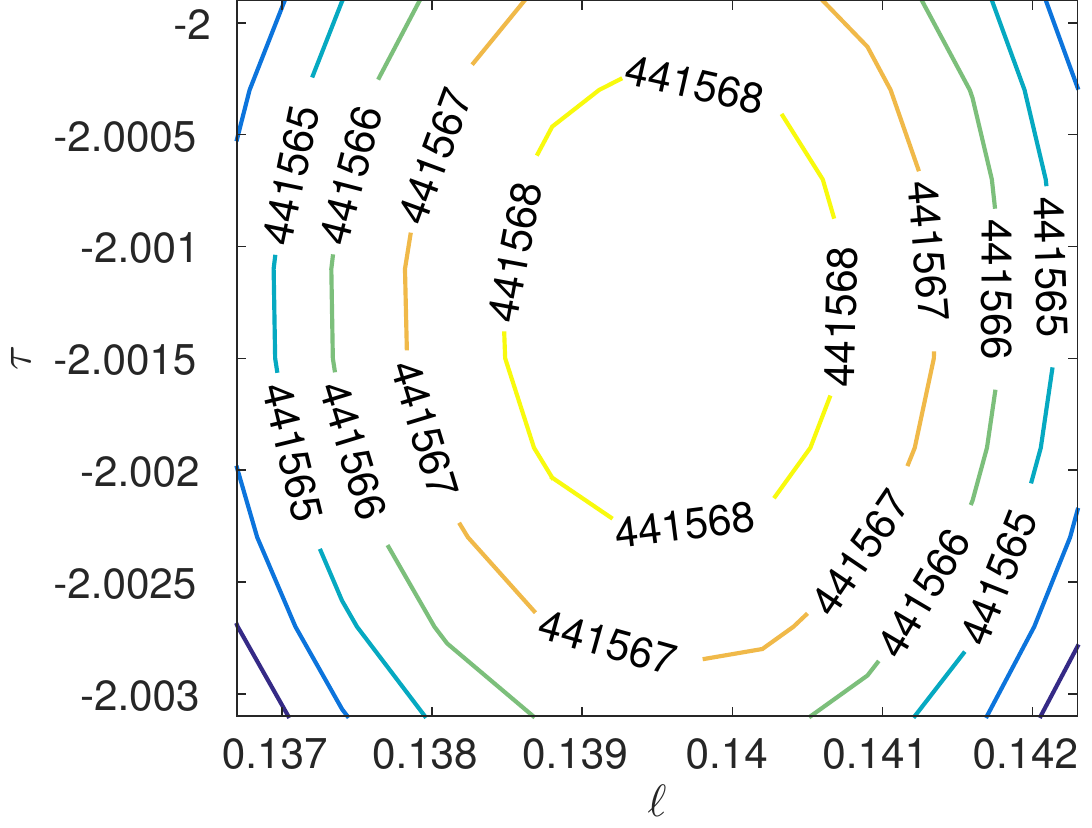}}
\caption{Top row: test function and kriging results; bottom row: log-likelihood. For plot (c), the blue dots are subsampled evenly so that they do not clutter the figure.}
\label{fig:exp.closed.loop}
\end{figure}

\begin{table}[ht]
\centering
\caption{Estimated parameters.}
\label{tab:exp.large.scale}
\begin{tabular}{ccc@{$\,\,$}cc@{$\,\,$}cc@{$\,\,$}c}
\hline
Grid & Est. w/
& \multicolumn{2}{c}{$\widehat{\alpha}$}
& \multicolumn{2}{c}{$\widehat{\ell}$}
& \multicolumn{2}{c}{$\widehat{\tau}$}\\
\hline
$50\times50$ & $k$          
& $0.313$ & $(0.098)$ & $0.1199$ & $(0.0035)$ & $-2.0109$ & $(0.0186)$\\
$100\times100$ & $k_{\rm{h}}$ 
& $0.389$ & $(0.095)$ & $0.1238$ & $(0.0029)$ & $-1.9923$ & $(0.0089)$\\
$1000\times1000$ & $k_{\rm{h}}$ 
& $0.919$ & $(0.134)$ & $0.1395$ & $(0.0031)$ & $-2.0011$ & $(0.0009)$\\
\hline
\end{tabular}
\end{table}

Each of the cross sections of the log-likelihood on the bottom row of Figure~\ref{fig:exp.closed.loop} is plotted by setting the unseen parameter at the estimated value. For example, the $\alpha$-$\ell$ plane is located at $\widehat{\tau}=-2.0011$. From these contour plots, we see that the estimated parameters are located at a local maximum with nicely concave contours in a neighborhood of this maximum. The estimated nugget ($\approx -2$) well agrees with $\log_{10}$ of the actual noise variance. The kriged field (plot (b)) is visually as smooth as the test function. The kriging errors for predicting the test function $Z(\cdot)$, again sorted by their estimated standard errors, are plotted in (c). As one would expect, nearly all of the errors are less than three times their estimated standard errors. Note that the kriging errors are counted without the perturbed noise; they are substantially lower than the noise level.

%%%%%%%%%%%%%%%%%%%%%%%%%%%%%%%%%%%%%%%%%%%%%%%%%%%%%%%%%%%%%%%%%%%%%%%%%%%%%%%
\subsection{Comparison with MRA}\label{sec:exp.mra}
We use the test function~\eqref{eqn:test.func} in the preceding subsection to further compare the proposed method with MRA~\citep{Katzfuss2017} on kriging and maximum likelihood. Both methods perform a hierarchical decomposition. Our method defines the covariance structure in a bottom-up manner across the partitioning tree and translates it to a recursive low-rank compressed matrix that admits $O(n)$ complexity, while MRA decomposes the random field in a top-down fashion along the tree and yields $O(n\log^pn)$ computational costs for certain $p$'s, suppressing dependency on $r$. The terminology \emph{knots} in MRA plays a similar role as \emph{landmark points} in our method, but the resulting covariance structure is quite different.

\begin{figure}[ht]
\centering
\includegraphics[width=.4\linewidth]{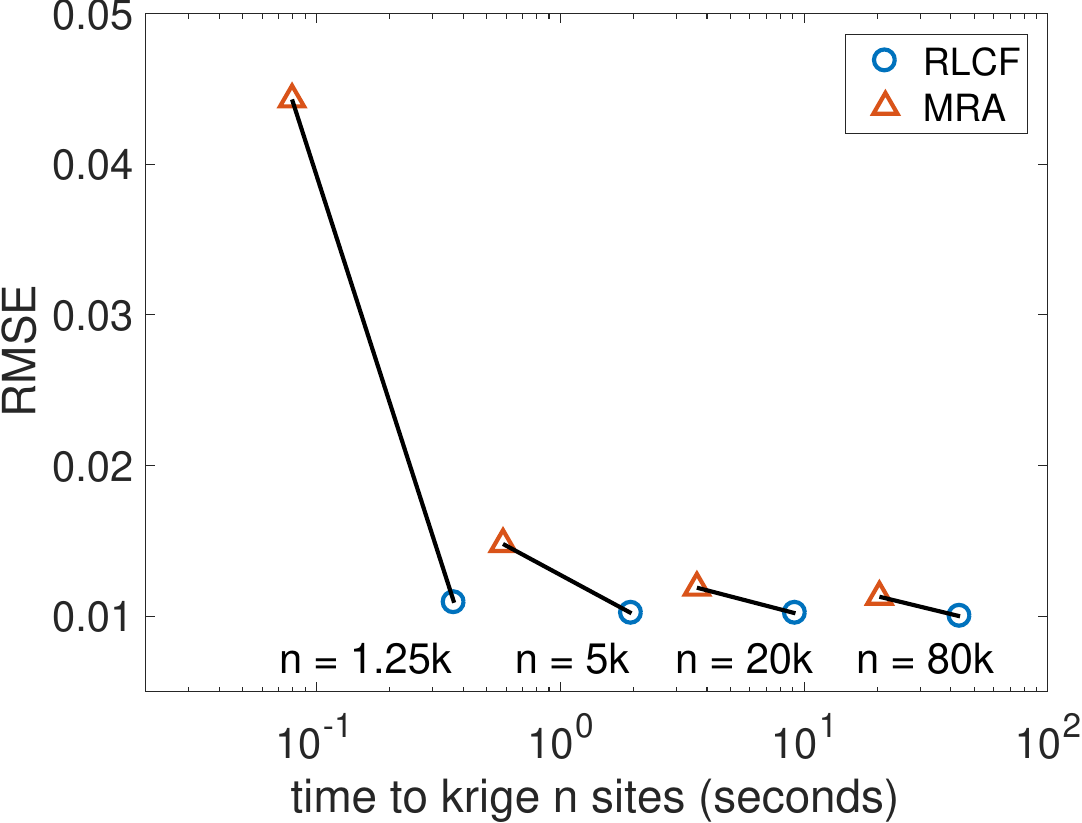} \quad
\includegraphics[width=.4\linewidth]{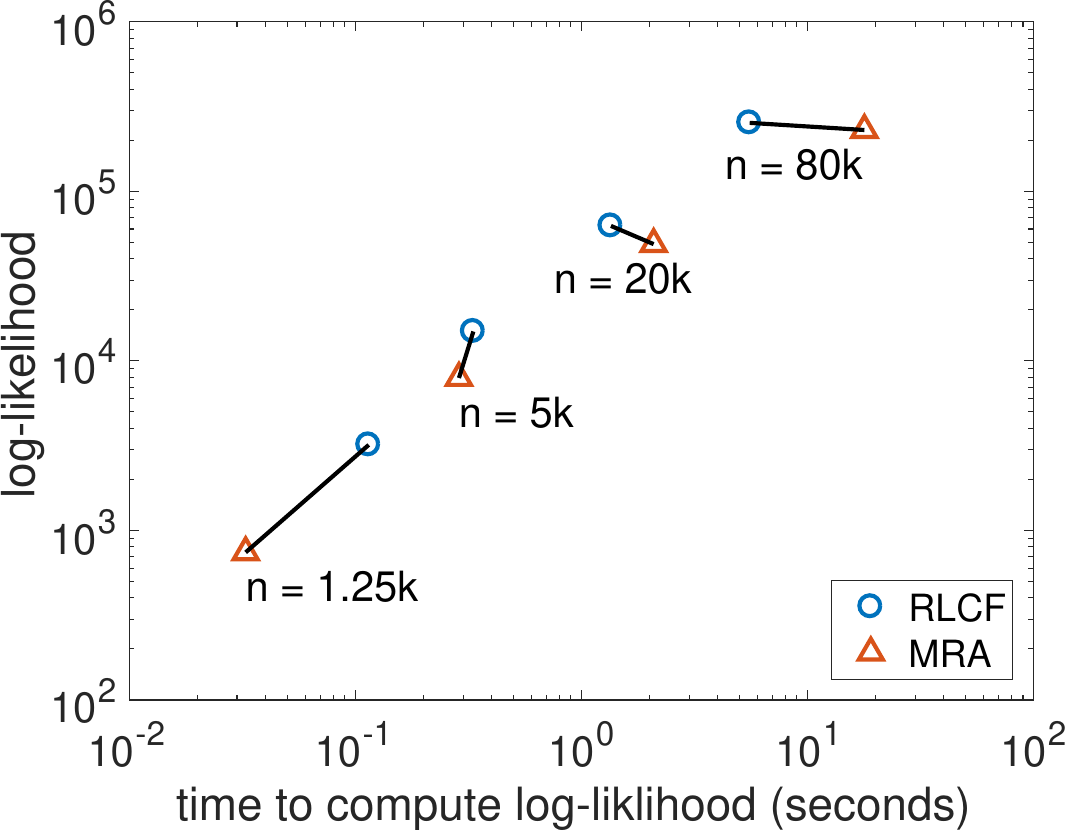} \\
\vskip 10pt
\begin{tabular}{ccccc}
\hline
$n$ & 1.25k & 5k & 20k & 80k \\
\hline
RMSE$_{\rm{MRA}}$ $-$ RMSE$_{\rm{RLCF}}$ &
0.0334 & 0.0046 & 0.0018 & 0.0013\\
(RMSE$_{\rm{MRA}}$ $-0.01$) $/$ (RMSE$_{\rm{RLCF}}$ $-0.01$) &
37.04 & 19.40 & 11.37 & 108.21\\
\hline
log-likelihood$_{\rm{RLCF}}$ $-$ log-likelihood$_{\rm{MRA}}$ &
2455 & 6983 & 13703 & 23478\\
log-likelihood$_{\rm{RLCF}}$ $/$ log-likelihood$_{\rm{MRA}}$ &
4.3046 & 1.8800 & 1.2804 & 1.1020\\
\hline
\end{tabular}
\caption{Comparison with MRA. The quantity $n$ is both the number of observations and the number of kriging sites. For each $n$, the hierarchical partitioning yields the same tree and we use the same $r$. Covariance parameters for each case are individually estimated.}
\label{fig:exp.mra}
\end{figure}

We follow almost the same setting as in the preceding subsection, except to inject $\mathcal{N}(0, 0.01^2)$ noise for a smaller RMSE. The MRA code is the C++ implementation suggested by \url{https://github.com/katzfuss-group/MRA_JASA}, for fair comparison. The computing platform is the same as that in Section~\ref{sec:exp.scaling}. We experiment with a few grid sizes, for each of which we first optimize the log-likelihood on half of the randomly sampled data to estimate covariance parameters, and then perform kriging on the rest of the data. In both methods, we fix $r=125$ and let the tree height be $h=\lfloor\log_2(n/r)\rfloor$.

Figure~\ref{fig:exp.mra} plots RMSE/log-likelihood versus time. A few observations follow. First, with the same hierarchical partitioning and $r$, our method yields lower RMSE (nearly the noise level) and higher log-likelihood. More appealingly, when $n$ grows, the absolute log-likelihood difference increases. Second, both methods obey the $O(n)$ trend (ignoring the logarithmic factor), because the time approximately follows an arithmetic progression under the logarithmic scale. Third, our method calculates log-likelihood faster at large $n$, whereas slower in other cases compared with MRA. For kriging, the consistently slower time is probably caused by a larger constant factor in the big-O complexity. For log-likelihood, one observes that the spacing in elapsed time is different across the two methods. MRA has a bigger spacing, due to an additional $r$ factor in the big-O complexity.

%%%%%%%%%%%%%%%%%%%%%%%%%%%%%%%%%%%%%%%%%%%%%%%%%%%%%%%%%%%%%%%%%%%%%%%%%%%%%%%
\section{Analysis of Climate Data}\label{sec:exp.noaa}
In this section, we apply the proposed method to analyze a climate data product developed by the National Centers for Environmental Prediction (NCEP) of the National Oceanic and Atmospheric Administration (NOAA).\footnote{\url{https://www.ncdc.noaa.gov/data-access/model-data/model-datasets/climate-forecast-system-version2-cfsv2}} The Climate Forecast System Reanalysis (CFSR) data product~\citep{Saha2010} offers hourly time series as well as monthly means data with a resolution down to one-half of a degree (approximately 56 km) around the Earth, over a period of 32 years from 1979 to 2011. For illustration purpose, we extract the temperature variable at 500 mb height from the monthly means data and show a snapshot on the top of Figure~\ref{fig:noaa}.   Temperatures at this pressure (generally around a height of 5 km) provide a good summary of large-scale weather patterns and should be more nearly stationary than surface temperatures. We will estimate a covariance model for every July over the 32-year period.

\begin{figure}[ht]
\centering
\includegraphics[width=0.8\linewidth]{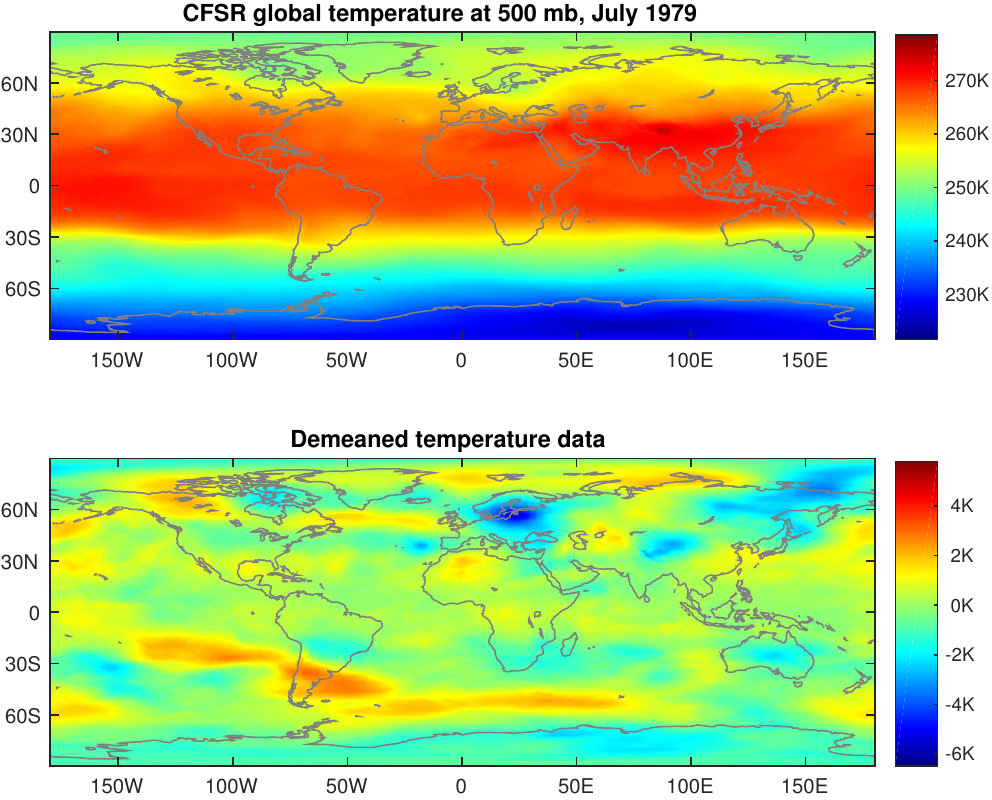}
\caption{Snapshot of CFSR global temperature at 500 mb and the resulting data after subtraction of pixelwise mean for the same month over 32 years.}
\label{fig:noaa}
\end{figure}

Through preliminary investigations, we find that the data appears fairly Gaussian after a subtraction of pixelwise mean across time. An illustration of the demeaned data for the same snapshot is given at the bottom of Figure~\ref{fig:noaa}. Moreover, the correlation between the different snapshots are so weak that we shall treat them as independent anomalies.  Although temperatures have warmed during this period, the warming is modest compared to the interannual variation in temperatures at this spatial resolution, so we assume the anomalies have mean 0. We use $\bm{z}_i$ to denote the anomaly at time $i$. Then, the log-likelihood with $N=32$ zero-mean independent anomalies $\bm{z}_i$ is
\[
\mathcal{L}=-\sum_{i=1}^N\frac{1}{2}\bm{z}_i^TK^{-1}\bm{z}_i-\frac{N}{2}\log\det K-\frac{Nn}{2}\log2\pi.
\]

For random fields on a sphere, a reasonable covariance function for a pair of sites $\bm{x}$ and $\bm{x}'$ may be based on their great-circle distance, or equivalently the chordal distance, because of their monotone relationship. Specifically, let a site $\bm{x}$ be represented by latitude $\phi$ and longitude $\psi$. Then, the chordal distance between two sites $\bm{x}$ and $\bm{x}'$ is
\begin{equation}\label{eqn:chordal}
r=2\left[\sin^2\left(\frac{\phi-\phi'}{2}\right)+\cos\phi\cos\phi'\sin^2\left(\frac{\psi-\psi'}{2}\right)\right]^{1/2}.
\end{equation}
Here, we assume that the radius of the sphere is $1$ for simplicity, because it can always be absorbed into a range parameter later. We still use the Mat\'{e}rn model
\begin{equation}\label{eqn:matern2}
k(\bm{x},\bm{x}')=
\frac{10^{\alpha}}{2^{\nu-1}\Gamma(\nu)}
\left(\frac{\sqrt{2\nu}r}{\ell}\right)^{\nu}
\bessel_{\nu}\left(\frac{\sqrt{2\nu}r}{\ell}\right)
+10^{\tau}\cdot\bm{1}(r=0)
\end{equation}
to define the covariance function, where $r$ is the chordal distance~\eqref{eqn:chordal}, so that the model is isotropic on the sphere. More sophisticated models based on the same Mat\'{e}rn function and the chordal distance $r$ are proposed in~\citep{Jun2008}. Note that this model depends on the longitudes for $\bm{x}$ and $\bm{x}'$ only through their differences modulo $2\pi$. Such a model is called \emph{axially symmetric}~\citep{Jones1963}.

A computational benefit of an axially symmetric model and gridded observations is that one may afford computations with $k$ even when the latitude-longitude grid is dense. The reason is that for any two fixed latitudes, the cross-covariance matrix between the observations is circulant and diagonalizing it requires only one discrete Fourier transform (DFT), which is efficient. Thus, diagonalizing the whole covariance matrix amounts to diagonalizing only the blocks with respect to each longitude, apart from the DFT's for each latitude.
% More concretely, let $n_{\phi}$ and $n_{\psi}$ be the number of latitudes and longitudes, respectively. Then, the size of the covariance matrix $K$ is $(n_{\phi}n_{\psi})\times(n_{\phi}n_{\psi})$. Rather than diagonalizing it by using the standard method with an $O((n_{\phi}n_{\psi})^3)$ time cost and $O((n_{\phi}n_{\psi})^2)$ storage cost, because of the stationarity and periodicity in the longitudes, one may diagonalize $K$ by using $n_{\phi}(n_{\phi}+1)/2$ DFT's on arrays of length $n_{\psi}$, followed by $\lceil (n_{\psi}+1)/2 \rceil$ diagonalizations on submatrices of size $n_{\phi}\times n_{\phi}$. The ``divided-by-2'' in the counts comes from symmetry. The overall time cost is $O(n_{\phi}^2n_{\psi}(n_{\phi}+\log n_{\psi}))$ and storage cost is $O(n_{\phi}^2n_{\psi})$.

Hence, we will perform computations with both the base covariance function $k$ and the proposed function $k_{\rm{h}}$ and compare the results. We subsample the grid with every other latitude and longitude for parameter estimation. We also remove the two grid lines 90N and 90S due to their degeneracy at the pole. Because of the half-degree resolution, this results in a coarse grid of size $180\times360$ for parameter estimation, for a total of $180\times 360\times 32 = 2{,}073{,}600$ observations. The rest of the grid points are used for kriging. As before, we set the number $r$ of landmark points to be $125$.

\begin{table}[ht]
\centering
\caption{Optimization results for different $\nu$'s using the base covariance function $k$.}
\label{tab:exp.noaa0}
\begin{tabular}{crrrrrrc}
\hline
$\nu$
& \multicolumn{3}{c}{Initial guess $\bm{\theta}_0$}
& \multicolumn{3}{c}{Terminate at $\widehat{\bm{\theta}}$}
& Log-likelihood\\
\hline
$0.5$
& ($-0.285$ & $ 0.156$ & $-4.935$) & ($-0.794$ & $ 1.446$ & $-7.165$) & $ 3.938\times10^6$ \\
& ($-0.794$ & $ 1.446$ & $-7.165$) & \multicolumn{3}{c}{diverge} \\
\hline
$1.0$
& ($-0.285$ & $ 0.156$ & $-4.935$) & ($-0.279$ & $ 0.411$ & $-5.133$) & $ 4.696\times10^6$ \\
& ($-0.279$ & $ 0.411$ & $-5.133$) & ($\phantom{-}0.838$ & $ 1.494$ & $-5.125$) & $ 4.700\times10^6$ \\
\hline
$1.5$
& ($\phantom{-}0.124$ & $ 0.215$ & $-4.933$) & ($-0.285$ & $ 0.156$ & $-4.935$) & $ 4.757\times10^6$ \\
& ($-0.285$ & $ 0.156$ & $-4.935$) & ($-0.285$ & $ 0.156$ & $-4.935$) & $ 4.757\times10^6$ \\
\hline
$2.0$
& ($-0.285$ & $ 0.156$ & $-4.935$) & ($-0.279$ & $ 0.094$ & $-4.933$) & $ 4.643\times10^6$ \\
& ($-0.279$ & $ 0.094$ & $-4.933$) & ($-0.545$ & $ 0.081$ & $-4.821$) & $ 4.653\times10^6$ \\
\hline
\end{tabular}
\end{table}

We set the parameter vector $\bm{\theta}=[\alpha,\ell,\tau]^T$, considering only several values for the smoothness parameter $\nu$ because of the difficulties of numerical optimization of the loglikelihood over $\nu$. To our experience, blackbox optimization solvers do not always find accurate optima. We show in Table~\ref{tab:exp.noaa0} several results of the Matlab solver \texttt{fminunc} when one varies $\nu$. For each $\nu$, we start the solver at some initial guess $\bm{\theta}_0$ until it claims a local optimum $\widehat{\bm{\theta}}$. Then, we use this optimum as the initial guess to run the solver again. Ideally, the solver should terminate at $\widehat{\bm{\theta}}$ if it indeed is an optimum. However, reading Table~\ref{tab:exp.noaa0}, one finds that this is not always the case.

When $\nu=0.5$, the second search diverges from the initial $\widehat{\bm{\theta}}$. The cross-section plots of the log-likelihood (not shown) indicate that $\widehat{\bm{\theta}}$ is far from the center of the contours. The solver terminates merely because the gradient is smaller than a threshold and the Hessian is positive-definite (recall that we \emph{minimize} the negative log-likelihood). The diverging search starting from $\widehat{\bm{\theta}}$ (with $\alpha$ and $\ell$ continuously increasing) implies that the infimum of the negative log-likelihood may occur at infinity, as can sometimes happen in our experience.

When $\nu=1.0$, although the search starting at $\widehat{\bm{\theta}}$ does not diverge, it terminates at a location quite different from $\widehat{\bm{\theta}}$, with the log-likelihood increased by about 4000, which is arguably a small amount given the number of observations. Such a phenomenon is often caused by the fact that the peak of the log-likelihood is flat (at least along some directions); hence, the exact optimizer is hard to locate. This phenomenon similarly occurs in the case $\nu=2.0$. Only when $\nu=1.5$ does restarting the optimization yield  $\widehat{\bm{\theta}}$ that is essentially the same as the initial estimate. Incidentally, the log-likelihood in this case is also the largest. Hence, all subsequent results are produced for only $\nu=1.5$.

\begin{table}[ht]
\centering
\caption{Estimation results ($\nu=1.5$).}
\label{tab:exp.noaa}
\begin{tabular}{cc@{$\,\,$}cc@{$\,\,$}cc@{$\,\,$}c}
\hline
Est. w/
& \multicolumn{2}{c}{$\widehat{\alpha}$}
& \multicolumn{2}{c}{$\widehat{\ell}$}
& \multicolumn{2}{c}{$\widehat{\tau}$}\\
\hline
$k$          
& $-0.2875$ & $(0.0047)$ & $0.15620$ & $(0.00058)$ & $-4.9360$ & $(0.0014)$\\
$k_{\rm{h}}$ 
& $-0.2275$ & $(0.0044)$ & $0.16640$ & $(0.00058)$ & $-4.9300$ & $(0.0015)$\\
\hline
\end{tabular}
\end{table}

\begin{table}[ht]
\centering
\caption{Log-likelihood (left) and root mean squared prediction error (right).}
\label{tab:exp.noaa2}
\begin{minipage}{0.45\linewidth}
\centering
\begin{tabular}{ccc}
\hline
& at $\widehat{\bm{\theta}}$ & at $\widehat{\bm{\theta}}_{\rm{h}}$ \\
\hline
Using $k$          & $4757982$ & $4756981$ \\
Using $k_{\rm{h}}$ & $4557568$ & $4558731$ \\
\hline
\end{tabular}
\end{minipage}%
\begin{minipage}{0.43\linewidth}
\centering
\begin{tabular}{ccc}
\hline
& at $\widehat{\bm{\theta}}$ & at $\widehat{\bm{\theta}}_{\rm{h}}$ \\
\hline
Using $k$          & $0.01394$ & $0.01394$ \\
Using $k_{\rm{h}}$ & $0.01556$ & $0.01556$ \\
\hline
\end{tabular}
\end{minipage}
\end{table}

Near $\widehat{\bm{\theta}}$, we further perform a local grid search and obtain finer estimates, as shown in Table~\ref{tab:exp.noaa}. One sees that the estimated parameters produced by $k$ and $k_{\rm{h}}$ are qualitatively similar, although their differences exceed the tiny standard errors. To distinguish the two estimates, we use $\widehat{\bm{\theta}}$ to denote the one resulting from $k$ and $\widehat{\bm{\theta}}_{\rm{h}}$ from $k_{\rm{h}}$. In Table~\ref{tab:exp.noaa2}, we list the log-likelihood values and the kriging errors when the covariance function is evaluated at both locations. One sees that the estimate $\widehat{\bm{\theta}}_{\rm{h}}$ is quite close to $\widehat{\bm{\theta}}$ in two important regards: first, the root mean squared prediction errors using $k$ are the same to four significant figures, and the log-likelihood under $k$ differs by 1000, which we would argue is a very small difference for more than 2 million observations.  On the other hand, $k_{\rm{h}}$ does not provide a great approximation to the loglikelihood itself and the predictions using $k_{\rm{h}}$ are slightly inferior to those using $k$ no matter which estimate is used. Figure~\ref{fig:exp.noaa} plots the log-likelihoods centered around the respectively optimal estimates. The shapes are visually identical, which supports the use of $k_{\rm{h}}$ for parameter estimation. Since kriging with $k$ is often much easier than maximizing the log-likelihood, in this data example one could use $k_{\rm{h}}$ to estimate $\bm{\theta}$ and then $k$ to krige.
%We see that the statistical efficacy of the proposed covariance function depends on the purpose to which it is put.

\begin{figure}[ht]
\centering
\subfigure[$\ell$-$\tau$ plane]{
  \includegraphics[width=.31\linewidth]{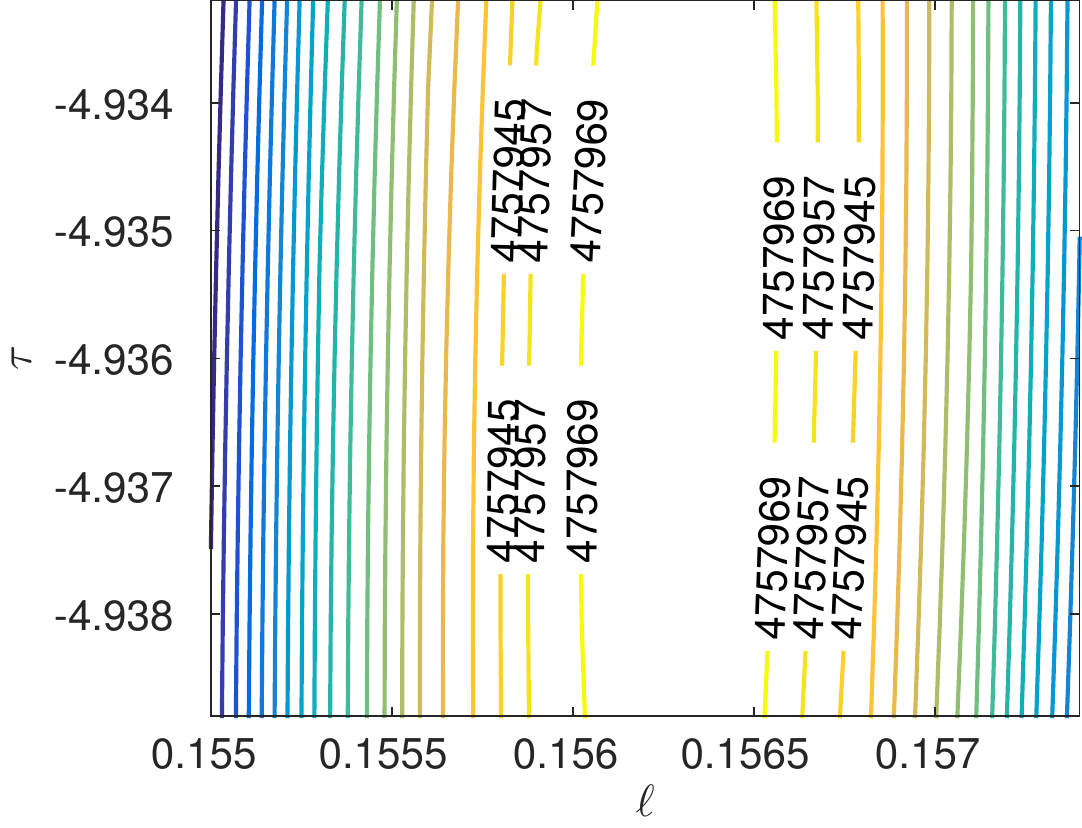}}
\subfigure[$\alpha$-$\tau$ plane]{
  \includegraphics[width=.31\linewidth]{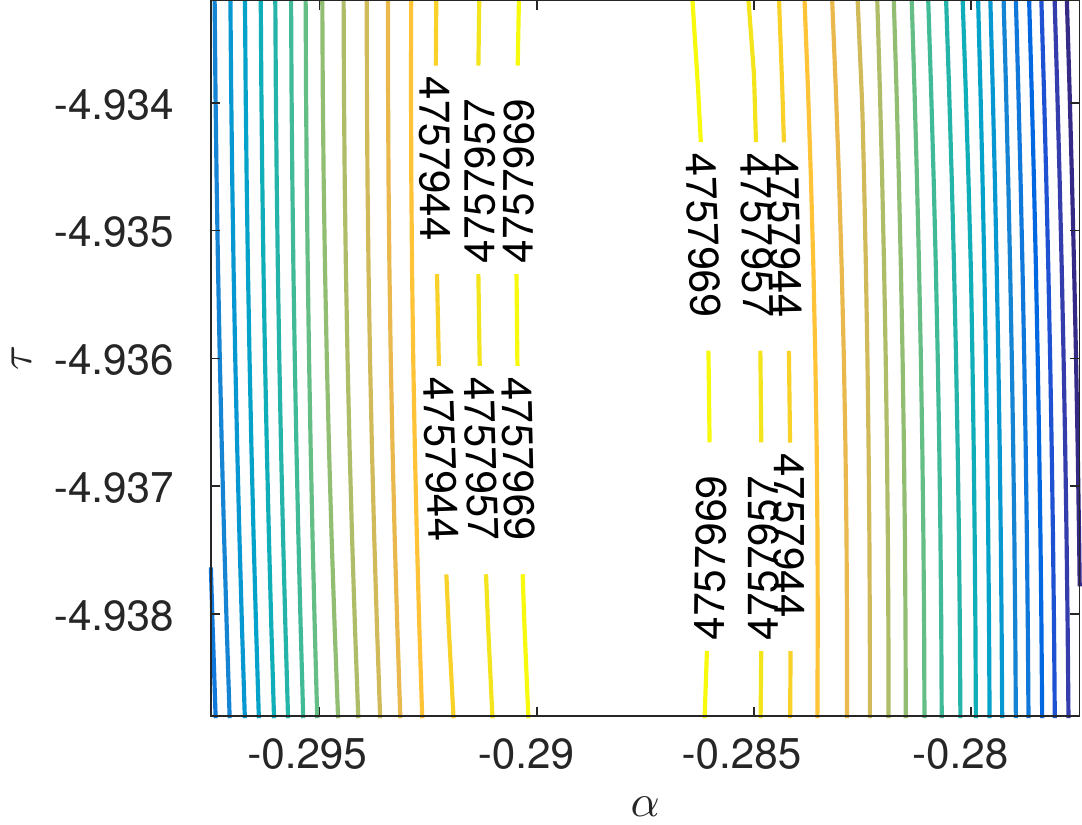}}
\subfigure[$\alpha$-$\ell$ plane]{
  \includegraphics[width=.31\linewidth]{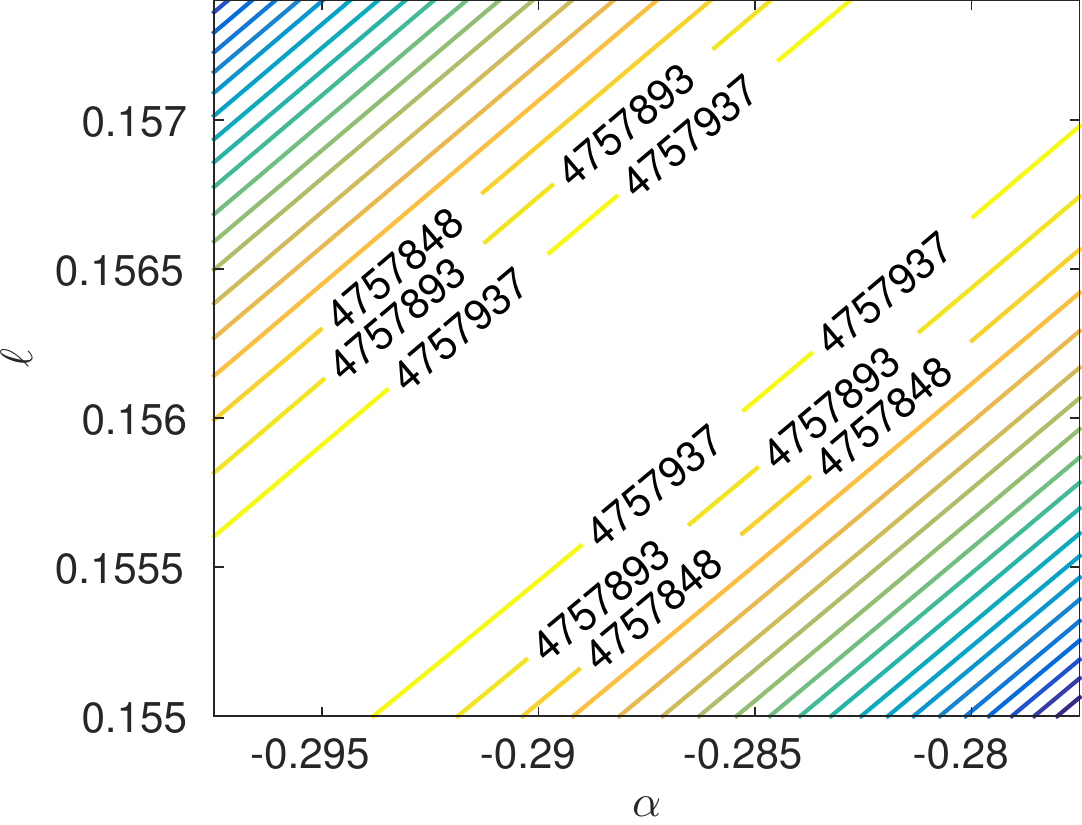}}\\

\subfigure[$\ell$-$\tau$ plane]{
  \includegraphics[width=.31\linewidth]{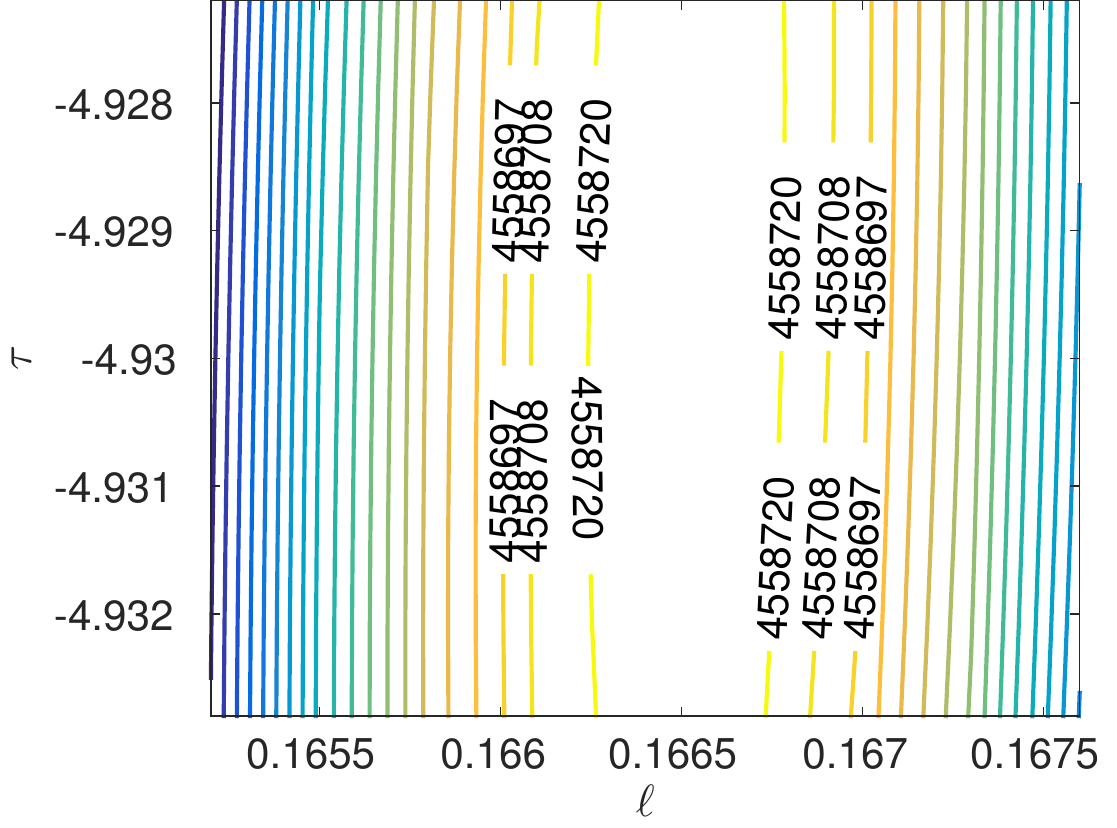}}
\subfigure[$\alpha$-$\tau$ plane]{
  \includegraphics[width=.31\linewidth]{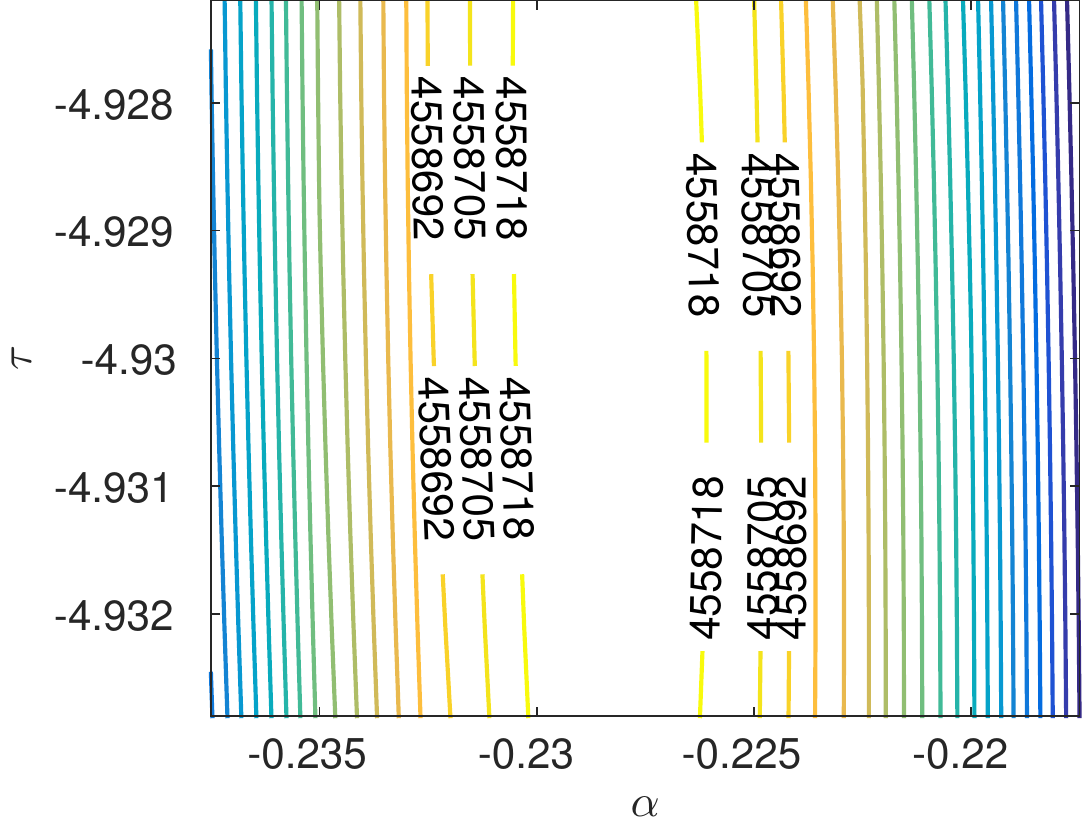}}
\subfigure[$\alpha$-$\ell$ plane]{
  \includegraphics[width=.31\linewidth]{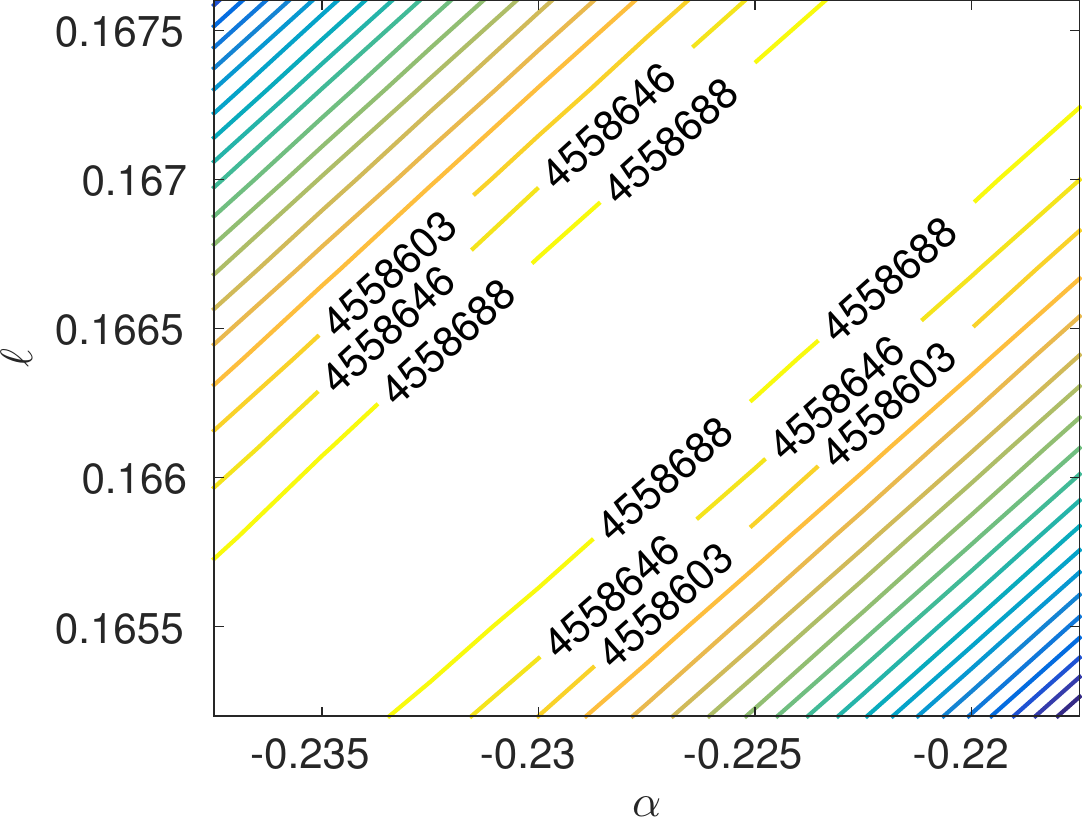}}
\caption{Log-likelihood centered around optimum. Top row: base covariance function $k$; bottom row: proposed covariance function $k_{\rm{h}}$.}
\label{fig:exp.noaa}
\end{figure}

%%%%%%%%%%%%%%%%%%%%%%%%%%%%%%%%%%%%%%%%%%%%%%%%%%%%%%%%%%%%%%%%%%%%%%%%%%%%%%%
\section{Conclusions}
We have presented a computationally friendly approach that addresses the challenge of formidably expensive computations of Gaussian random fields in the large scale. Unlike many methods that focus on the approximation of the covariance matrix or of the likelihood, the proposed approach operates on the covariance function such that positive definiteness is maintained. The hierarchical structure and the nested bases in the proposed construction allow for organizing various computations in a tree format, achieving costs proportional to the tree size and hence to the data size $n$. These computations range from the simulation of random fields to kriging and likelihood evaluations. More desirably, kriging has an amortized cost of $O(\log n)$ and hence one may perform predictions for as many as $O(n)$ sites easily. Moreover, the efficient evaluation of the log-likelihoods paves the way for maximum likelihood estimation as well as Markov Chain Monte Carlo. Numerical experiments show that the proposed construction yields comparable prediction results and likelihood surfaces with those of the base covariance function, while being scalable to data of ever increasing size.

%%%%%%%%%%%%%%%%%%%%%%%%%%%%%%%%%%%%%%%%%%%%%%%%%%%%%%%%%%%%%%%%%%%%%%%%%%%%%%%
\bibliographystyle{plainnat}
\bibliography{reference}

%%%%%%%%%%%%%%%%%%%%%%%%%%%%%%%%%%%%%%%%%%%%%%%%%%%%%%%%%%%%%%%%%%%%%%%%%%%%%%%
\clearpage
\appendix
\section{Proof of Theorem~\ref{thm:k.one.level}}
For a proof of positive definiteness, we write $k_{\rm{h}}$ as a sum of two functions $\xi^{(1)}$ and $\xi^{(2)}$, where
\[
\xi^{(1)}(\bm{x},\bm{x}')=k(\bm{x},\ud{X})k(\ud{X},\ud{X})^{-1}k(\ud{X},\bm{x}')
\]
is the Nystr\"{o}m approximation in the whole domain $S$ and hence positive definite, and
\[
\xi^{(2)}(\bm{x},\bm{x}')=
\begin{cases}
k(\bm{x},\bm{x}')-k(\bm{x},\ud{X})k(\ud{X},\ud{X})^{-1}k(\ud{X},\bm{x}'), & \text{if $\bm{x},\bm{x}'\in S_j$ for some $j$},\\
0, & \text{otherwise},
\end{cases}
\]
is a Schur complement in each subdomain $S_j$ and hence also positive definite. Then, the constructed $k_{\rm{h}}$ is positive definite.

To prove the strict positive definiteness, we need the following lemma.

\begin{lemma}\label{lem:strict}
Let $k$ be strictly positive definite. For any set of points $X=\{\bm{x}_1,\ldots,\bm{x}_n\}$ such that $X\cap\ud{X}=\emptyset$ and for any set of coefficients $\alpha_1,\ldots,\alpha_n\in\real$ that are not all zero, we have
\[
\sum_{i,j=1}^n\alpha_i\alpha_j\left[k(\bm{x}_i,\bm{x}_j)-k(\bm{x}_i,\ud{X})k(\ud{X},\ud{X})^{-1}k(\ud{X},\bm{x}_j)\right]>0.
\]
\end{lemma}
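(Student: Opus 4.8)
The plan is to recognize the left-hand side as the quadratic form associated with a Schur complement of the full Gram matrix over the combined point set $X\cup\ud{X}$, and then invoke the standard fact that the Schur complement of a strictly positive definite matrix is itself strictly positive definite. Writing $\ud{X}=\{\ud{\bm{x}}_1,\ldots,\ud{\bm{x}}_r\}$, I would first assemble the $(n+r)\times(n+r)$ symmetric matrix
\[
M=\begin{bmatrix} k(X,X) & k(X,\ud{X}) \\ k(\ud{X},X) & k(\ud{X},\ud{X}) \end{bmatrix}.
\]
Because $X\cap\ud{X}=\emptyset$ and the points within $X$ and within $\ud{X}$ are themselves distinct, all $n+r$ underlying points are distinct. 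Hence the strict positive definiteness of $k$ makes $M$ a strictly positive definite matrix; in particular $k(\ud{X},\ud{X})$ is invertible, so the Schur complement $S=k(X,X)-k(X,\ud{X})k(\ud{X},\ud{X})^{-1}k(\ud{X},X)$ is well defined, and the claimed inequality is precisely $\bm{\alpha}^TS\bm{\alpha}>0$ for the nonzero vector $\bm{\alpha}=(\alpha_1,\ldots,\alpha_n)^T$.

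The central step is a minimization (completing the square) identity. For any $\bm{\beta}\in\real^r$,
\[
\begin{bmatrix} \bm{\alpha} \\ \bm{\beta} \end{bmatrix}^T M \begin{bmatrix} \bm{\alpha} \\ \bm{\beta} \end{bmatrix}
=\bm{\alpha}^T k(X,X)\bm{\alpha}+2\bm{\alpha}^T k(X,\ud{X})\bm{\beta}+\bm{\beta}^T k(\ud{X},\ud{X})\bm{\beta},
\]
which, using $k(X,\ud{X})=k(\ud{X},X)^T$, is minimized over $\bm{\beta}$ at $\bm{\beta}^\star=-k(\ud{X},\ud{X})^{-1}k(\ud{X},X)\bm{\alpha}$, with minimum value exactly $\bm{\alpha}^TS\bm{\alpha}$. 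Substituting $\bm{\beta}^\star$ shows that the Schur-complement quadratic form coincides with the value of the full quadratic form at the augmented coefficient vector $[\bm{\alpha};\bm{\beta}^\star]$.

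To conclude, since $\bm{\alpha}\neq\bm{0}$ its first $n$ entries are not all zero, so the augmented vector $[\bm{\alpha};\bm{\beta}^\star]$ is nonzero. Strict positive definiteness of $M$ then forces
\[
\bm{\alpha}^TS\bm{\alpha}=\begin{bmatrix} \bm{\alpha} \\ \bm{\beta}^\star \end{bmatrix}^T M \begin{bmatrix} \bm{\alpha} \\ \bm{\beta}^\star \end{bmatrix}>0,
\]
which is the desired conclusion. There is no genuinely hard step here; the only point requiring care is the first one, namely that the disjointness hypothesis $X\cap\ud{X}=\emptyset$ guarantees all $n+r$ points are distinct, so that the strict (rather than merely weak) positive definiteness of $k$ may be applied to $M$. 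Everything else is the standard Schur-complement positivity argument, reformulated in terms of the kernel to match the notation of the lemma.
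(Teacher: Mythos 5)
Your proof is correct and follows essentially the same route as the paper: both form the Gram matrix $k(X\cup\ud{X},X\cup\ud{X})$, use the disjointness hypothesis to conclude it is strictly positive definite, and pass to the Schur complement. The only difference is that you prove the Schur-complement positivity explicitly via the completing-the-square identity, whereas the paper invokes it as a standard fact.
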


\begin{proof}
The result is equivalent to saying that the matrix $k(X,X)-k(X,\ud{X})k(\ud{X},\ud{X})^{-1}k(\ud{X},X)$ is positive definite. To see so, consider
\[
k(X\cup\ud{X},X\cup\ud{X})=
\begin{bmatrix}
k(X,X) & k(X,\ud{X}) \\
k(\ud{X},X) & k(\ud{X},\ud{X})
\end{bmatrix}.
\]
Because of the strict positive definiteness of the function $k$, the matrix $k(X\cup\ud{X},X\cup\ud{X})$ is positive definite. Then, the Schur complement matrix $k(X,X)-k(X,\ud{X})k(\ud{X},\ud{X})^{-1}k(\ud{X},X)$ is also positive definite.
\end{proof}

We now continue the proof of Theorem~\ref{thm:k.one.level}. For a set of coefficients $\alpha_1,\ldots,\alpha_n\in\real$,
\begin{equation}\label{eqn:sum}
\sum_{i,j=1}^n\alpha_i\alpha_jk_{\rm{h}}(\bm{x}_i,\bm{x}_j)
=\underbrace{\sum_{i,j=1}^n\alpha_i\alpha_j \xi^{(1)}(\bm{x}_i,\bm{x}_j)}_{B_1}+
\underbrace{\sum_{i,j=1}^n\alpha_i\alpha_j \xi^{(2)}(\bm{x}_i,\bm{x}_j)}_{B_2}.
\end{equation}
If we want the left-hand side to be zero, $B_1$ and $B_2$ must be simultaneously zero. Because $\xi^{(2)}(\bm{x},\bm{x}')$ is zero whenever $\bm{x}$ or $\bm{x}'$ belongs to $\ud{X}$, based on Lemma~\ref{lem:strict}, $B_2=0$ implies that $\alpha_i=0$ for all $\bm{x}_i\notin\ud{X}$. In such a case, $B_1$ is simplified to
\[
B_1=\sum_{\bm{x}_i,\bm{x}_j\in\ud{X}}\alpha_i\alpha_j \xi^{(1)}(\bm{x}_i,\bm{x}_j)
=\ud{\bm{\alpha}}^Tk(\ud{X},\ud{X})\ud{\bm{\alpha}},
\]
where $\ud{\bm{\alpha}}$ is the column vector of $\alpha_i$'s for all $\bm{x}_i\in\ud{X}$. Then, because of the strict positive definiteness of $k$, $B_1=0$ implies that $\alpha_i=0$ for all $\bm{x}_i\in\ud{X}$. Thus, all coefficients $\alpha_i$ must be zero for the left-hand side of~\eqref{eqn:sum} to be zero. This concludes that $k_{\rm{h}}$ is strictly positive definite.

%%%%%%%%%%%%%%%%%%%%%%%%%%%%%%%%%%%%%%%%%%%%%%%%%%%%%%%%%%%%%%%%%%%%%%%%%%%%%%%
\clearpage
\section{Proof of Theorem~\ref{thm:k.multilevel}}
The positive definiteness of $k_{\rm{h}}$ straightforwardly follows from the discussion in the main text: $k_{\rm{h}}$ is the sum of all $\xi^{(i)}$'s, each of which is positive definite.

To prove strict positive definiteness, we first simplify notations. We write for the covariance function $k$:
\[
k_{\bm{x},\bm{x}'}\equiv k(\bm{x},\bm{x}'),\quad
k_{\bm{x},\ud{i}}\equiv k(\bm{x},\ud{X}_i),\quad
k_{\ud{i},\ud{j}}\equiv k(\ud{X}_i,\ud{X}_j),
\]
and similarly for the auxiliary function $\psi^{(i)}$. Then, \eqref{eqn:xi} is simplified to
\begin{equation*}
\begin{split}
\xi^{(i)}(\bm{x},\bm{x}') &= 0 \text{ if either $\bm{x}$ or $\bm{x}'\notin S_i$; otherwise: } \\
\xi^{(i)}(\bm{x},\bm{x}') &=
\begin{cases}
k_{\bm{x},\bm{x}'}-k_{\bm{x},\ud{p}}k_{\ud{p},\ud{p}}^{-1}k_{\ud{p},\bm{x}'}, & \text{if $i$ is leaf},\\
\psi^{(i)}_{\bm{x},\ud{i}}k_{\ud{i},\ud{i}}^{-1}
\Big(k_{\ud{i},\ud{i}}-k_{\ud{i},\ud{p}}k_{\ud{p},\ud{p}}^{-1}k_{\ud{p},\ud{i}}\Big)
k_{\ud{i},\ud{i}}^{-1}\psi^{(i)}_{\ud{i},\bm{x}'}, & \text{if $i$ is neither leaf nor root},\\
\psi^{(i)}_{\bm{x},\ud{i}}k_{\ud{i},\ud{i}}^{-1}\psi^{(i)}_{\ud{i},\bm{x}'}, & \text{if $i$ is root}.
\end{cases}
\end{split}
\end{equation*}

We need the following lemma.

\begin{lemma}\label{lem:psik}
Let $l$ be a leaf descendant of some nonleaf node $i$ and let $(l,l_1,l_2,\ldots,l_s,i)$ be the path connecting $l$ and $i$. Then,
\[
\psi^{(i)}_{\bm{x},\ud{i}}=k_{\bm{x},\ud{i}}
\]
if $\bm{x}\in \ud{X}_{l_1}\cap\ud{X}_{l_2}\cap\cdots\cap\ud{X}_{l_s}$.
\end{lemma}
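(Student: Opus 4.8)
The plan is to prove the identity by unrolling the recursion~\eqref{eqn:k.multilevel.part3} along the path from $i$ down to the leaf $l$, and then collapsing the resulting telescoping product using a reproduction property of the Nystr\"{o}m operator. Throughout I take $\bm{x}$ to lie in the leaf subdomain $S_l$ (this is the regime in which $\psi^{(i)}_{\bm{x},\ud{i}}$ is defined by descending the tree), so that at each node on the path the appropriate branch of~\eqref{eqn:k.multilevel.part3} is selected by disjointness of the sibling subdomains.

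First I would record the single fact that drives everything: if a point $\bm{x}$ belongs to the landmark list $\ud{X}_j$, then the row vector $k_{\bm{x},\ud{j}}$ is exactly the row of $k_{\ud{j},\ud{j}}$ indexed by $\bm{x}$, that is, $k_{\bm{x},\ud{j}}=\bm{e}_{\bm{x}}^T k_{\ud{j},\ud{j}}$ for the corresponding coordinate vector $\bm{e}_{\bm{x}}$. Since $k(\ud{X}_j,\ud{X}_j)$ is invertible by assumption, this gives $k_{\bm{x},\ud{j}}k_{\ud{j},\ud{j}}^{-1}=\bm{e}_{\bm{x}}^T$, and hence, for any list $Y$,
\[
k_{\bm{x},\ud{j}}\,k_{\ud{j},\ud{j}}^{-1}\,k_{\ud{j},Y}=\bm{e}_{\bm{x}}^T k_{\ud{j},Y}=k_{\bm{x},Y}.
\]
In words, a one-level Nystr\"{o}m step through landmarks that already contain $\bm{x}$ reproduces $k$ exactly.

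Next I would unroll~\eqref{eqn:k.multilevel.part3}. Because $l_1,\ldots,l_s$ are all nonleaf (each has a child on the path) while $l$ is a leaf child of $l_1$, applying the ``otherwise'' branch at $i,l_s,\ldots,l_2$ and the leaf branch at $l_1$ yields
\[
\psi^{(i)}_{\bm{x},\ud{i}}
=k_{\bm{x},\ud{l_1}}\,k_{\ud{l_1},\ud{l_1}}^{-1}k_{\ud{l_1},\ud{l_2}}\,k_{\ud{l_2},\ud{l_2}}^{-1}k_{\ud{l_2},\ud{l_3}}\cdots k_{\ud{l_s},\ud{l_s}}^{-1}k_{\ud{l_s},\ud{i}}.
\]
Now I would apply the reproduction identity from left to right: using $\bm{x}\in\ud{X}_{l_1}$ collapses the leading factor to $k_{\bm{x},\ud{l_2}}$, then $\bm{x}\in\ud{X}_{l_2}$ collapses it to $k_{\bm{x},\ud{l_3}}$, and so on, until the final application with $\bm{x}\in\ud{X}_{l_s}$ produces $k_{\bm{x},\ud{i}}$. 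Each collapse is legitimate precisely because the hypothesis places $\bm{x}$ in the corresponding intermediate landmark list. Equivalently, the same argument can be phrased as an induction on the path length $s$, peeling one level of~\eqref{eqn:k.multilevel.part3} at a time and invoking the reproduction identity for the outermost landmark set.

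I do not expect a genuine obstacle here; the content is entirely in recognizing the reproduction property and keeping the recursion's case analysis straight. The only point requiring care is the bookkeeping of which branch of~\eqref{eqn:k.multilevel.part3} fires at each node: this relies on $\bm{x}\in\ud{X}_{l_m}\subseteq S_{l_m}$ together with disjointness of sibling subdomains to identify the correct child, and on $\bm{x}\in S_l$ to trigger the leaf branch at $l_1$. I would therefore state the standing assumption $\bm{x}\in S_l$ explicitly before beginning.
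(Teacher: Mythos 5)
Your proof is correct and is essentially identical to the paper's: both unroll the recursion~\eqref{eqn:k.multilevel.part3} along the path into the product chain $k_{\bm{x},\ud{l_1}}k_{\ud{l_1},\ud{l_1}}^{-1}k_{\ud{l_1},\ud{l_2}}\cdots k_{\ud{l_s},\ud{l_s}}^{-1}k_{\ud{l_s},\ud{i}}$ and then collapse it left to right via the observation that $k_{\bm{x},\ud{j}}k_{\ud{j},\ud{j}}^{-1}=\bm{e}_{\bm{x}}^T$ whenever $\bm{x}\in\ud{X}_j$. Your only additions---isolating this reproduction identity as a standalone fact and tracking which branch of the recursion fires at each node---are bookkeeping refinements of the same argument, not a different route.
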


\begin{proof}
The result is a straightforward verification. For an array of distinct points which contains some point $\bm{x}$ at the $j$-th location, we use the notation $\bm{e}_{\bm{x}}$ to denote a column vector whose $j$-th element is $1$ and otherwise $0$. Then, for $\bm{x}\in S_l$ and also $\in \ud{X}_{l_1}$,
\[
\psi^{(i)}_{\bm{x},\ud{i}}
=k_{\bm{x},\ud{l_1}}k_{\ud{l_1},\ud{l_1}}^{-1}k_{\ud{l_1},\ud{l_2}}k_{\ud{l_2},\ud{l_2}}^{-1}\cdots k_{\ud{l_s},\ud{l_s}}^{-1}k_{\ud{l_s},\ud{i}}
=\bm{e}_{\bm{x}}^Tk_{\ud{l_1},\ud{l_2}}k_{\ud{l_2},\ud{l_2}}^{-1}\cdots k_{\ud{l_s},\ud{l_s}}^{-1}k_{\ud{l_s},\ud{i}}
=k_{\bm{x},\ud{l_2}}k_{\ud{l_2},\ud{l_2}}^{-1}\cdots k_{\ud{l_s},\ud{l_s}}^{-1}k_{\ud{l_s},\ud{i}}.
\]
Iteratively simplifying by noting that $\bm{x}$ also belongs to $\ud{X}_{l_1},\ldots,\ud{X}_{l_s}$, we eventually reach
\[
\psi^{(i)}_{\bm{x},\ud{i}}
=k_{\bm{x},\ud{l_s}}k_{\ud{l_s},\ud{l_s}}^{-1}k_{\ud{l_s},\ud{i}}
=\bm{e}_{\bm{x}}^Tk_{\ud{l_s},\ud{i}}
=k_{\bm{x},\ud{i}}.
\]
\end{proof}

We now continue the proof of Theorem~\ref{thm:k.multilevel}. The strategy resembles induction. For a set of coefficients $\alpha_1,\ldots,\alpha_n\in\real$, write
\begin{equation}\label{eqn:sum2}
\sum_{j,l=1}^n\alpha_j\alpha_lk_{\rm{h}}(\bm{x}_j,\bm{x}_l)
=\sum_i\underbrace{\sum_{j,l=1}^n\alpha_j\alpha_l\xi^{(i)}(\bm{x}_j,\bm{x}_l)}_{B_i}.
\end{equation}
If we want the left-hand side to be zero, all the $B_i$'s on the right must be simultaneously zero. When $i$ is a leaf node, $\xi^{(i)}(\bm{x},\bm{x}')$ is zero whenever $\bm{x}$ or $\bm{x}'$ belongs to $\ud{X}_p$, where $p$ is the parent of $i$. Then, $B_i=0$ implies that $\alpha_j=0$ for all $\bm{x}_j\in S_i\backslash\ud{X}_p$.

For any nonleaf node $p$, we use $Q_p$ to denote the union of the intersections of landmark points:
\[
Q_p\equiv\bigcup_{l \text{ is leaf descendant of } p}
\{ \ud{X}_{l_1}\cap\cdots\cap\ud{X}_{l_s}\cap\ud{X}_p \mid
\text{$(l,l_1,\ldots,l_s,p)$ is a path connecting $l$ and $p$} \}.
\]
Clearly, $Q_p\subset S_p$. As a special case, if all the children of $p$ are leaf nodes, $Q_p=\ud{X}_p$. We now have an induction hypothesis: for a nonroot node $i$ with parent $p$, there holds $\alpha_j=0$ for all $\bm{x}_j\in S_i\backslash (Q_i\cap\ud{X}_p)$. Assume that the hypothesis is true for all child nodes of some node $p$, who has a parent $q$. Then, summarizing the results for all these child nodes, we have $\alpha_j=0$ for all $\bm{x}_j\in S_p\backslash Q_p$. Furthermore, based on Lemma~\ref{lem:psik}, $\xi^{(p)}(\bm{x},\bm{x}')$ is zero whenever $\bm{x}$ or $\bm{x}'$ belongs to $Q_p\cap\ud{X}_q$. Then, $B_p=0$ implies that $\alpha_j=0$ for all $\bm{x}_j\in (S_p\backslash Q_p)\cup(Q_p\backslash\ud{X}_q)=S_p\backslash (Q_p\cap\ud{X}_q)$. This finishes the induction step.

At the end of the induction, we reach the root node $p$. Summarizing the results for all the child nodes of the root, we have $\alpha_j=0$ for all $\bm{x}_j\in S_p\backslash Q_p$. Invoking Lemma~\ref{lem:psik} again, we have $\xi^{(p)}(\bm{x},\bm{x}')=k_{\bm{x},\bm{x}'}$ whenever $\bm{x}$ or $\bm{x}'$ belongs to $Q_p$. Then, by the strict positive definiteness of $k$, $B_p=0$ implies that $\alpha_j=0$ for all $\bm{x}_j\in Q_p$. Thus, all coefficients $\alpha_i$ must be zero for the left-hand side of~\eqref{eqn:sum2} to be zero. This concludes that $k_{\rm{h}}$ is strictly positive definite.

%%%%%%%%%%%%%%%%%%%%%%%%%%%%%%%%%%%%%%%%%%%%%%%%%%%%%%%%%%%%%%%%%%%%%%%%%%%%%%%
\clearpage
\section{Algorithm for Matrix-vector Multiplication}\label{sec:matvec}
The objective is to compute $\bm{y}=A\bm{b}$. We will use a shorthand notation $\bm{b}_i$ to denote a subvector of $\bm{b}$ that corresponds to the index set $I_i$; and similarly for the vector $\bm{y}$. In computer implementation, only the subvectors corresponding to leaf nodes are stored therein. On the other hand, we need auxiliary vectors $\bm{c}_j$ and $\bm{d}_j$, all of length $r$, to be stored in each nonroot node $j$. These auxiliary vectors are defined in the following context.

The vector $\bm{y}$ is the sum of two parts: the first part comes from $A_{ll}\bm{b}_l$ for every leaf node $l$ and the second part comes from $A_{ij}\bm{b}_j$ for every pair of sibling nodes $i$ and $j$. The first part is straightforward to calculate. The second part, however, needs an expansion through change of basis according to Definition~\ref{def:matrix}. In particular, let $l$ be a leaf descendant of $i$. Then, the subvector of $A_{ij}\bm{b}_j$ corresponding to the index set $I_l$ is
\[
U_lW_{l_1}W_{l_2}\cdots W_{l_s}W_i\Sigma_pZ_j^T\left(
\sum_{\substack{q \text{ is leaf} \\ (q,q_1,q_2,\ldots,q_t,j) \text{ is path}}}
Z_{q_t}^T\cdots Z_{q_2}^TZ_{q_1}^TV_q^T\bm{b}_q\right),
\]
where $p$ is the parent of $i$ and $j$, $(l,l_1,l_2,\ldots,l_s,i)$ is the path connecting $l$ and $i$, and the bracketed expression to the right of $Z_j^T$ sums over all the contributions from any descendant leaf $q$ of $j$.

Many computations in the above summation are duplicated. For example, the term $V_q^T\bm{b}_q$ at a leaf node $q$ appears in all $A_{ij}\bm{b}_j$ whenever $q$ is a leaf descendant of $j$. Hence, we define two sets of auxiliary vectors
\[
\bm{c}_i=\begin{dcases}
V_i^T\bm{b}_i, & \text{if $i$ is leaf},\\
Z_i^T\sum_{j\in \text{Ch}(i)}\bm{c}_j, & \text{otherwise},\\
\end{dcases}
\]
and
\[
\bm{d}_j=W_i\bm{d}_i+\sum_{j'\in \text{Ch}(i)\backslash\{j\}}\Sigma_i\bm{c}_{j'},
\quad\text{for $j$ being a child of $i$; \quad $W_i\bm{d}_i=0$ if $i$ is root},
\]
as temporary storage to avoid duplicate computation. It is not hard to see that for any leaf node $l$, the final output subvector is $\bm{y}_l=A_{ll}\bm{b}_l+U_l\bm{d}_l$.

By definition, the set of auxiliary vectors $\bm{c}_i$ may be recursively computed from children to parent, whereas the other set $\{\bm{d}_j\}$ may be computed in a reverse order, from parent to children. Then, the overall computation consists of two tree walks, one upward and the other downward. This computation is summarized in Algorithm~\ref{algo:Ab}. The blue texts highlight the modification of the algorithm when $A$ is symmetric. All subsequent algorithms similarly use blue texts to indicate modifications for symmetry.

\begin{algorithm}[ht]
\caption{Computing $\bm{y}=A\bm{b}$}
\label{algo:Ab}
\begin{algorithmic}[1]
\State Initialize $\bm{d}_i\gets\bm{0}$ for each nonroot node $i$ of the tree
\State \Call{Upward}{\texttt{root}}
\State \Call{Downward}{\texttt{root}}
\Statex

\Function{Upward}{$i$}
\If{$i$ is leaf}
\State $\bm{c}_i\gets V_i^T\bm{b}_i$; \,\, $\bm{y}_i\gets A_{ii}\bm{b}_i$
\Comment{\symln{if $A$ is symmetric, replace $V_i$ by $U_i$}}
\Else
\State \textbf{for all} children $j$ of $i$ \textbf{do} \Call{Upward}{$j$} \textbf{end for}
\State $\bm{c}_i\gets Z_i^T\left(\sum_{j\in \text{Ch}(i)}\bm{c}_j\right)$ \textbf{if} $i$ is not root
\Comment{\symln{if $A$ is symmetric, replace $Z_i$ by $W_i$}}
\EndIf
\If{$i$ is not root}
\State \textbf{for all} siblings $l$ of $i$ \textbf{do} $\bm{d}_l\gets \bm{d}_l+\Sigma_p \bm{c}_i$ \textbf{end for}
\Comment{$p$ is parent of $i$}
\EndIf
\EndFunction
\Statex

\Function{Downward}{$i$}
\State \textbf{if} $i$ is leaf \textbf{then} $\bm{y}_i\gets \bm{y}_i+U_i\bm{d}_i$ and return \textbf{end if}
\ForAll{children $j$ of $i$}
\State $\bm{d}_j\gets \bm{d}_j+W_i\bm{d}_i$, \textbf{if} $i$ is not root
\State \Call{Downward}{$j$}
\EndFor
\EndFunction
\end{algorithmic}
\end{algorithm}

%%%%%%%%%%%%%%%%%%%%%%%%%%%%%%%%%%%%%%%%%%%%%%%%%%%%%%%%%%%%%%%%%%%%%%%%%%%%%%%
\clearpage
\section{Algorithm for Matrix Inversion}
The objective is to compute $A^{-1}$. We first note that $A^{-1}$ has exactly the same structure as that of $A$. We repeat this observation mentioned in the main paper:

\begin{theorem}
Let $A$ be recursively low-rank with a partitioning tree $T$ and a positive integer $r$. If $A$ is invertible and additionally, $A_{ii}-U_i\Sigma_pV_i^T$ is also invertible for all pairs of nonroot node $i$ and parent $p$, then there exists a recursively low-rank matrix $\widetilde{A}$ with the same partitioning tree $T$ and integer $r$, such that $\widetilde{A}=A^{-1}$. We denote the corresponding factors of $\widetilde{A}$ to be
\begin{equation}\label{eqn:factors.invA}
\{\widetilde{A}_{ii}, \widetilde{U}_i, \widetilde{V}_i, \widetilde{\Sigma}_p, \widetilde{W}_q, \widetilde{Z}_q \mid i \text{ is leaf, } p \text{ is nonleaf, } q \text{ is neither leaf nor root}\}.
\end{equation}
\end{theorem}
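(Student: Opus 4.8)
The plan is to prove the statement by structural induction on the partitioning tree, peeling off one level at a time with the Sherman--Morrison--Woodbury identity as the engine; since the argument is explicit it simultaneously reads off the factors in~\eqref{eqn:factors.invA} and doubles as the inversion algorithm. First I would expose a one-level splitting at the root. Stacking the factors of the root's children $i_1,\dots,i_s$ into $\widehat U=[U_{i_1}^T,\dots,U_{i_s}^T]^T$ and $\widehat V=[V_{i_1}^T,\dots,V_{i_s}^T]^T$, Definition~\ref{def:matrix} yields $A=\bar D+\widehat U\Sigma_1\widehat V^T$, where $\Sigma_1$ is the root middle factor and $\bar D=\diag(\bar A_{i_1i_1},\dots,\bar A_{i_si_s})$ is block diagonal with $\bar A_{ii}:=A_{ii}-U_i\Sigma_1 V_i^T$. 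These are exactly the blocks the hypothesis assumes invertible, so $\bar D^{-1}$ exists. Using the $\Sigma$-free (push-through) form of Woodbury, $A^{-1}=\bar D^{-1}-\bar D^{-1}\widehat U\Sigma_1 C^{-1}\widehat V^T\bar D^{-1}$ with $C=I+\widehat V^T\bar D^{-1}\widehat U\Sigma_1$; here $C$ is invertible because $\det A=\det\bar D\cdot\det C$ (via $\det(I+XY)=\det(I+YX)$) together with $A,\bar D$ invertible. Restricting the correction to the $(i,j)$ block gives $-(\bar A_{ii}^{-1}U_i)\Sigma_1 C^{-1}(V_j^T\bar A_{jj}^{-1})$, so the root-level factors of $\widetilde A=A^{-1}$ are $\widetilde U_i=\bar A_{ii}^{-1}U_i$, $\widetilde V_j=\bar A_{jj}^{-T}V_j$, and $\widetilde\Sigma_1=-\Sigma_1 C^{-1}$, while each diagonal block is $\widetilde A_{ii}=\bar A_{ii}^{-1}-\widetilde U_i\Sigma_1 C^{-1}\widetilde V_i^T$.

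Second, I would prove the structural lemma that each modified block $\bar A_{ii}$ is itself recursively low-rank on the subtree $T_i$ with the same $r$, inheriting all factors of $A$ strictly below $i$ and differing only in its top middle factor, which changes from $\Sigma_i$ to $\Sigma_i-W_i\Sigma_1 Z_i^T$. This follows from the nested-basis identities $U_i(I_{i'},:)=U_{i'}W_i$ and $V_i(I_{j'},:)=V_{j'}Z_i$, which recast the $(i',j')$ block of $U_i\Sigma_1 V_i^T$ as $U_{i'}(W_i\Sigma_1 Z_i^T)V_{j'}^T$. Crucially, I would check that the invertibility hypotheses are self-propagating: the condition needed one level down for $\bar A_{ii}$, namely invertibility of $(\bar A_{ii})_{i'i'}-U_{i'}(\Sigma_i-W_i\Sigma_1 Z_i^T)V_{i'}^T$, simplifies exactly to $A_{i'i'}-U_{i'}\Sigma_i V_{i'}^T$ because the $W_i\Sigma_1 Z_i^T$ terms cancel, and this is precisely the stated hypothesis at node $i'$. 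Thus the induction applies to each $\bar A_{ii}$ with no extra assumptions.

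Third comes the inductive step, which needs a strengthened hypothesis. Applying the induction to $\bar A_{ii}$ on $T_i$ shows $\bar A_{ii}^{-1}$ is recursively low-rank, supplying the subtree structure of the diagonal blocks $\widetilde A_{ii}$. The genuinely delicate point is that the new factors $\widetilde U_i=\bar A_{ii}^{-1}U_i$ and $\widetilde V_j=\bar A_{jj}^{-T}V_j$ must themselves obey the nested-basis relations of Definition~\ref{def:matrix} relative to the structure of $\bar A_{ii}^{-1}$, i.e.\ there must exist transfer matrices $\widetilde W,\widetilde Z$ linking them to the leaf-level factors of $\widetilde A$. Hence I would \emph{not} induct on the bare statement ``$A^{-1}$ is recursively low-rank,'' but on the strengthened claim that, additionally, $A^{-1}$ maps every nested column factor $B$ to a nested column factor, and $A^{-T}$ maps nested row factors to nested row factors. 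Because $U_i=\widehat U^{(i)}W_i$ is a nested factor on $T_i$ (up to the right multiply by $W_i$), this strengthened hypothesis is exactly what certifies that $\bar A_{ii}^{-1}U_i$ is nested and thereby produces the required $\widetilde W$'s and $\widetilde Z$'s.

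The main obstacle is proving that the Woodbury expression itself preserves the nested structure. Writing $A^{-1}B=\bar D^{-1}B-\bar D^{-1}\widehat U\Sigma_1 C^{-1}\widehat V^T\bar D^{-1}B$, the first term is handled blockwise by the inductive hypothesis on each $\bar A_{ii}^{-1}$; in the second term, $\widehat V^T\bar D^{-1}B$ collapses to an $r$-row quantity through precisely the nested upward sweep of the matrix--vector algorithm (the auxiliary vectors $\bm{c}_i$ of Appendix~\ref{sec:matvec}, Algorithm~\ref{algo:Ab}), which is then redistributed downward through the transfer matrices. Verifying that these two sweeps respect the change-of-basis relations, and bookkeeping the resulting $\widetilde W_q,\widetilde Z_q$, is the technical core; everything else reduces to the Woodbury identity and the cancellation already noted. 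The base case is immediate: at a leaf there is no off-diagonal structure to maintain, $\widetilde A_{ll}=A_{ll}^{-1}$, and $A_{ll}^{-1}B_l$ is trivially an admissible leaf factor. Finally, since every factor is obtained in closed form, assembling the per-node formulas in a postorder-then-preorder traversal produces the $O(nr^2)$ inversion algorithm, and recursing the identity $\det A=\det\bar D\cdot\det C$ over the tree yields the companion determinant computation.
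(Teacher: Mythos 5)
Your overall architecture matches the paper's proof almost exactly: the block-diagonal-plus-rank-$r$ split $A=\bar D+\widehat U\Sigma_1\widehat V^T$, the Sherman--Morrison--Woodbury step, the structural lemma that each $\bar A_{ii}$ is again recursively low-rank with only its top middle factor changed to $\Sigma_i-W_i\Sigma_1 Z_i^T$, the cancellation showing the invertibility hypotheses self-propagate down the tree, and the downward correction chain are precisely the paper's Eqs.~\eqref{eqn:AA}--\eqref{eqn:Pi} and \eqref{eqn:Aii}--\eqref{eqn:Sigmap}, organized top-down rather than bottom-up; even your root factor $-\Sigma_1C^{-1}$ equals the paper's $\widetilde\Pi_1=-(I+\Sigma_1\widetilde\Xi_1)^{-1}\Sigma_1$ by the push-through identity.

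The gap is in your third step, exactly the point you yourself flag as delicate. The strengthened hypothesis ``$A^{-1}$ maps every nested column factor to a nested column factor'' cannot carry the induction under either of its possible readings. If ``nested'' means ``admits \emph{some} rank-$r$ nested representation with freely chosen leaf factors,'' the property is vacuous --- any $n\times r$ matrix is nested by taking all transfer matrices equal to $I$ and pushing everything into the leaf factors --- so it cannot produce the specific $\widetilde W_q,\widetilde Z_q$ you need: those must link $\bar A_{pp}^{-1}U_p$ to the \emph{particular} leaf factors $\widetilde U_l=\bar A_{ll}^{-1}U_l$ that already serve as left factors of the leaf-level off-diagonal blocks of $\widetilde A$; cross-level consistency of one family, not existence of some representation, is the entire content. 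If instead ``nested'' means ``nested in a fixed prescribed family,'' the claim is false for a general nested $B$: blockwise, $(A^{-1}B)(I_i,:)=\bar A_{ii}^{-1}B_iN_1-\widetilde U_i\Sigma_1C^{-1}\bigl(\widehat V^T\bar D^{-1}B\bigr)$, a combination of two different families ($\{\bar A_{ll}^{-1}B_l\}$ and $\{\widetilde U_l\}$), which generically has nested rank $2r$, not $r$. The statement that is true, and that your induction must carry, is family-relative: if $B$ is nested in $A$'s \emph{own} column family $\{U_i,W_p\}$ --- as $U_p$ is, since $U_p(I_i,:)=U_iW_p$ --- then both Woodbury terms collapse onto the single family $\{\widetilde U_l\}$, giving $(A^{-1}B)(I_i,:)=\widetilde U_i\,(r\times r)$, i.e.\ the output is nested in $A^{-1}$'s column family. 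The paper avoids the formulation issue entirely by defining $\widetilde U_i=(A_{ii}-U_i\Sigma_pV_i^T)^{-1}U_i$ at \emph{every} node and verifying by the same Woodbury computation that $\widetilde U_p(I_i,:)=\widetilde U_i\widetilde W_p$ with $\widetilde W_p=(I+\widetilde\Pi_p\widetilde\Xi_p)W_p$. With your hypothesis restated in this family-relative form, your induction closes and coincides with the paper's construction; as written, it does not.
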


This theorem may be proved by construction, which simultaneously gives all the factors in~\eqref{eqn:factors.invA}. Consider a pair of child node $p$ and parent $q$ and let $p$ have children such as $i$ and $j$. By noting that a diagonal block of $A_{pp}$ is $A_{ii}$ and an off-diagonal block is $A_{ij}=U_i\Sigma_pV_j^T$, we may write $A_{pp}-U_p\Sigma_qV_p^T$ as a block diagonal matrix (with diagonal blocks equal to $A_{ii}-U_i\Sigma_pV_i^T$) plus a rank-$r$ term:
\begin{equation}\label{eqn:AA}
A_{pp}-U_p\Sigma_qV_p^T=
\diag\Big[A_{ii}-U_i\Sigma_pV_i^T\Big]_{i\in\text{Ch}(p)}
+\begin{bmatrix}\vdots \\ U_i \\ \vdots\end{bmatrix}
(\Sigma_p-W_p\Sigma_qZ_p^T)
\begin{bmatrix}\cdots & V_i^T & \cdots\end{bmatrix}.
\end{equation}
In fact, this equation also applies to $p=$ root, in which case one treats $\Sigma_q,W_p,Z_p=0$. Then, the Sherman--Morrison--Woodbury formula gives the inverse
\begin{equation}\label{eqn:iA}
(A_{pp}-U_p\Sigma_qV_p^T)^{-1}=
\diag\Big[(A_{ii}-U_i\Sigma_pV_i^T)^{-1}\Big]_{i\in\text{Ch}(p)}
+\begin{bmatrix}\vdots \\ \widetilde{U}_i \\ \vdots\end{bmatrix}
\widetilde{\Pi}_p
\begin{bmatrix}\cdots & \widetilde{V}_i^T & \cdots\end{bmatrix},
\end{equation}
where the tilded factors are related to the non-tilded factors through
\begin{equation}\label{eqn:UV}
\widetilde{U}_i=(A_{ii}-U_i\Sigma_pV_i^T)^{-1}U_i, \qquad
\widetilde{V}_i=(A_{ii}-U_i\Sigma_pV_i^T)^{-T}V_i,
\end{equation}
and
\begin{equation}\label{eqn:Pi}
\widetilde{\Pi}_p=
-(I+\widetilde{\Lambda}_p\widetilde{\Xi}_p)^{-1}\widetilde{\Lambda}_p
\quad\text{with}\quad
\widetilde{\Lambda}_p=\Sigma_p-W_p\Sigma_qZ_p^T
\quad\text{and}\quad
\widetilde{\Xi}_p=\sum_{i\in \text{Ch}(p)}V_i^T\widetilde{U}_i.
\end{equation}
Equation~\eqref{eqn:UV} immediately gives the $\widetilde{U}_i$ and $\widetilde{V}_i$ factors of $\widetilde{A}$ for all leaf nodes $i$. Further, right-multiplying $U_p$ to both sides of~\eqref{eqn:iA} and similarly left-multiplying $V_p^T$ to both sides, we obtain 
\[
\widetilde{W}_p=(I+\widetilde{\Pi}_p\widetilde{\Xi}_p)W_p \quad\text{and}\quad
\widetilde{Z}_p=(I+\widetilde{\Pi}_p^T\widetilde{\Xi}_p^T)Z_p,
\]
which give the $\widetilde{W}_p$ and $\widetilde{Z}_p$ factors of $\widetilde{A}$ for all nonleaf and nonroot nodes $p$.

Additionally, \eqref{eqn:iA} may be interpreted as relating the inverse of $A_{pp}-U_p\Sigma_rV_p^T$ at some parent level $p$, to that of $A_{ii}-U_i\Sigma_pV_i^T$ at the child level $i$ with a rank-$r$ correction. Then, let $i$ be a leaf node and $(i,i_1,i_2,\ldots,i_s,1)$ be the path connecting $i$ and the root $=1$. We expand the chain of corrections and obtain
\begin{equation}\label{eqn:Aii}
\widetilde{A}(I_i,I_i)=(A_{ii}-U_i\Sigma_{i_1}V_i^T)^{-1}
+\widetilde{U}_i\widetilde{\Pi}_{i_1}\widetilde{V}_i^T
+\widetilde{U}_i\widetilde{W}_{i_1}\widetilde{\Pi}_{i_2}\widetilde{Z}_{i_1}^T\widetilde{V}_i^T
+\cdots
+(\widetilde{U}_i\widetilde{W}_{i_1}\cdots\widetilde{W}_{i_s}\widetilde{\Pi}_{1}\widetilde{Z}_{i_s}^T\cdots\widetilde{Z}_{i_1}^T\widetilde{V}_i^T).
\end{equation}
Meanwhile, for any nonleaf node $p$, the factor $\widetilde{\Sigma}_p$ admits a similar chain of corrections:
\begin{equation}\label{eqn:Sigmap}
\widetilde{\Sigma}_p=\widetilde{\Pi}_p
+\widetilde{W}_p\widetilde{\Pi}_{p_1}\widetilde{Z}_p^T
+\widetilde{W}_p\widetilde{W}_{p_1}\widetilde{\Pi}_{p_2}\widetilde{Z}_{p_1}^T\widetilde{Z}_p^T
+\cdots
+(\widetilde{W}_p\widetilde{W}_{p_1}\cdots\widetilde{W}_{p_t}\widetilde{\Pi}_{1}\widetilde{Z}_{p_t}^T\cdots\widetilde{Z}_{p_1}^T\widetilde{Z}_p^T),
\end{equation}
where $(p,p_1,p_2,\ldots,p_t,1)$ is the path connecting $p$ and the root $=1$. The above two formulas give the $\widetilde{A}_{ii}$ and $\widetilde{\Sigma}_p$ factors of $\widetilde{A}$ for all leaf nodes $i$ and nonleaf nodes $p$.

Hence, the computation of $\widetilde{A}$ consists of two tree walks, one upward and the other downward. In the upward phase, $\widetilde{U}_i$, $\widetilde{V}_i$, $\widetilde{W}_p$, and $\widetilde{Z}_p$ are computed. This phase also computes $(A_{ii}-U_i\Sigma_{i_1}V_i^T)^{-1}$ and $\widetilde{\Pi}_p$ as the starting point of corrections. Then, in the downward phase, a chain of corrections as detailed by~\eqref{eqn:Aii} and~\eqref{eqn:Sigmap} is performed from parent to children, which eventually yields the correct $\widetilde{A}_{ii}$ and $\widetilde{\Sigma}_p$. The overall computation is summarized in Algorithm~\ref{algo:invA}. The algorithm also includes straightforward modifications for the case of symmetric $A$.

{%
\renewcommand{\baselinestretch}{1.1}
\begin{algorithm}[!ht]
\caption{Computing $\widetilde{A}=A^{-1}$}
\label{algo:invA}
\begin{algorithmic}[1]
\State \Call{Upward}{\texttt{root}}
\State \Call{Downward}{\texttt{root}}
\Statex

\Function{Upward}{$i$}
\If{$i$ is leaf}
\State\label{algo.ln:patch1} $\widetilde{A}_{ii}\gets (A_{ii}-U_i\Sigma_pV_i^T)^{-1}$
\Comment{$p$ is parent of $i$}
\Statex \Comment{\symln{if $A$ is symmetric, replace $V_i$ by $U_i$}}
\State $\widetilde{U}_i\gets \widetilde{A}_{ii}U_i$
\State $\widetilde{V}_i\gets \widetilde{A}_{ii}^TV_i$
\Comment{\symln{if $A$ is symmetric, no need for this step}}
\State $\widetilde{\Theta}_i\gets V_i^T\widetilde{U}_i$
\Comment{\symln{if $A$ is symmetric, replace $V_i$ by $U_i$}}
\State return
\EndIf
\ForAll{children $j$ of $i$}
\State \Call{Upward}{$j$}
\State $\widetilde{W}_j\gets (I+\widetilde{\Sigma}_j\widetilde{\Xi}_j)W_j$ \textbf{if} $j$ is not leaf
\State $\widetilde{Z}_j\gets (I+\widetilde{\Sigma}_j^T\widetilde{\Xi}_j^T)Z_j$ \textbf{if} $j$ is not leaf
\Comment{\symln{if $A$ is symmetric, no need for this step}}
\State $\widetilde{\Theta}_j\gets Z_j^T\widetilde{\Xi}_j\widetilde{W}_j$ \textbf{if} $j$ is not leaf
\Comment{\symln{if $A$ is symmetric, replace $Z_j$ by $W_j$}}
\EndFor
\State $\widetilde{\Xi}_i\gets\sum_{j\in \text{Ch}(i)}\widetilde{\Theta}_j$
\State \textbf{if} $i$ is not root \textbf{then}
$\widetilde{\Lambda}_i\gets \Sigma_i-W_i\Sigma_pZ_i^T$ \textbf{else} $\widetilde{\Lambda}_i\gets \Sigma_i$
\textbf{end if}
\Comment{$p$ is parent of $i$}
\Statex \Comment{\symln{if $A$ is symmetric, replace $Z_i$ by $W_i$}}
\State\label{algo.ln:patch2} $\widetilde{\Sigma}_i\gets -(I+\widetilde{\Lambda}_i\widetilde{\Xi}_i)^{-1}\widetilde{\Lambda}_i$
\ForAll{children $j$ of $i$}
\State $\widetilde{E}_j\gets\widetilde{W}_j\widetilde{\Sigma}_i\widetilde{Z}_j^T$ \textbf{if} $j$ is not leaf
\Comment{\symln{if $A$ is symmetric, replace $\widetilde{Z}_j$ by $\widetilde{W}_j$}}
\EndFor
\State $\widetilde{E}_i\gets0$ \textbf{if} $i$ is root
\EndFunction
\Statex
%\Statex Continued to the next page \ldots
%\algstore{algo.break}
%\end{algorithmic}
%\end{algorithm}
%
%\begin{algorithm}[!ht]
%\begin{algorithmic}[1]
%\algrestore{algo.break}
%\Statex \ldots Continued from the previous page
%\Statex
\Function{Downward}{$i$}
\If{$i$ is leaf}
\State $\widetilde{A}_{ii}\gets\widetilde{A}_{ii}+\widetilde{U}_i\widetilde{\Sigma}_p\widetilde{V}_i^T$ \textbf{if} $i$ is not root
\Comment{$p$ is parent of $i$}
\Statex \Comment{\symln{if $A$ is symmetric, replace $\widetilde{V}_i$ by $\widetilde{U}_i$}}
\Else
\State $\widetilde{E}_i\gets \widetilde{E}_i+\widetilde{W}_i\widetilde{E}_p\widetilde{Z}_i^T$ \textbf{if} $i$ is not root
\Comment{$p$ is parent of $i$}
\Statex \Comment{\symln{if $A$ is symmetric, replace $\widetilde{Z}_i$ by $\widetilde{W}_i$}}
\State $\widetilde{\Sigma}_i\gets\widetilde{\Sigma}_i+\widetilde{E}_i$
\State \textbf{for all} children $j$ of $i$ \textbf{do} \Call{Downward}{$j$} \textbf{end for}
\EndIf
\EndFunction
\end{algorithmic}
\end{algorithm}
}

%%%%%%%%%%%%%%%%%%%%%%%%%%%%%%%%%%%%%%%%%%%%%%%%%%%%%%%%%%%%%%%%%%%%%%%%%%%%%%%
\clearpage
\section{Algorithm for Determinant Computation}
The computation of the determinant $\delta=\det(A)$ is rather simple if done simultaneously with the inversion of $A$. The key idea is that one may apply Sylvester's determinant theorem on~\eqref{eqn:AA} to obtain
\begin{equation}\label{eqn:det}
\det(A_{pp}-U_p\Sigma_qV_p^T)=\det(I+\widetilde{\Lambda}_p\widetilde{\Xi}_p)
\prod_{i\in \text{Ch}(p)}\det(A_{ii}-U_i\Sigma_pV_i^T),
\end{equation}
where $\widetilde{\Lambda}_p$ and $\widetilde{\Xi}_p$ are given in~\eqref{eqn:Pi}. In fact, $I+\widetilde{\Lambda}_p\widetilde{\Xi}_p$ must have been factorized in order to compute $\widetilde{\Pi}_p$ in~\eqref{eqn:Pi}; hence its determinant is trivial to obtain. Then, the determinant of $A_{pp}-U_p\Sigma_qV_p^T$ at the parent $p$ is the product of those at the children  $i$, multiplied by $\det(I+\widetilde{\Lambda}_p\widetilde{\Xi}_p)$. A simple recursion suffices for obtaining the determinant at the root. The procedure is summarized as Algorithm~\ref{algo:detA}. It is organized as an upward tree walk.

Note that the determinant easily overflows or underflows in finite precision arithmetics. A common treatment is to compute the log-determinant instead, in which case the multiplications in~\eqref{eqn:det} becomes summation. However, the log-determinant may be complex if $\det(A)$ is negative. Hence, if one wants to avoid complex arithmetic, as we do in Algorithm~\ref{algo:detA}, one may use two quantities, the log-absolute-determinant $\log|\delta|$ and the sign $\sgn(\delta)$, to uniquely represent $\delta$.

\begin{algorithm}[ht]
\caption{Computing $\delta=\det(A)$}
\label{algo:detA}
\begin{algorithmic}[1]
\State Patch Algorithm~\ref{algo:invA}:
\Statex\hspace{.5cm} Line~\ref{algo.ln:patch1}: Store $\log|\delta_i|$ and $\sgn(\delta_i)$, where $\delta_i=\det(A_{ii}-U_i\Sigma_pV_i^T)$
\Statex\hspace{.5cm} Line~\ref{algo.ln:patch2}: Store $\log|\delta_i|$ and $\sgn(\delta_i)$, where $\delta_i=\det(I+\widetilde{\Lambda}_i\widetilde{\Xi}_i)$
\State \Call{Upward}{\texttt{root}}
\Statex

\Function{Upward}{$i$}
\State $\log|\delta|\gets\log|\delta_i|$;\,\, $\sgn(\delta)\gets\sgn(\delta_i)$
\If{$i$ is not leaf}
\ForAll{children $j$ of $i$}
\State \Call{Upward}{$j$}
\State $\log|\delta|\gets\log|\delta|+\log|\delta_j|$;\,\, $\sgn(\delta)\gets\sgn(\delta)\cdot\sgn(\delta_j)$
\EndFor
\EndIf
\State \Return $\log|\delta|$ and $\sgn(\delta)$
\EndFunction
\end{algorithmic}
\end{algorithm}

%%%%%%%%%%%%%%%%%%%%%%%%%%%%%%%%%%%%%%%%%%%%%%%%%%%%%%%%%%%%%%%%%%%%%%%%%%%%%%%
\clearpage
\section{Algorithm for Cholesky-like Factorization}
The objective is to compute a factorization $A=GG^T$ when $A$ is symmetric positive definite. This factorization is not Cholesky in the traditional sense, because $G$ is not triangular. Rather, we would like to compute a $G$ that has the same structure as $A$, so that we can reuse the matrix-vector multiplication developed in Section~\ref{sec:matvec} on $G$. We repeat the existence theorem of $G$ mentioned in the main paper:

\begin{theorem}
Let $A$ be recursively low-rank with a partitioning tree $T$ and a positive integer $r$. If $A$ is symmetric, by convention let $A$ be represented by the factors
\[
\{A_{ii}, U_i, U_i, \Sigma_p, W_q, W_q \mid i \text{ is leaf, } p \text{ is nonleaf, } q \text{ is neither leaf nor root}\}.
\]
Furthermore, if $A$ is positive definite and additionally, $A_{ii}-U_i\Sigma_pU_i^T$ is also positive definite for all pairs of nonroot node $i$ and parent $p$, then there exists a recursively low-rank matrix $G$ with the same partitioning tree $T$ and integer $r$, and with factors
\[
\{G_{ii}, U_i, V_i, \Omega_p, W_q, Z_q \mid i \text{ is leaf, } p \text{ is nonleaf, } q \text{ is neither leaf nor root}\},
\]
such that $A=GG^T$.
\end{theorem}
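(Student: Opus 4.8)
\section*{Proof plan}

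The plan is to argue constructively, mirroring the inversion construction of the preceding section so that the proof doubles as the factorization algorithm. Throughout, for a node $p$ with parent $q$ write $B_p = A_{pp}-U_p\Sigma_q U_p^T$, with the convention $B_1 = A$ at the root (formally $\Sigma_q = 0$, $W_1 = 0$). The hypotheses of the theorem state \emph{exactly} that $B_p\succ 0$ for every node $p$: positive definiteness of $A$ at the root and of $A_{ii}-U_i\Sigma_pU_i^T$ at every nonroot node. Let $\widehat{U}_p$ be the vertical stack of $U_i$ over $i\in\text{Ch}(p)$, so that the nested-basis identity $U_p = \widehat{U}_pW_p$ holds, and set $\Lambda_p = \Sigma_p - W_p\Sigma_q W_p^T$ (with $\Lambda_1 = \Sigma_1$). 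Specializing~\eqref{eqn:AA} to the symmetric case ($V = U$, $Z = W$) and reading it at the block $B_p$ gives the backbone recursion
\[
B_p = \diag\big[B_i\big]_{i\in\text{Ch}(p)} + \widehat{U}_p\Lambda_p\widehat{U}_p^T .
\]

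First I would factor bottom-up. At a leaf $i$, positive definiteness of $B_i$ yields a square invertible factor $B_i = F_iF_i^T$ (e.g.\ Cholesky). Inductively assume square invertible child factors $B_j = F_jF_j^T$, set $\mathcal{F}_p = \diag[F_j]_{j\in\text{Ch}(p)}$ and $\widehat{P}_p = \mathcal{F}_p^{-1}\widehat{U}_p$ (the block stack of $P_j = F_j^{-1}U_j$). The recursion then factors as $B_p = \mathcal{F}_p(I + \widehat{P}_p\Lambda_p\widehat{P}_p^T)\mathcal{F}_p^T$, so it suffices to factor the core $M_p = I + \widehat{P}_p\Lambda_p\widehat{P}_p^T$. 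Congruence by the invertible $\mathcal{F}_p$ shows $M_p\succ 0$, and $M_p$ equals the identity off $\range(\widehat{P}_p)$. A thin QR factorization $\widehat{P}_p = Q_pS_p$ with $Q_p^TQ_p = I$ reduces the task to the $r\times r$ symmetric positive definite matrix $I + S_p\Lambda_pS_p^T$, whose principal square root produces a factor of the form $N_p = I + \widehat{P}_p\Omega_p\widehat{P}_p^T$, an identity-plus-rank-$r$ update whose left range lies inside $\range(\widehat{P}_p)$. Setting $F_p = \mathcal{F}_pN_p$ gives $F_pF_p^T = B_p$, and $F_p$ is again square and invertible, so the induction closes.

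The point of forcing the correction of $N_p$ into $\range(\widehat{P}_p)$ is that $\mathcal{F}_p\widehat{P}_p = \widehat{U}_p$, so the off-diagonal block of $F_p$ between children $i\neq j$ is precisely $U_i\Omega_pV_j^T$ with $V_j = F_j^{-1}U_j = P_j$; the left factor is the original $U_i$ and the nesting $U_p = \widehat{U}_pW_p$ is inherited verbatim, so $G = F_1$ shares $U_i$ and $W_q$ with $A$ while acquiring new middle factors $\Omega_p$, new right factors $V_i = F_i^{-1}U_i$, and new right change-of-basis $Z_q$ (one checks $V_p(I_j,:) = V_jZ_p$ with $Z_p = (I+D_p\Theta_p)W_p$, where $\Theta_p = \widehat{P}_p^T\widehat{P}_p$ and $N_p^{-1} = I + \widehat{P}_pD_p\widehat{P}_p^T$, so the new right factors nest automatically). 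The delicate step, which I expect to be the main obstacle, is the bookkeeping of the diagonal blocks: the diagonal block of $F_p$ at child $i$ is $F_i + U_i\Omega_pV_i^T$, and since $U_i = \widehat{U}_iW_i$ and $V_i = \widehat{V}_iZ_i$ this rank-$r$ correction pushes into the subtree of $i$ and must be summed over all ancestors along the path to each leaf. Exactly as in the inversion proof (compare~\eqref{eqn:Aii} and~\eqref{eqn:Sigmap}), this is resolved by a downward sweep that telescopes the accumulated corrections into the final stored leaf blocks $G_{ii}$ and middle factors $\Omega_p$; verifying that these corrections telescope consistently—so that $G$ is a bona fide recursively low-rank matrix with the same tree $T$ and integer $r$—is where the real work lies, the invertibility of every $F_j$ needed throughout being guaranteed inductively by positive definiteness of each $B_j$.
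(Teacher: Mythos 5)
Your proposal is correct, and its skeleton coincides with the paper's own proof: the same backbone recursion $B_p=\diag\big[B_i\big]_{i\in\text{Ch}(p)}+\widehat{U}_p\Lambda_p\widehat{U}_p^T$ (the paper's~\eqref{eqn:AA2}), the same bottom-up induction producing factors of the form block-diagonal-plus-$\widehat{U}_p(\cdot)\widehat{V}_p^T$ with right factors $V_i=F_i^{-1}U_i$ (the paper's~\eqref{eqn:V}), the same inheritance of $U_i$ and $W_q$ from $A$, the same relation defining $Z_p$, and the same downward telescoping sweep that turns the local corrections into the stored $G_{ii}$ and $\Omega_p$ (the paper's~\eqref{eqn:Gii} and~\eqref{eqn:Omegap}). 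The genuine difference is the one step where positive definiteness is converted into existence of the $r\times r$ middle factor. The paper equates $B_{pp}=C_{pp}C_{pp}^T$ and arrives at the continuous-time algebraic Riccati equation $\Lambda_p=D_p^T+D_p+D_p\Xi_pD_p^T$ with $\Xi_p=\sum_i V_i^TV_i$ (its~\eqref{eqn:D}), whose symmetric solvability under the eigenvalue condition on $I+\Xi_p\Lambda_p$ is imported from the Riccati literature and computed by the Schur method. You instead factor out $\mathcal{F}_p$, pass to the core $M_p=I+\widehat{P}_p\Lambda_p\widehat{P}_p^T$, and build the factor explicitly by thin QR plus a principal square root: with $\widehat{P}_p=Q_pS_p$ and $R_p=I+S_p\Lambda_pS_p^T\succ0$, the symmetric matrix $N_p=I+Q_p\big(R_p^{1/2}-I\big)Q_p^T$ squares to $M_p$ and equals $I+\widehat{P}_p\Omega_p\widehat{P}_p^T$ with $\Omega_p=S_p^{-1}\big(R_p^{1/2}-I\big)S_p^{-T}$. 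Since $\Xi_p=S_p^TS_p$, your $\Omega_p$ is in fact an explicit symmetric solution of the paper's Riccati equation, so what your route buys is self-containedness: the existence step is proved by elementary linear algebra rather than by citation, at the price of an extra QR in the would-be algorithm (the paper's Riccati/Schur route is what its Algorithm for Cholesky-like factorization actually implements). One caveat you should flag: extracting $\Omega_p$ as written requires $S_p$ (equivalently the stacked $\widehat{U}_p$) to have full column rank $r$, which Definition~\ref{def:matrix} does not guarantee; in the rank-deficient case replace $S_p^{-1}$ by a pseudo-inverse and note that $R_p^{1/2}-I$ acts as zero on $\range(S_p)^{\perp}$, so $S_p\Omega_pS_p^T=R_p^{1/2}-I$ remains solvable. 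That is a one-line patch, not a flaw in the approach.
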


Note that in the theorem, $G$ and $A$ share factors $U_i$ and $W_q$. In other words, only the factors $G_{ii}$, $V_i$, $\Omega_p$, and $Z_q$ are to be determined. Similar to matrix inversion, we will prove this theorem through constructing these factors. Consider a pair of child node $p$ and parent $q$ and let $p$ have children such as $i$ and $j$. We repeat~\eqref{eqn:AA} for the symmetric case in the following
\begin{equation}\label{eqn:AA2}
\underbrace{A_{pp}-U_p\Sigma_qU_p^T}_{B_{pp}}=
\diag\Big[\underbrace{A_{ii}-U_i\Sigma_pU_i^T}_{B_{ii}}\Big]_{i\in\text{Ch}(p)}
+\begin{bmatrix}\vdots \\ U_i \\ \vdots\end{bmatrix}
(\underbrace{\Sigma_p-W_p\Sigma_rW_p^T}_{\Lambda_p})
\begin{bmatrix}\cdots & U_i^T & \cdots\end{bmatrix},
\end{equation}
and also write
\begin{equation}\label{eqn:GG2}
\underbrace{G_{pp}-U_p\Omega_qV_p^T}_{C_{pp}}=
\diag\Big[\underbrace{G_{ii}-U_i\Omega_pV_i^T}_{C_{ii}}\Big]_{i\in\text{Ch}(p)}
+\begin{bmatrix}\vdots \\ U_i \\ \vdots\end{bmatrix}
D_p
\begin{bmatrix}\cdots & V_i^T & \cdots\end{bmatrix}
\end{equation}
for some $D_p$. Suppose we have computed $B_{ii}=C_{ii}C_{ii}^T$ for all $i\in\text{Ch}(p)$, then equating $B_{pp}=C_{pp}C_{pp}^T$ we obtain
\begin{equation}\label{eqn:V}
C_{ii}V_i=U_i
\end{equation}
and
\begin{equation}\label{eqn:D}
\Lambda_p=D_p^T+D_p+D_p\Xi_pD_p^T
\quad\text{where}\quad
\Xi_p=\sum_{i\in \text{Ch}(p)}V_i^TV_i.
\end{equation}
When $i$ is a leaf node, we let $C_{ii}$ be the Cholesky factor of $B_{ii}=A_{ii}-U_i\Sigma_pU_i^T$. Then, \eqref{eqn:V} gives the factors $V_i$ of $G$ for all leaf nodes $i$: $V_i=C_{ii}^{-1}U_i$. Further, right-multiplying $V_p$ to both sides of~\eqref{eqn:GG2} and substituting~\eqref{eqn:V}, we have $W_p=(I+D_p\Xi_p)Z_p$, which gives the factors $Z_p$ of $G$ for all nonleaf and nonroot nodes $p$, provided that $D_p$ and $\Xi_p$ are known. The term $\Xi_p$ enjoys a simple recurrence relation that we omit here to avoid tediousness. On the other hand, the term $D_p$ is solved from~\eqref{eqn:D}. Equation~\eqref{eqn:D} is a continuous-time algebraic Riccati equation and it admits a symmetric solution $D_p$ when all the eigenvalues of $I+\Xi_p\Lambda_p$ are positive. It is not hard to see that the eigenvalues of $I+\Xi_p\Lambda_p$ are positive if and only if $B_{pp}$ is symmetric positive definite, which is satisfied based on the assumptions of the theorem. The solution $D_p$ may be computed by using the well-known Schur method~\citep{Laub1979,Arnold1984}.

Additionally, \eqref{eqn:GG2} may be interpreted as relating the Cholesky-like factor of $B_{pp}$ at some parent level $p$, to that of $B_{ii}$ at the child level $i$ with a rank-$r$ correction. Then, let $i$ be a leaf node and $(i,i_1,i_2,\ldots,i_s,1)$ be the path connecting $i$ and the root $=1$. We expand the chain of corrections and obtain
\begin{equation}\label{eqn:Gii}
G_{ii}=C_{ii}
+U_iD_{i_1}V_i^T
+U_iW_{i_1}D_{i_2}Z_{i_1}^TV_i^T
+\cdots
+(U_iW_{i_1}\cdots W_{i_s}D_{1}Z_{i_s}^T\cdots Z_{i_1}^TV_i^T).
\end{equation}
Meanwhile, for any nonleaf node $p$, the factor $\Omega_p$ admits a similar chain of corrections:
\begin{equation}\label{eqn:Omegap}
\Omega_p=D_p
+W_pD_{p_1}Z_p^T
+W_pW_{p_1}D_{p_2}Z_{p_1}^TZ_p^T
+\cdots
+(W_pW_{p_1}\cdots W_{p_t}D_{1}Z_{p_t}^T\cdots Z_{p_1}^TZ_p^T),
\end{equation}
where $(p,p_1,p_2,\ldots,p_t,1)$ is the path connecting $p$ and the root $=1$. The above two formulas give the $G_{ii}$ and $\Omega_p$ factors of $G$ for all leaf nodes $i$ and nonleaf nodes $p$.

Hence, the computation of $G$ consists of two tree walks, one upward and the other downward. In the upward phase, $V_i$ and $Z_p$ are computed. This phase also computes $C_{ii}$ and $D_p$ as the starting point of corrections. Then, in the downward phase, a chain of corrections as detailed by~\eqref{eqn:Gii} and~\eqref{eqn:Omegap} are performed from parent to children, which eventually yields the correct $G_{ii}$ and $\Omega_p$. The overall computation is summarized in Algorithm~\ref{algo:cholA}.

{%
\renewcommand{\baselinestretch}{1.1}
\begin{algorithm}[!ht]
\caption{Cholesky-like factorization $A=GG^T$ (for symmetric positive definite $A$)}
\label{algo:cholA}
\begin{algorithmic}[1]
\State Copy all factors $U_i$ and $W_i$ from $A$ to $G$
\State \Call{Upward}{\texttt{root}}
\State \Call{Downward}{\texttt{root}}
\Statex

\Function{Upward}{$i$}
\If{$i$ is leaf}
\State Factorize $G_{ii}G_{ii}^T\gets A_{ii}-U_i\Sigma_pU_i^T$;\quad
$V_i\gets G_{ii}^{-1}U_i$;\quad
$\Theta_i\gets V_i^TV_i$
\Comment{$p$ is parent of $i$}
\State return
\EndIf
\ForAll{children $j$ of $i$}
\State \Call{Upward}{$j$}
\State $Z_j\gets (I+\Omega_j\Xi_j)^{-1}W_j$ \textbf{if} $j$ is not leaf
\State $\Theta_j\gets Z_j^T\Xi_jZ_j$ \textbf{if} $j$ is not leaf
\EndFor
\State $\Xi_i\gets\sum_{j\in \text{Ch}(i)}\Theta_j$
\State \textbf{if} $i$ is not root \textbf{then}
$\Lambda_i\gets \Sigma_i-W_i\Sigma_pW_i^T$ \textbf{else} $\Lambda_i\gets \Sigma_i$
\textbf{end if}
\Comment{$p$ is parent of $i$}
\State Solve $\Lambda_i=\Omega_i^T+\Omega_i+\Omega_i\Xi_i \Omega_i^T$ for $\Omega_i$
\ForAll{children $j$ of $i$}
\State $E_j\gets W_j\Omega_iZ_j^T$ \textbf{if} $j$ is not leaf
\EndFor
\State $E_i\gets0$ \textbf{if} $i$ is root
\EndFunction
\Statex
\Function{Downward}{$i$}
\If{$i$ is leaf}
\State $G_{ii}\gets G_{ii}+U_i\Omega_pV_i^T$ \textbf{if} $i$ is not root
\Comment{$p$ is parent of $i$}
\Else
\State $E_i\gets E_i+W_iE_pZ_i^T$ \textbf{if} $i$ is not root
\Comment{$p$ is parent of $i$}
\State $\Omega_i\gets\Omega_i+E_i$
\State \textbf{for all} children $j$ of $i$ \textbf{do} \Call{Downward}{$j$} \textbf{end for}
\EndIf
\EndFunction
\end{algorithmic}
\end{algorithm}
}

%%%%%%%%%%%%%%%%%%%%%%%%%%%%%%%%%%%%%%%%%%%%%%%%%%%%%%%%%%%%%%%%%%%%%%%%%%%%%%%
\clearpage
\section{Algorithm for Constructing $K_{\rm{h}}$}
The computation is summarized in Algorithm~\ref{algo:constructing.A}. See Section~\ref{sec:out.of.sample} of the main paper.

\begin{algorithm}[ht]
\caption{Constructing $A=k_{\rm{h}}(X,X)$}
\label{algo:constructing.A}
\begin{algorithmic}[1]
\State Construct a partitioning tree and for every nonleaf node $i$, find landmark points $\ud{X}_i$
\State \Call{Downward}{\texttt{root}}
\Statex

\Function{Downward}{$i$}
\If{$i$ is leaf}
\State $A_{ii}\gets k(X_i,X_i)$;\quad $U_i\gets k(X_i,\ud{X}_p)k(\ud{X}_p,\ud{X}_p)^{-1}$
\Comment{$p$ is parent of $i$}
\State $V_i\gets\text{empty matrix}$
\State return
\EndIf
\State $\Sigma_i\gets k(\ud{X}_i,\ud{X}_i)$;
\State $W_i\gets k(\ud{X}_i,\ud{X}_p)k(\ud{X}_p,\ud{X}_p)^{-1}$ \textbf{if} $i$ is not root
\Comment{$p$ is parent of $i$}
\State $Z_i\gets\text{empty matrix}$
\State \textbf{for all} children $j$ of $i$ \textbf{do} \Call{Downward}{$j$} \textbf{end for}
\EndFunction
\end{algorithmic}
\end{algorithm}

%%%%%%%%%%%%%%%%%%%%%%%%%%%%%%%%%%%%%%%%%%%%%%%%%%%%%%%%%%%%%%%%%%%%%%%%%%%%%%%
\clearpage
\section{Algorithm for Computing $\bm{w}^T\bm{v}$ with $\bm{v}=k_{\rm{h}}(X,\bm{x})$}
To begin with, note that $\bm{x}$ must lie in one of the subdomains $S_j$ for some leaf node $j$. We will abuse language and say that ``$\bm{x}$ lies in the leaf node $j$'' for simplicity. In such a case, the subvector $\bm{v}_j=k(X_j,\bm{x})$ and for any leaf node $l\ne j$, the subvector
\[
\bm{v}_l=U_lW_{l_1}W_{l_2}\cdots W_{l_s}\Sigma_pW_{j_t}^T\cdots W_{j_2}^TW_{j_1}^Tk(\ud{X}_{j_1},\ud{X}_{j_1})^{-1}k(\ud{X}_{j_1},\bm{x}),
\]
where $p$ is the least common ancestor of $j$ and $l$, $(l,l_1,l_2,\ldots,l_s,p)$ is the path connecting $l$ and $p$, and $(j,j_1,j_2,\ldots,j_t,p)$ is the path connecting $j$ and $p$. Then, the inner product
\[
\bm{w}^T\bm{v}=\bm{w}_j^Tk(X_j,\bm{x})+\sum_{l\ne j,\,\, l \text{ is leaf}}
\bm{w}_l^TU_lW_{l_1}W_{l_2}\cdots W_{l_s}\Sigma_pW_{j_t}^T\cdots W_{j_2}^TW_{j_1}^Tk(\ud{X}_{j_1},\ud{X}_{j_1})^{-1}k(\ud{X}_{j_1},\bm{x}).
\]

Similar to matrix-vector multiplications, we may define a few sets of auxiliary vectors to avoid duplicate computations. Specifically, define $\bm{x}$-independent vectors
\[
\bm{e}_i=\begin{dcases}
U_i^T\bm{w}_i, & \text{if $i$ is leaf},\\
W_i^T\sum_{j\in \text{Ch}(i)}\bm{e}_j, & \text{otherwise},\\
\end{dcases}
\]
and
\[
\bm{c}_l=\Sigma_p^T\bm{e}_i\quad\text{for $i$ and $l$ being siblings with parent $p$},
\]
and $\bm{x}$-dependent vectors
\[
\bm{d}_p=W_p^T\bm{d}_i\quad\text{for $p$ being the parent of $i$}; \qquad
\bm{d}_j=k(\ud{X}_{j_1},\ud{X}_{j_1})^{-1}k(\ud{X}_{j_1},\bm{x})\quad\text{for $\bm{x}$ lying in $j$}.
\]
Then, the inner product is simplified as
\[
\bm{w}^T\bm{v}=\bm{w}_j^Tk(X_j,\bm{x})+\sum_{j_t\,\in\,\text{path connecting } j \text{ and root}}\bm{c}_{j_t}^T\bm{d}_{j_t}.
\]

Hence, the computation of $\bm{w}^T\bm{v}$ consists of a full tree walk and a partial one, both upward. The first upward phase computes $\bm{e}_i$ from children to parent and simultaneously $\bm{c}_l$ by crossing sibling nodes from $i$ to $l$. This computation is independent of $\bm{x}$ and hence is considered preprocessing. The second upward phase computes $\bm{d}_{j_t}$ for all $j_t$ along the path connecting $j$ and the root. This phase visits only one path but not the whole tree, which is the reason why it costs less than $O(n)$. We summarize the detailed procedure in Algorithm~\ref{algo:inprod}.

\begin{algorithm}[ht]
\caption{Computing $z=\bm{w}^T\bm{v}$, where $\bm{v}=k_{\rm{h}}(X,\bm{x})$, for $\bm{x}\notin X$}
\label{algo:inprod}
\begin{algorithmic}[1]
\State \Call{Common-Upward}{\texttt{root}}
\Statex $\triangleright$ The above step is independent of $\bm{x}$ and is treated as preprocessing. In computer implementation, the intermediate results $\bm{c}_i$ are carried over to the next step \Call{Second-Upward}{}, whereas the contents in $\bm{d}_i$ are discarded and the allocated memory is reused.
\State \Call{Second-Upward}{\texttt{root}}
\Statex
\Function{Common-Upward}{$i$}
\If{$i$ is leaf}
\State $\bm{d}_i\gets U_i^T\bm{w}_i$
\Else
\State \textbf{for all} children $j$ of $i$ \textbf{do} \Call{Common-Upward}{$j$} \textbf{end for}
\State $\bm{d}_i\gets W_i^T\left(\sum_{j\in \text{Ch}(i)}\bm{d}_j\right)$ \textbf{if} $i$ is not root
\EndIf
\If{$i$ is not root}
\State \textbf{for all} siblings $l$ of $i$ \textbf{do} $\bm{c}_l\gets\Sigma_p^T \bm{d}_i$ \textbf{end for}
\Comment{$p$ is parent of $i$}
\EndIf
\EndFunction
\Statex
\Function{Second-Upward}{$i$}
\If{$i$ is leaf}
\State $\bm{d}_i\gets k(\ud{X}_p,\ud{X}_p)^{-1}k(\ud{X}_p,\bm{x})$
\Comment{$p$ is parent of $i$}
\State $z\gets \bm{w}_i^Tk(X_i,\bm{x})$
\Else
\State Find the child $j$ (among all children of $i$) where $\bm{x}$ lies in
\State \Call{Second-Upward}{$j$}
\State $\bm{d}_i\gets W_i^T\bm{d}_j$ \textbf{if} $i$ is not root
\EndIf
\State $z\gets z+\bm{c}_i^T\bm{d}_i$ \textbf{if} $i$ is not root
\EndFunction
\end{algorithmic}
\end{algorithm}

%%%%%%%%%%%%%%%%%%%%%%%%%%%%%%%%%%%%%%%%%%%%%%%%%%%%%%%%%%%%%%%%%%%%%%%%%%%%%%%
\clearpage
\section{Algorithm for Computing $\bm{v}^T\widetilde{A}\bm{v}$ with $\bm{v}=k_{\rm{h}}(X,\bm{x})$ for Symmetric $\widetilde{A}$}
We consider the general case where $\widetilde{A}$ is not necessarily related to the covariance function $k_{\rm{h}}$; what is assumed is only symmetry. We recall that $\widetilde{A}$ is represented by the factors
\[
\{\widetilde{A}_{ii}, \widetilde{U}_i, \widetilde{U}_i, \widetilde{\Sigma}_p, \widetilde{W}_q, \widetilde{W}_q \mid i \text{ is leaf, } p \text{ is nonleaf, } q \text{ is neither leaf nor root}\}.
\]

The derivation of the algorithm is more involved than that of the previous ones; hence, we need to introduce further notations. Let $\text{p}(i)$ denote the parent of a node $i$ and similarly $\text{p}(i,j)$ denote the common parent of $i$ and $j$. Let $(l,l_1,l_2,\ldots,l_t,p)$ be a path connecting nodes $l$ and $p$, where $l$ is a descendant of $p$. Denote this path as $\text{path}(l,p)$ for short. We will use subscripts $l\to p$ and $p\gets l$ to simplify the notation of the product chain of the $W$ factors:
\[
W_{l\to p}\equiv W_{l_1}W_{l_2}\cdots W_{l_t}
\quad\text{and}\quad
W_{p\gets l}^T\equiv W_{l_t}^T\cdots W_{l_2}^TW_{l_1}^T.
\]
Note that the two ends of the path (i.e., $l$ and $p$) are not included in the product chain. If $l$ is a leaf and $p$ is the root, then every node $i\in\text{path}(l,p)$, except the root, has the parent also in the path, but its siblings are not. We collect all these sibling nodes to form a set $\text{B}(l)$. It is not hard to see that $\text{B}(l)\cup\{l\}$ is a disjoint partitioning of whole index set. Moreover, any two nodes from the set $\text{B}(l)\cup\{l\}$ must have a least common ancestor belonging to $\text{path}(l,\text{root})$; and this ancestor is the parent of (at least) one of the two nodes. If $\bm{x}$ lies in a leaf node $l$, $i$ is some node $\in\text{path}(l,\text{root})$, and $j\in \text{B}(l)$ is a sibling of $i$, then by reusing the $\bm{d}$ vectors defined in the preceding subsection, we have
\begin{equation}\label{eqn:vlj}
\bm{v}_l=k(X_l,\bm{x}) \quad\text{and}\quad
\bm{v}_j=U_j\Sigma_{\text{p}(j)}W_{\text{p}(j)\gets l}^Tk(\ud{X}_{\text{p}(l)},\ud{X}_{\text{p}(l)})^{-1}k(\ud{X}_{\text{p}(l)},\bm{x})
=U_j\Sigma_{\text{p}(j,i)}\bm{d}_i.
\end{equation}

Because $\text{B}(l)\cup\{l\}$ forms a disjoint partitioning of whole index set, the quadratic form $\bm{v}^T\widetilde{A}\bm{v}$ consists of three parts:
\[
\bm{v}^T\widetilde{A}\bm{v}
=\bm{v}_l^T\widetilde{A}_{ll}\bm{v}_l
+\sum_{i\in \text{B}(l)} \bm{v}_i^T\widetilde{A}_{ii}\bm{v}_i
+\sum_{\substack{i,j\in \text{B}(l)\\i\ne j}} \bm{v}_i^T\widetilde{A}_{ij}\bm{v}_j.
\]
The first part involving the leaf node $l$ is straightforward. For the second part, we expand $\bm{v}_i$ by using~\eqref{eqn:vlj} and define two quantities therein:
\[
\bm{v}_i^T\widetilde{A}_{ii}\bm{v}_i
=\Big( \bm{d}_t^T\underbrace{\Sigma_{\text{p}(i,t)}^T\overbrace{U_i^T \Big) \widetilde{A}_{ii} \Big( U_i}^{\Xi_i}\Sigma_{\text{p}(i,t)}}_{\widetilde{\Xi}_i}\bm{d}_t \Big),
\]
where $t$ as a sibling of $i$ belongs to $\text{path}(l,\text{root})$. For the third part, we similarly expand each individual term and define additionally two quantities:
\[
\bm{v}_i^T\widetilde{A}_{ij}\bm{v}_j
=\Big( \bm{d}_s^T
\underbrace{\Sigma_{\text{p}(i,s)}^T\overbrace{U_i^T \Big) \Big( \widetilde{U}_i}^{\Theta_i^T}}_{\widetilde{\Theta}_i^T}
\widetilde{W}_{i\rightarrow q}\widetilde{\Sigma}_q\widetilde{W}_{q\leftarrow j}^T
\underbrace{\overbrace{\widetilde{U}_j^T \Big) \Big( U_j}^{\Theta_j}\Sigma_{\text{p}(j,t)}}_{\widetilde{\Theta}_j}
\bm{d}_t \Big),
\]
where $s$ as a sibling of $i$ belongs to $\text{path}(l,\text{root})$, $t$ as a sibling of $j$ belongs to the same path, and $q$ is the least common ancestor of $i$ and $j$. The four newly introduced quantities $\Xi_i$, $\widetilde{\Xi}_i$, $\Theta_i$, and $\widetilde{\Theta}_i$ are independent of $\bm{x}$ and may be computed in preprocessing, in a recursive manner from children to parent. We omit the simple recurrence relation here to avoid tediousness. Then, the quadratic form $\bm{v}^T\widetilde{A}\bm{v}$ admits the following expression:
\[
\bm{v}^T\widetilde{A}\bm{v}
=\bm{v}_l^T\widetilde{A}_{ll}\bm{v}_l
+\sum_{i\in \text{B}(l)} \bm{d}_t^T\widetilde{\Xi}_i\bm{d}_t
+\sum_{\substack{i,j\in \text{B}(l)\\i\ne j}} \bm{d}_s^T\widetilde{\Theta}_i^T\widetilde{W}_{i\rightarrow q}\widetilde{\Sigma}_q\widetilde{W}_{q\leftarrow j}^T\widetilde{\Theta}_j\bm{d}_t.
\]

We may further simplify the summation in the last term of this equation to avoid duplicate computation. As mentioned, any two nodes in $\text{B}(l)$ have a least common ancestor that happens to be the parent of one of them. Assume that this node is $i$. Then, we write
\[
\sum_{\substack{i,j\in \text{B}(l)\\i\ne j}} \bm{d}_s^T\widetilde{\Theta}_i^T\widetilde{W}_{i\rightarrow q}\widetilde{\Sigma}_q\widetilde{W}_{q\leftarrow j}^T\widetilde{\Theta}_j\bm{d}_t
=2\sum_{i\in \text{B}(l)} \bm{d}_s^T\widetilde{\Theta}_i^T\widetilde{\Sigma}_{\text{p}(i)}\sum_{\substack{j\in\text{B}(l),\,\,j\ne i \\ j \text{ is descendant of p}(i)}}\widetilde{W}_{\text{p}(i)\leftarrow j}^T\widetilde{\Theta}_j\bm{d}_t.
\]
Note the inner summation on the right-hand side of this equality. This quantity iteratively accumulates as $i$ moves up the tree. Therefore, we define
\[
\bm{c}_i=\begin{dcases}
\widetilde{\Theta}_i\bm{d}_s, & \text{if $i\in \text{B}(l)$},\\
\widetilde{W}_i^T\sum_{j\in \text{Ch}(i)}\bm{c}_j, & \text{if $i\in \text{path}(l,\text{root})$},
\end{dcases}
\]
where recall that $s$ as a sibling of $i$ belongs to $\text{path}(l,\text{root})$. Then, the inner summation becomes $\bm{c}_s$. In other words,
\[
\sum_{\substack{i,j\in \text{B}(l)\\i\ne j}} \bm{d}_s^T\widetilde{\Theta}_i^T\widetilde{W}_{i\rightarrow q}\widetilde{\Sigma}_q\widetilde{W}_{q\leftarrow j}^T\widetilde{\Theta}_j\bm{d}_t
=2\sum_{i\in \text{B}(l)}\bm{c}_i^T\widetilde{\Sigma}_{\text{p}(i,s)}\bm{c}_s.
\]

To summarize, the computation of $\bm{v}^T\widetilde{A}\bm{v}$ consists of a full tree walk and a partial one, both upward. The first upward phase computes $\Xi_i$, $\widetilde{\Xi}_i$, $\Theta_i$, and $\widetilde{\Theta}_i$ recursively from children to parent. This computation is independent of $\bm{x}$ and hence is considered preprocessing. The second upward phase computes $\bm{d}_s$ and $\bm{c}_s$ for all $s$ along the path connecting $l$ and the root (assuming $\bm{x}\in S_l$), as well as all $\bm{c}_i$ for $i$ being sibling nodes of $s$. This phase visits only one path but not the whole tree, which is the reason why it costs less than $O(n)$. The detailed procedure is given in Algorithm~\ref{algo:quadratic.form}.

{%
\renewcommand{\baselinestretch}{1.1}
\begin{algorithm}[!ht]
\caption{Computing $z=\bm{v}^T\widetilde{A}\bm{v}$, where $\widetilde{A}$ is symmetric and $\bm{v}=k_{\rm{h}}(X,\bm{x})$, for $\bm{x}\notin X$}
\label{algo:quadratic.form}
\begin{algorithmic}[1]
\State \Call{Common-Upward}{\texttt{root}}
\Statex $\triangleright$ The above step is independent of $\bm{x}$ and is treated as preprocessing.
\State \Call{Second-Upward}{\texttt{root}}
\Statex
\Function{Common-Upward}{$i$}
\If{$i$ is leaf}
\State $\Theta_i\gets\widetilde{U}_i^TU_i$;\quad
$\widetilde{\Theta}_i\gets\Theta_i\Sigma_p$
\Comment{$p$ is parent of $i$}
\State $\Xi_i\gets U_i^T\widetilde{A}_iU_i$;\quad
$\widetilde{\Xi}_i\gets\Sigma_p^T\Xi_i\Sigma_p$
\Comment{$p$ is parent of $i$}
\State return
\EndIf
\State \textbf{for all} children $j$ of $i$ \textbf{do} \Call{Common-Upward}{$j$} \textbf{end for}
\If{$i$ is not root}
\State $\Theta_i\gets\widetilde{W}_i^T\left(\sum_{j\in \text{Ch}(i)}\Theta_j\right)W_i$;\quad
$\widetilde{\Theta}_i\gets\Theta_i\Sigma_p$
\Comment{$p$ is parent of $i$}
\State $\Xi_i\gets W_i^T\left(\sum_{j\in \text{Ch}(i)}\Xi_j+\sum_{\substack{j,k\in \text{Ch}(i)\\j\ne k}}\Theta_j^T\widetilde{\Sigma}_i\Theta_k\right)W_i$;\quad
$\widetilde{\Xi}_i\gets\Sigma_p^T\Xi_i\Sigma_p$
\Comment{$p$ is parent of $i$}
\EndIf
\EndFunction
\Statex
\Function{Second-Upward}{$i$}
\If{$i$ is leaf}
\State $\bm{d}_i\gets k(\ud{X}_p,\ud{X}_p)^{-1}k(\ud{X}_p,\bm{x})$
\Comment{$p$ is parent of $i$}
\State $\bm{c}_i\gets\widetilde{U}_i^Tk(X_i,\bm{x})$
\State $z\gets k(\bm{x},X_i)\widetilde{A}_ik(X_i,\bm{x})$
\Else
\State Find the child $j$ (among all children of $i$) where $\bm{x}$ lies in
\State \Call{Second-Upward}{$j$}
\State $\bm{d}_i\gets W_i^T\bm{d}_j$ \textbf{if} $i$ is not root
\EndIf
\If{$i$ is not root}
\ForAll{siblings $l$ of $i$}
\State $\bm{c}_l\gets\widetilde{\Theta}_l\bm{d}_i$
\State $z\gets z+\bm{d}_i^T\widetilde{\Xi}_l\bm{d}_i+2\bm{c}_l^T\widetilde{\Sigma}_p\bm{c}_i$
\Comment{$p$ is parent of $i$}
\EndFor
\State $\bm{c}_p\gets \widetilde{W}_p^T\left(\sum_{j\in \text{Ch}(p)}\bm{c}_j\right)$ \textbf{if} $p$ is not root
\Comment{$p$ is parent of $i$}
\EndIf
\EndFunction
\end{algorithmic}
\end{algorithm}
}

%%%%%%%%%%%%%%%%%%%%%%%%%%%%%%%%%%%%%%%%%%%%%%%%%%%%%%%%%%%%%%%%%%%%%%%%%%%%%%%
\clearpage
\section{Cost Analysis}
The storage cost has been analyzed in the main paper. In what follows is the analysis of arithmetic costs.

\subsection{Arithmetic Cost of Matrix-Vector Multiplication (Algorithm~\ref{algo:Ab})}
The algorithm consists of two tree walks, each of which visits all the $O(n/r)$ nodes. Inside each tree node, the computation is dominated by $O(1)$ matrix-vector multiplications with $r\times r$ matrices; hence the per-node cost is $O(r^2)$. Then, the overall cost is $O(n/r\times r^2)=O(nr)$.

%%%%%%%%%%%%%%%%%%%%%%%%%%%%%%%%%%%%%%%%%%%%%%%%%%%%%%%%%%%%%%%%%%%%%%%%%%%%%%%
\subsection{Arithmetic Cost of Matrix Inversion (Algorithm~\ref{algo:invA})}
The algorithm consists of two tree walks, each of which visits all the $O(n/r)$ nodes. Inside each tree node, the computation is dominated by $O(1)$ matrix operations (matrix-matrix multiplications and inversions) with $r\times r$ matrices; hence the per-node cost is $O(r^3)$. Then, the overall cost is $O(n/r\times r^3)=O(nr^2)$.

%%%%%%%%%%%%%%%%%%%%%%%%%%%%%%%%%%%%%%%%%%%%%%%%%%%%%%%%%%%%%%%%%%%%%%%%%%%%%%%
\subsection{Arithmetic Cost of Determinant Computation (Algorithm~\ref{algo:detA})}
The algorithm requires patching Algorithm~\ref{algo:invA} with additional computations that do not affect the $O(nr^2)$ cost of Algorithm~\ref{algo:invA}. Omitting the patching, Algorithm~\ref{algo:detA} visits every tree node once and the computation per node is $O(1)$. Hence, the cost of this algorithm is only $O(n/r)$.

In practice, we indeed implement the patching inside Algorithm~\ref{algo:invA}.

%%%%%%%%%%%%%%%%%%%%%%%%%%%%%%%%%%%%%%%%%%%%%%%%%%%%%%%%%%%%%%%%%%%%%%%%%%%%%%%
\subsection{Arithmetic Cost of Cholesky-like Factorization (Algorithm~\ref{algo:cholA})}
The cost analysis of this algorithm is almost the same as that of Algorithm~\ref{algo:invA}, except that the dominating per-node computation also includes Cholesky factorization of $r\times r$ matrices and the solving of continuous-time algebraic Riccati equation of size $r\times r$. Both costs are $O(r^3)$, the same as that of matrix-matrix multiplications and inversions. Hence, the overall cost of this algorithm is $O(nr^2)$.

%%%%%%%%%%%%%%%%%%%%%%%%%%%%%%%%%%%%%%%%%%%%%%%%%%%%%%%%%%%%%%%%%%%%%%%%%%%%%%%
\subsection{Arithmetic Cost of Constructing $K_{\rm{h}}$ (Algorithm~\ref{algo:constructing.A})}
The algorithm consists of three parts: (i) hierarchical partitioning of the domain; (ii) finding landmark points; and (iii) instantiating the factors of a symmetric recursively low-rank matrix.

For part (i), much flexibility exists. In practice, partitioning is data driven, which ensures that the number of points is balanced in all leaf nodes. If we assume that the cost of partitioning a set of $n$ points is $O(n)$, then the overall partitioning cost counting recursion is $O(n\log n)$.

Similarly, part (ii) depends on the specific method used for choosing the landmark points. In general, we may assume that choosing $r$ landmark points costs $O(r)$. Then, because each of the $O(n/r)$ nonleaf nodes has a set of landmark points, the cost is $O(n/r\times r)=O(n)$.

Part (iii) is a tree walk that visits each of the $O(n/r)$ nodes once. The per-node computation is dominated by constructing one or a few $r\times r$ covariance matrices and performing matrix-matrix multiplications and inversions. We assume that constructing an $r\times r$ covariance matrix costs $O(r^2)$, which is less expensive than the $O(r^3)$ cost of matrix-matrix multiplications and inversions. Then, the overall cost for instantiating the overall matrix is $O(n/r\times r^3)=O(nr^2)$.

%%%%%%%%%%%%%%%%%%%%%%%%%%%%%%%%%%%%%%%%%%%%%%%%%%%%%%%%%%%%%%%%%%%%%%%%%%%%%%%
\subsection{Arithmetic Cost of Computing $\bm{w}^T\bm{v}$ (Algorithm~\ref{algo:inprod})}
The algorithm consists of two tree walks (one full and one partial): the first one is $\bm{x}$-independent preprocessing and the second one is $\bm{x}$-dependent.

For preprocessing, the tree walk visits all the $O(n/r)$ nodes. Inside each tree node, the computation is dominated by $O(1)$ matrix-vector multiplications with $r\times r$ matrices; hence the per-node cost is $O(r^2)$. Then, the overall preprocessing cost is $O(n/r\times r^2)=O(nr)$.

For the $\bm{x}$-dependent computation, only $O(h)=O(\log_2(n/r))$ tree nodes are visited. Inside each visited node, the computation is dominated by $O(1)$ matrix-vector multiplications with $r\times r$ matrices; hence the per-node cost is $O(r^2)$. Here, we assume that finding the child node where $\bm{x}$ lies in has $O(1)$ cost. Note also that although the computation of the $\bm{d}$ vectors requires a matrix inverse, the matrix in fact has been prefactorized when constructing $K_{\rm{h}}$ (that is, inside Algorithm~\ref{algo:constructing.A}). Hence, the per-node cost is not $O(r^3)$. To conclude, the $\bm{x}$-dependent cost is $O(r^2\log_2(n/r))$.

%%%%%%%%%%%%%%%%%%%%%%%%%%%%%%%%%%%%%%%%%%%%%%%%%%%%%%%%%%%%%%%%%%%%%%%%%%%%%%%
\subsection{Arithmetic Cost of Computing $\bm{v}^T\widetilde{A}\bm{v}$ (Algorithm~\ref{algo:quadratic.form})}
The cost analysis of this algorithm is almost the same as that of Algorithm~\ref{algo:inprod}, except that in the preprocessing phase, the dominant per-node computation is $O(1)$ matrix-matrix multiplications with $r\times r$ matrices. Hence, the preprocessing cost is $O(n/r\times r^3)=O(nr^2)$ whereas the $\bm{x}$-dependent cost is still $O(r^2\log_2(n/r))$.

\end{document}